\def\pmb#1{\setbox0=\hbox{$#1$}%
  \kern-.025em\copy0\kern-\wd0
  \kern.05em\copy0\kern-\wd0
  \kern-.025em\raise.0433em\box0}
\def\pmbs#1{\setbox0=\hbox{$\scriptstyle #1$}%
  \kern-.0175em\copy0\kern-\wd0
  \kern.035em\copy0\kern-\wd0
  \kern-.0175em\raise.0303em\box0}
\def\be{\begin{equation}}
\def\ee{\end{equation}}
\def\bea{\begin{eqnarray}}
\def\eea{\end{eqnarray}}
\def\vec#1{\mbox{\boldmath$#1$}}
\def\Om{\Omega}
\def\Udot{\dot{U}}
\def\bOm{\mbox{\boldmath $\Omega$}}
\def\bna{\mbox{\boldmath $\nabla$}}
\def\vece{\vec{e}}
\def\ptl{\partial}
\def\parb{\pmb{\partial}}
\def\la{\langle}
\def\ra{\rangle}
\def\hsp5{\hspace{5mm}}
\def\case#1/#2{\textstyle\frac{#1}{#2}}
\theoremstyle{plain}
\newtheorem{theorem}{Theorem}[section]
\newtheorem{corollary}[theorem]{Corollary}
\newtheorem{lemma}[theorem]{Lemma}
\newtheorem{proposition}{Proposition}[section]
\newtheorem*{conjecture}{Conjecture}
\theoremstyle{remark}
\newcommand{\textfrac}[2]{{\textstyle \frac{#1}{#2}}}
\newcommand{\enl}{\\\hfill\rule{0pt}{0pt}}
\title{\sc Perfect fluids and generic spacelike singularities}
\author{\sc
 Patrik Sandin $^{1}$\thanks{Electronic address: {\tt patrik.sandin@kau.se}}\
,\ \ and Claes Uggla$^{1}$\thanks{Electronic address:
{\tt claes.uggla@kau.se}}\\
$^{1}${\small\em Department of Physics, University of Karlstad,}\\
{\small\em S-651 88 Karlstad, Sweden}}
\begin{document}
\maketitle

\begin{abstract}

We present the 1+3 Hubble-normalized conformal orthonormal
frame approach to Einstein field equations, and specialize it
to a source that consists of perfect fluids with general
barotropic equations of state. We use this framework to give
specific mathematical content to conjectures about generic
spacelike singularities that were originally introduced by
Belinskii, Khalatnikov, and Lifshitz. Assuming that the
conjectures hold, we derive results about how the properties of
fluids and generic spacelike singularities affect each other.

\end{abstract}

\centerline{\bigskip\noindent PACS number(s): 04.20.-q, 98.80.Hw,
98.80.Dr, 04.20.Jb} \vfill
\newpage

\section{Introduction}\label{intro}

Although the singularity theorems say little about the nature
of singularities, the very definition of a singularity implies
that there exists a variable scale---the affine parameter
distance from/to the singularity of a causal inextendible
geodesic that is used to define it, furthermore, the one
dynamical input that goes into the theorems, the Raychaudhuri
equation for the expansion $\theta$,\footnote{In the case of
timelike geodesics; in the null geodesic case an analogous
equation plays a similar role.} also implies a variable scale
given by the expansion itself, since $\theta$ has unit
(time)$^{-1}$ (or, equivalently, (length)$^{-1}$, since we set
the speed of light $c$ to one). In this paper we study detailed
asymptotic dynamical aspects of generic singularities, and this
brings the expansion and the coupling of the Raychaudhuri
equation to the remaining Einstein's equations into focus. We
will locate the singularity in the past and we therefore refer
to it as a `cosmological' singularity. Since we study
asymptotic temporal developments, we consider timelike
reference congruences for which $\theta>0$ in the vicinity of
the singularity, where $\theta \rightarrow + \infty$
asymptotically, i.e., we are interested in `crushing'
singularities. Furthermore, due to the `cosmological' context
we will replace $\theta$ with the Hubble variable $H$ which is
defined as $H=\frac{1}{3}\,\theta$ (note that it is common in
FRW cosmology to refer to $H^{-1}$ as a characteristic time
scale, also known as the Hubble radius when referred to as a
length scale).

The asymptotic blow up of $H$ suggests that we should
asymptotically `factor out' $H$, and thereby the associated
variable scale, toward the singularity, preferably so that the
two following desirable features are incorporated into the
formalism:
\begin{itemize}
\item[(i)] Preservation of causal structure, since it is
    reasonable to believe that there is a close connection
    between causal structure and the nature of
    singularities.
\item[(ii)] Adaption to scale-invariance, since there are
    many known as well as conjectured links between
    scale-invariant, i.e., self-similar, solutions and
    asymptotic properties of many types of singularities.
\end{itemize}
The natural way to accomplish this is by means of a conformal
transformation (satisfies (i)) with a conformal factor that
involves $H$ (factoring out of $H$) so that the key variables
are (conformally) scale-invariant, i.e., dimensionless, and
thus adapted to the properties of self-similar solutions, since
such solutions are scale-invariant (satisfies (ii)). Hence we
use a {\em conformally Hubble-normalized scale-invariant
formulation\/} based on the conformal transformation
\be {\bf G} = H^{2}{\bf g} \qquad \Leftrightarrow \qquad {\bf
g} = H^{-2}{\bf G}, \ee
where we assume that $H>0$ in the vicinity of the singularity;
${\bf g}$ is the physical metric, which like $H^{-2}$ naturally
carries dimension (length)$^2$, and hence it follows that the
unphysical metric ${\bf G}$ is dimensionless. Because of this,
scalars constructed from ${\bf G}$ take constant finite values
for self-similar models that admit spacetime transitive
homothetic symmetry groups. This leads to a major advantage:
Asymptotically bounded variables for a system of coupled
regularized field equations.

We also find it advantageous to express the field equations as
a system of first order partial differential equations. A
natural way to do this within the conformally Hubble-normalized
scale-invariant context is to use the Hubble-normalized
Conformal OrthoNormal Frame approach (subsequently shortened to
the acronym CONF). In this approach one chooses a frame field
that is orthonormal to the dimensionless metric ${\bf G}$, and
not to the physical metric ${\bf g}$, i.e., we introduce
Hubble-normalized conformal orthonormal vector fields $\parb_a$
that are dual to $\bOm^a$, i.e.,
$\langle\,\bOm^a,\,\parb_b\,\rangle = \delta^{a}{}_{b}$, such
that
\begin{equation} \label{defconfon}
{\bf g} = H^{-2}\,{\bf G} = H^{-2}\,\eta_{ab}\,\bOm^a\,\bOm^b\:,
\end{equation}
where $\eta_{ab}={\rm diag}(-1,1,1,1)$, and $a,b=0,1,2,3$.
Furthermore, since we are interested in asymptotic temporal
behavior, we let $\parb_0$ be tangential to the reference
congruence, i.e., $\parb_0 \propto \partial/\partial x^0$,
where $x^0$ is the time coordinate along the reference
congruence (see Appendix~\ref{app}). This naturally leads to
the 1+3 Hubble-normalized CONF formulation, which is a
specialization of the 1+3 CONF formulation, introduced
in~\cite{rohugg05}, to a conformal factor related to $H$ as
described above; we present the 1+3 Hubble-normalized CONF
field equations in Appendix~\ref{app}.

In this paper we consider a source that consists of several
perfect fluids. The $i$:th perfect fluid yields a stress-energy
tensor component,
\be T^{ab}_{(i)} = (\tilde{\rho}_{(i)} + \tilde{p}_{(i)})
\tilde{u}^a_{(i)}\tilde{u}^b_{(i)} + \tilde{p}_{(i)} g^{ab},
\ee
to the total stress-energy tensor, $T^{ab}=\sum_i
T^{ab}_{(i)}$, where $\tilde{\rho}_{(i)}$ and $\tilde{p}_{(i)}$
are the energy density and pressure, respectively, in the rest
frame of the $i$:th fluid, while $\tilde{u}^a_{(i)}$ is its
4-velocity; throughout we assume that $\tilde{\rho}_{(i)}\geq
0$. It is natural to make a 1+3 split of $\tilde{u}^a_{(i)}$
w.r.t. the vector field $u^a$ that is tangential to the
reference congruence, and introduce a 3-velocity $v^a_{(i)}$
according to\footnote{One reason for why this is convenient is
that since the components of the 3-velocity $v^\alpha_{(i)}$ in
the orthonormal frame of ${\bf g}$ are dimensionless they
coincide with the 3-velocity components of the conformal
4-velocity in the Hubble-normalized frame of ${\bf G}$.}
\bea \tilde{u}^a_{(i)} = \Gamma_{(i)}(u^a + v^a_{(i)}); \qquad
u_a v^a_{(i)}=0,\qquad \Gamma_{(i)} = 1/\sqrt{1 - v^2_{(i)}}.
\eea
The $i$:th fluid is, apart from its 3-velocity, conveniently
characterized by its energy-density w.r.t. $u^a$, $\rho_{(i)}$,
which is defined in terms of $\tilde{\rho}_{(i)}$ and
$v_{(i)}^2$ according to
\be \rho_{(i)} =
\Gamma^{2}_{(i)}\,G_+^{(i)}\,\tilde{\rho}_{(i)} ,\qquad
G_\pm^{(i)} = 1 \pm w_{(i)}\, v_{(i)}^2 ,\qquad w_{(i)} =
\frac{\tilde{p}_{(i)}}{\tilde{\rho}_{(i)}} . \ee
Throughout, we are going to assume that the perfect fluids
satisfy barotropic equations of state, i.e., $\tilde{p}_{(i)} =
\tilde{p}_{(i)}(\tilde{\rho}_{(i)})$, and hence
$w_{(i)}({\tilde{\rho}}_{(i)})$; special cases of interest are
dust, $w=0$, radiation, $w=\frac{1}{3}$, and stiff fluids,
$w=1$.

To conform with standard convention in cosmology, we
Hubble-normalize $\rho_{(i)}$ as follows
\be \Omega_{(i)} = \frac{\rho_{(i)}}{3H}^2 ,\ee
where $H$ is the Hubble variable associated with $u^a$. A 1+3
irreducible Hubble-normalized decomposition of the
stress-energy tensor, see Appendix~\ref{app}, yields
\be\label{pfrel} Q_{(i)}^\alpha = (1 + w_{(i)})
(G^{(i)}_+)^{-1}\, \Omega_{(i)}\, v_{(i)}^\alpha ;\,\, P_{(i)}
= w_{(i)}\Omega_{(i)} + \textfrac{1}{3} (1 -
3w_{(i)})Q^{(i)}_\alpha v_{(i)}^\alpha;\,\,
\Pi^{(i)}_{\alpha\beta} =
Q^{(i)}_{\la\alpha}v^{(i)}_{\beta\ra}, \ee
where $Q_{(i)}^\alpha, P_{(i)}, \Pi^{(i)}_{\alpha\beta}$ are
the Hubble-normalized components of the energy flux, pressure,
and stress tensor, respectively (w.r.t the temporal reference
congruence), and hence the Hubble-normalized stress-energy
tensor of the $i$:th fluid is characterized by $\Omega_{(i)}$,
$v^\alpha_{(i)}$, and $w_{(i)}$ ($\alpha,\beta=1,2,3$).

The 1+3 Hubble-normalized CONF formulation of the field
equations for $I$ fluids involve the following quantities, see
Appendix~\ref{app},
\begin{itemize}
\item[] Hubble-normalized frame variables: $\{{\cal M},\,
    {\cal M}_\alpha,\, E_\alpha{}^i\}$.
\item[] Hubble-normalized connection/commutator variables:
    $\{W^\alpha,\, \Udot^\alpha,\, R^\alpha,\,
    \Sigma_{\alpha \beta},\, A_\alpha,\,
    N_{\alpha\beta}\}$.
\item[] Hubble-normalized perfect fluid variables:
    $\{\Omega_{(1)},\, v_{(1)}^\alpha,....,\,
    \Omega_{(i)},\,v_{(i)}^\alpha,....,\,
    \Omega_{(I)},\,v_{(I)}^\alpha\}$.
\end{itemize}
Apart from the already described fluid quantities, the
quantities ${\cal M}$ and ${\cal M}_\alpha$ are the
Hubble-normalized threading lapse and shift functions
respectively, while $E_\alpha{}^i$ are the Hubble-normalized
spatial frame components. The quantities
$W^\alpha,\,\Udot^\alpha,\, \Sigma_{\alpha \beta}$ describe the
vorticity, acceleration, and shear of the Hubble conformal
reference congruence, while $R^\alpha$ describes the rotation
of the spatial frame w.r.t. a Fermi frame in the space
orthogonal to the reference congruence. Finally $A_\alpha$ and
$N_{\alpha\beta}$ gives the commutator functions (or,
equivalently, the spatial Hubble-conformal connection
coefficients) of the Hubble-normalized spatial frame.

The outline of the paper is as follows. In the next section we
formulate two conjectures, originally introduced by Belinskii,
Khalatnikov, and Lifshitz, in terms of the above
Hubble-normalized variables. In Section~\ref{paststab} we
explore the consequences of these conjectures and derive a
number of results concerning the past stability and instability
on the so-called silent boundary; in particular we present the
past attractors on the silent boundary (note that since there
exists a dynamical one-to-one correspondence between the silent
boundary and the spatially homogeneous models, it follows that
our results also pertain to the latter case). Then the
stability of these results are investigated in the context of
the full state space in Section~\ref{stabfull}, where we also
discuss possible temporal gauge choices, and in particular if
it is possible to use fluid congruences as temporal reference
congruences to describe so-called asymptotically silent and
local singularities; we find that this is only possible if
there exist fluids with a sound speed that is equal to or
larger than the speed of light. We conclude with a summary and
some remarks about our results in Section~\ref{concl}, together
with some comments about some open issues. Appendix \ref{app}
establishes conventions and notation by giving the 1+3
Hubble-normalized CONF field equations explicitly. Finally we
describe a number of important subsets in
Appendix~\ref{invbound}.

\section{BKL conjectures}\label{conjecture}

In~\cite{BKL82} p656 Belinskii, Khalatnikov, and Lifshitz made
the following important conjecture:
\begin{conjecture}\label{BKLL*}
``...in the asymptotic vicinity of the singular point the
Einstein equations are effectively reduced to a system of
ordinary differential equations with respect to time: the
spatial derivatives enter these equations `passively' without
influencing the character of the solution.''
\end{conjecture}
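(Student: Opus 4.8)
The plan is to give the quoted physical statement precise mathematical content within the 1+3 Hubble-normalized CONF framework of Appendix~\ref{app}, and then to reduce it to a statement about a single group of variables. First I would write the full field equations schematically as $\parb_0\vec{S} = \vec{F}(\vec{S}) + \vec{H}(\vec{S};\,\parb_\alpha\vec{S})$, where $\vec{S}$ collects all the Hubble-normalized state variables, $\parb_0 \propto \partial/\partial x^0$ is the temporal derivative along the reference congruence, and $\parb_\alpha = E_\alpha{}^i\,\partial_i$ is the spatial frame derivative. The structural observation that makes the conjecture tractable is that every \emph{spatial} coordinate derivative $\partial_i$ enters the equations only weighted by the Hubble-normalized spatial frame $E_\alpha{}^i$; hence the PDE system degenerates to a system of ODEs in $x^0$, parametrized by the spatial point, precisely on the \emph{silent boundary} $E_\alpha{}^i = 0$. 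Thus the BKL assertion that ``spatial derivatives enter passively'' becomes the precise claim that a generic orbit approaches this silent boundary as the singularity is reached, i.e. $E_\alpha{}^i \to 0$, in which case the ``character of the solution'' is governed by the boundary ODE system whose past attractors are analyzed directly in Section~\ref{paststab}.

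Second, I would extract and study the evolution equation for $E_\alpha{}^i$ itself, which in Hubble-normalized variables is schematically of the form $\parb_0 E_\alpha{}^i = (q\,\delta_\alpha{}^\beta - \Sigma_\alpha{}^\beta)E_\beta{}^i$, so that the sign and magnitude of the deceleration parameter $q$ together with the shear $\Sigma_{\alpha\beta}$ control whether the frame coefficients decay toward the past. The key step is to show that along a generic orbit $q$ stays uniformly positive in a time-averaged sense, so that $E_\alpha{}^i$ is driven to zero exponentially in the logarithmic time adapted to $H$. Because the asymptotic dynamics is expected to be oscillatory (Kasner bounces of BKL/Mixmaster type), this decay cannot be monotone and must instead be established as an averaged (lim-sup) statement over the sequence of Kasner epochs, using that the average of $q$ over a Kasner era remains bounded below by a positive constant.

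The main obstacle is the \emph{self-referential} character of the argument: the evolution of $\vec{S}$, and hence of $q$ and $\Sigma_{\alpha\beta}$, itself depends on the spatial-derivative terms $\vec{H}(\vec{S};\,\parb_\alpha\vec{S})$ that one is trying to show are negligible. Establishing that these terms are genuinely subdominant therefore requires a bootstrap argument: assume $E_\alpha{}^i$ is small, deduce that the dynamics is a small perturbation of the silent-boundary ODE flow of Section~\ref{paststab}, and then feed this back to prove that the assumed smallness is self-consistent and improves toward the singularity. Controlling the spatial-derivative contributions uniformly---in particular ruling out ``spikes'' at which spatial gradients concentrate and the reduction to ODEs fails on a small but dynamically significant set---is exactly the point at which a rigorous proof currently breaks down, and it is why the present statement is posed as a conjecture rather than a theorem. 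The honest goal of such a proof proposal is therefore to pin down this obstruction precisely and to establish the reduction \emph{conditionally}, assuming suitable uniform control of $E_\alpha{}^i$, which is the strategy the remainder of the paper in effect adopts by exploring the consequences of the conjectures on the silent boundary.
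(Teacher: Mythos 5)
Your proposal correctly recognizes what the paper itself does with this statement: it is a verbatim quotation of the BKL conjecture, and the paper never proves it --- it gives it precise mathematical content (the paper's Conjecture (1)) and then works out consequences conditionally, which is exactly the stance you adopt at the end. Your identification of the evolution equation for $E_\alpha{}^i$ and of recurring spike formation as the obstruction to any unconditional proof also matches the paper's discussion (cf.\ Eq.~\eqref{Gij} and the footnote on spikes).

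However, two of your reductions contain genuine gaps. First, you identify the silent boundary with $E_\alpha{}^i = 0$ alone and claim the PDE system ``degenerates to ODEs'' there. In the paper the silent boundary is $({\cal M}_\alpha,\,W_\alpha,\,\Udot_\alpha,\,r_\alpha,\,E_\alpha{}^i)=0$, and the formalization of the conjecture has two intertwined parts: asymptotic surface formation (a) and asymptotic locality (b). The extra conditions are not decorative: the spatial frame derivative is $\parb_\alpha = {\cal M}_\alpha\,{\cal M}\,\parb_0 + E_\alpha{}^i\,\ptl_i$, so $E_\alpha{}^i \to 0$ alone does not make $\parb_\alpha$ vanish; moreover $W_\alpha$, $\Udot_\alpha$, $r_\alpha$ enter the evolution and constraint equations \emph{algebraically}, so the system restricted to $\{E_\alpha{}^i=0\}$ is not the Bianchi ODE system --- the paper notes, for instance, that $E_\alpha{}^i=0$, $\Udot_\alpha=0$ with $r_\alpha\neq 0$ reproduces the spatially self-similar equations, not the spatially homogeneous ones. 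In addition, locality requires $\parb_\alpha\,{\bf X}\to 0$ as a separate condition from $E_\alpha{}^i\to 0$, since the coordinate gradients $\ptl_i\,{\bf X}$ may blow up; that is precisely the spike phenomenon you mention, so your own obstruction already shows that your formalization is too weak. Second, your proposed decay criterion --- that $q$ stays positive in a time-averaged sense --- is not sufficient and fails exactly where the danger lies. The correct condition, the paper's Eq.~\eqref{positivity condition}, is that the time integrals of \emph{all eigenvalues} of $q\delta_\alpha{}^\beta - \Sigma_\alpha{}^\beta$ diverge to $-\infty$. On the Kasner subset $q=2$ identically, yet at the Taub points one eigenvalue $2-\hat{\Sigma}_\alpha = 3(1-p_\alpha)$ vanishes, so positivity of $q$ gives no control there; the shear contribution, and the attendant issue of avoiding the Taub points, is the actual crux (cf.\ Lemma~\ref{Non-Taub condition} and the discussion of the Taub subset in Appendix~\ref{invbound}).
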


In the present 1+3 Hubble-normalized CONF framework we
reformulate this conjecture in terms of two intertwined
conditions:
\begin{conjecture} [1]
\label{BKLL}
\begin{subequations}
\begin{eqnarray}
&\textit{(a) Asymptotic surface formation\/}: &
\lim_{x^0 \to -\infty}({\cal M}_\alpha,\, W_\alpha,\, \Udot_\alpha,\,
r_\alpha) = 0, \nonumber \\
& & \quad 0 < C_1 \leq \lim_{x^0 \to -\infty}{\cal M} \leq C_2 < \infty . \label{condasila}\\
&\textit{(b) Asymptotic locality condition\/}: & \lim_{x^0 \to
-\infty} (E_{\alpha}{}^{i},\, \parb_\alpha\,{\bf X})
= 0, \label{condasilb}
\end{eqnarray}
\end{subequations}
\end{conjecture}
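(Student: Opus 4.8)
The statement to be established is the reformulation, in Hubble-normalized CONF variables, of the BKL Conjecture~\ref{BKLL*}; since it is posed as a conjecture rather than a theorem, what follows is a strategy for \emph{deriving} conditions (a) and (b) from the field equations, together with an honest assessment of where such a derivation must stall. The plan is to treat the 1+3 Hubble-normalized CONF system of Appendix~\ref{app} as a first-order evolution system in $x^0$ and to argue that, as $x^0 \to -\infty$, the spatial-derivative couplings become asymptotically negligible, so that the system limits onto an autonomous system of ODEs along each reference worldline. Concretely, I would first isolate in each evolution equation the terms carrying a factor of $E_\alpha{}^i$ (hence a genuine spatial derivative $\parb_\alpha$) from the purely algebraic and temporal terms, since condition (b) is precisely the assertion that the former drop out in the limit.

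The central step is the \emph{asymptotic silence} mechanism. I would examine the evolution equation for $E_\alpha{}^i$ and show that its leading behavior drives $E_\alpha{}^i \to 0$ toward the singularity: geometrically, the past-directed horizons shrink so that neighboring worldlines become causally disconnected, which is exactly what renders the spatial gradients $\parb_\alpha{\bf X}$ irrelevant. Granting this, the coupled PDE system decouples worldline by worldline, and one studies the resulting ODE system on the \emph{silent boundary} $E_\alpha{}^i = 0$. On this boundary I would then analyze the evolution of the surface-forming variables $({\cal M}_\alpha,\, W_\alpha,\, \Udot_\alpha,\, r_\alpha)$ and show, via a monotonicity/Lyapunov argument or a linearization about the relevant fixed points, that they decay to zero while the lapse ${\cal M}$ is forced into a bounded, strictly positive range. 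This last part is most naturally attacked using the constraint equations together with the sign structure of the deceleration and expansion terms, which is where the perfect-fluid equation-of-state data $w_{(i)}$ would enter.

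The hard part---indeed the reason this remains a conjecture rather than a theorem---is controlling the full nonlinear PDE dynamics rather than the silent-boundary ODE caricature. Two obstacles dominate. First, the decay $E_\alpha{}^i \to 0$ that underwrites asymptotic locality is itself part of what must be proved, and it is not uniform: generic data are expected to develop \emph{spikes}, narrow spatial features where gradients fail to decay and the naive reduction to ODEs breaks down on small sets, so that the passivity of spatial derivatives holds only in a carefully qualified, almost-everywhere sense. Second, the oscillatory BKL behavior expected for $w_{(i)} < 1$ means the putative limiting system has no single fixed point but a chaotic attractor, so no global Lyapunov function is available and one can at best hope for control along sequences of epochs. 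For these reasons I would not expect a complete proof in full generality; the realistic target is a rigorous derivation within a symmetry-reduced or Fuchsian/stable-manifold setting, with the general statement retained---as the authors do---as a working hypothesis from which the subsequent stability results are deduced.
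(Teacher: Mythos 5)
You are right that this statement is a conjecture, not a theorem: the paper never proves it, and explicitly assumes it (``From now on we will assume that this is the case and derive the consequences of this assumption''), so there is no proof of the paper's to compare yours against. Your assessment of why a proof is out of reach --- spike formation invalidating the naive ODE reduction along exceptional timelines, and the oscillatory, non-fixed-point nature of the expected attractor --- coincides with the paper's own caveats, in particular the footnote noting that recurring spike formation produces timelines with ``non-BKL'' behavior for which conjecture (1) fails.

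Two steps of your proposed derivation strategy would, however, fail as formulated. First, condition (a) cannot be derived from the field equations even in principle, because it is a restriction on the \emph{choice of timelike reference congruence} --- a gauge condition --- rather than a property of the solution alone: Section~\ref{stabfull} shows that for fluid-comoving congruences of fluids with $c_s^2 < 2/3$ (dust, radiation) the acceleration $\Udot_\alpha$ does not tend to zero (and consequently $r_\alpha$ is not expected to either), so \eqref{condasila} fails in those gauges even for solutions whose singularity is perfectly BKL-like in a better-adapted gauge. Any derivation must therefore be relative to a specified class of congruences; the conjecture asserts that such congruences exist and that the limits hold generically along their timelines, which is a different logical shape than ``follows from the evolution equations.'' Second, your central mechanism --- that shrinking particle horizons (asymptotic silence) is ``exactly what renders the spatial gradients $\parb_\alpha\,{\bf X}$ irrelevant'' --- is precisely the implication the paper disavows: its footnote records that asymptotic silence was originally believed to imply asymptotically local dynamics, but the discovery of recurring spikes shows it does not, so condition \eqref{condasilb} is strictly stronger than silence. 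You partially self-correct by invoking spikes in your final paragraph, but the ``central step'' of your plan rests on the implication that is known to be false, and the paper's own route is the reverse of yours: it does not derive locality from silence, it \emph{postulates} locality (together with the integral condition~\eqref{positivity condition} forcing $E_\alpha{}^i \to 0$) and then checks consistency by perturbing the resulting attractors in the full state space.
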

\noindent where $C_{1,2}=\mathrm{const}$ and ${\bf X} = ({\cal
M},\, {\cal M}_\alpha,\, W_\alpha,\,
\Udot^\alpha,\,R^\alpha,\,\Sigma_{\alpha \beta},\, A_\alpha,\,
N_{\alpha\beta},\,\Omega_{(1)},\,
v_{(1)}^\alpha,\,...,\,\Omega_{(I)},\, v_{(I)}^\alpha)$, where
$I$ denotes the number of perfect fluids.

Condition (a) implies that the spatial frame is asymptotically
hypersurface forming; in addition $\Udot_\alpha=0$ implies that
the timelike congruence is conformally geodesic, which amounts
to an inverse mean curvature flow for the original physical
spacetime, while $r_\alpha=0$ implies that a foliation is a
constant mean curvature foliation in the physical spacetime.
Furthermore, $\Udot_\alpha=r_\alpha=0$ implies that the
reference timelines are geodesics in the original physical
spacetime. Finally, the condition that ${\cal M}$ is
asymptotically bounded implies that the reference congruence
asymptotically gives rise to a foliation that yields a
simultaneous bang function when $x^0\to -\infty$.\footnote{This
follows from that we can reparameterize the congruence
according to ${\bar x}^0 = \exp(x^0)$, so that the past
singularity occurs at ${\bar x}^0=0$. However, we believe that
the condition on ${\cal M}$ can be weakened, which, however, we
have refrained from doing in order to keep the discussion
reasonably simple. The central restriction is to choose a
reference congruence that asymptotically yields a spacelike
foliation that has a simultaneous bang function, cf. the
discussion about the synchronous gauge in~\cite{BKL82}.}

The field equations for the 1+3 Hubble-normalized
CONF-variables for a source that consists of several perfect
fluids are given by equations~\eqref{devoleq},
\eqref{dconstreq}, and~\eqref{perf} in Appendix~\ref{app}. This
system admits an invariant subspace called the \emph{silent
boundary\/}, see Appendix~\ref{invbound}, which is
characterized by
\begin{equation}
({\cal M}_\alpha,\,W_\alpha,\, \Udot_\alpha,\,
r_\alpha,\, E_{\alpha}{}^{i}) = 0 .
\end{equation}
As discussed in Appendix~\ref{app} the equations on this subset
form a coupled system of ODE that is identical to the system
that describes the dynamics of spatially homogeneous Bianchi
models. If conjecture (1) holds along a timeline we may say
that the dynamics become \emph{asymptotically local\/}, since
the dynamics then is asymptotically described by the silent
boundary on which the dynamics for a timeline is governed by
what happens along the timeline alone, and hence it would
perhaps be more appropriate to refer to the silent boundary as
the \emph{local boundary\/}.\footnote{Asymptotic silence is
defined as the formation of particle horizons that shrink to
zero size in all directions along any timeline that is not
asymptotically null toward the singularity, thus asymptotically
prohibiting communication. When the nomenclature silent
boundary was introduced it was believed that asymptotic silence
implied asymptotical local dynamics, however, the discovery of
recurring spike formation~\cite{andetal05,limetal09} shows that
this is not the case. Although we expect that conjecture (1)
holds for most timelines for an open set of solutions, the
existence of recurring spike formation suggests that there
exist generic singularities with special timelines with
`non-BKL' behavior for which conjecture (1) does not hold.}
Finally, a singularity that obeys `the locality conjecture' (1)
will be referred to as an \textit{asymptotically local
singularity\/}.

The dynamical relevance of the silent/local boundary depends on
if the conditions in conjecture (1) holds. From now on we will
assume that this is the case and derive the consequences of
this assumption. A \emph{necessary\/} condition for the
dynamics of a timeline to approach the silent/local boundary is
that $E_\alpha{}^i\rightarrow 0$ toward the past singularity,
which is equivalent to that the conformally Hubble-normalized
contravariant spatial 3-metric
    \begin{equation}
    {}^3\!G^{ij} = \delta^{\alpha\beta}\,E_\alpha{}^i\,E_\beta{}^j
    \end{equation}
tends to zero. Due to~\eqref{deeq}, ${}^3\!G^{ij}$ satisfies
the equation
\be\label{Gij}
\parb_0 {}^3\!G^{ij} = 2(q\delta^{\alpha\beta} -
\Sigma^{\alpha\beta})E_\alpha{}^i\,E_\beta{}^j . \ee

The vanishing of $E_\alpha{}^i$ to the past is equivalent to
the condition that the time integral to the initial singularity
of the eigenvalues of the matrix $(q\delta_\alpha{}^\beta -
\Sigma_\alpha{}^\beta)$ negatively diverge for all of the
eigenvalues,
\begin{equation}\label{positivity condition}
\int_{\tilde{x}^0}^{-\infty}{\rm eig}(q\delta_\alpha{}^\beta -
\Sigma_\alpha{}^\beta){\cal M}dx^0 = -\infty ,
\end{equation}
where $\tilde{x}^0$ describes some reference point along the
timeline. It has implicitly been shown that the
condition~\eqref{positivity condition} is fulfilled for vacuum
and orthogonal fluid Bianchi models of type
$IX$~\cite{rin01,heiugg09a,heiugg09b}, but this, of course,
does not imply that it is true in general. However, we will in
the following assume that condition~\eqref{positivity
condition} holds and work out the consequences of that
assumption, which leads to a consistent picture. Our assumption
that there exist generic dynamics that is asymptotically
described by the silent boundary suggests an analysis in two
steps:
\begin{itemize}
\item[1.] Identification of the past attractor\footnote{The
    past attractor of a dynamical system given on a state
    space $X$ is defined as the smallest closed invariant
    set $\mathcal{A}^- \subseteq \overline{X}$ such that
    the $\alpha$-limits of all $p \in X$, apart from a set
    of measure zero, satisfy $\alpha(p) \subseteq
    \mathcal{A}^-$~\cite{mil85}.} on the silent boundary.
\item[2.] Perturbation of the past attractor in the full
    inhomogeneous state space to establish if it is stable
    or not.
\end{itemize}

A proof that identifies the attractor and shows its stability
in the full infinite dimensional state space amounts to a proof
of a singularity theorem that concerns the details of a generic
singularity. This is likely to be an extremely hard problem,
and we will therefore only provide proofs about some aspects in
the context of that our `BKL-like' assumptions hold.

There exists a second conjecture proposed by Belinskii,
Khalatnikov, and Lifshitz~\cite{LK63,BKL70} that is relevant in
this context: the asymptotic `matter does not matter'
conjecture.
\begin{conjecture}\label{BKLM*}
For a typical cosmological model, the matter content is not
dynamically significant near the initial singularity.
\end{conjecture}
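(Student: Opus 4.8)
The plan is to give the ``matter does not matter'' statement of Conjecture~\ref{BKLM*} precise content within the present framework as the claim that, along generic timelines, all Hubble-normalized matter variables decay to zero toward the past singularity, i.e.\ $\Omega_{(i)} \to 0$, and hence $Q_{(i)}^\alpha, P_{(i)}, \Pi^{(i)}_{\alpha\beta} \to 0$ through~\eqref{pfrel}, so that the asymptotic dynamics is governed by the vacuum subsystem. Under Conjecture~(1) the relevant asymptotic dynamics is that on the silent boundary, where the field equations reduce to the Bianchi ODE system described in Appendix~\ref{app}; I would therefore work entirely on that invariant subset (having used the positivity condition~\eqref{positivity condition} to guarantee that generic timelines actually approach it) and analyze the evolution of each $\Omega_{(i)}$ relative to the oscillatory vacuum background.

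First I would isolate the evolution equation for $\Omega_{(i)}$ from the fluid equations of Appendix~\ref{app}. Restricted to the silent boundary it takes the schematic form $\parb_0 \ln\Omega_{(i)} = \mathcal{F}(q,\, w_{(i)},\, v_{(i)},\, \Sigma_{\alpha\beta})$, where the deceleration parameter $q$ and the shear $\Sigma_{\alpha\beta}$ evolve according to the vacuum Bianchi equations while $v_{(i)}$ obeys its own tilt equation. The sign of the \emph{time-averaged} exponential rate $\langle \mathcal{F} \rangle$ over the asymptotic sequence of Kasner epochs then decides whether $\Omega_{(i)}$ decays or grows. Computing this averaged rate over a single Kasner epoch, using the vacuum relation $q = 2$ and the Kasner constraint on the eigenvalues of $\Sigma_{\alpha\beta}$ together with the asymptotic behavior of the tilt, the key quantitative input is that for a linear equation of state the exponent is controlled by the combination $(1 - w_{(i)})$: one expects $\langle \mathcal{F} \rangle < 0$, and hence decay, precisely when $w_{(i)} < 1$, with the borderline stiff case $w_{(i)} = 1$ giving a vanishing or positive averaged rate so that matter competes with the vacuum dynamics. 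This would make precise, and be consistent with, the statement flagged later in the paper that fluid congruences can describe the singularity only for sound speeds $\geq c$ (equivalently $w_{(i)} \geq 1$).

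The main obstacle is that a fully rigorous proof is entangled with the unresolved problem of controlling the infinite sequence of Mixmaster oscillations: the averaging argument presupposes that the vacuum orbit spends asymptotically all of its time near Kasner states with well-defined statistics, which is itself only conjecturally established and only partially proven even for Bianchi type~$IX$. A secondary difficulty is the tilt: the factor $\Gamma^2_{(i)}$ in $\rho_{(i)}$ means that extreme tilt, $v_{(i)} \to 1$, can feed energy into $\Omega_{(i)}$, so the velocity equation must be analyzed simultaneously and one must show that the tilt growth cannot overcome the vacuum-driven decay when $w_{(i)} < 1$. Finally, the qualifier ``typical'' requires characterizing the measure-zero exceptional timelines, such as those associated with the recurring spike formation mentioned above, which lie outside the silent-boundary reduction and would have to be excluded separately rather than handled by the ODE analysis.
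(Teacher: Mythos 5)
The first thing to note is that the statement you were asked to prove is not proved in the paper either: it is one of the two BKL conjectures, and the paper's entire treatment of it consists of (a) giving it precise content as conjecture (2), $\lim_{x^0\to-\infty}\Omega_\mathrm{tot}=0$ in Hubble-normalized variables, (b) proving that it \emph{fails} whenever a stiff or ultra-stiff fluid is present (Propositions \ref{case1} and \ref{case2}: $\Omega_\mathrm{ultra-stiff}\to 1$, respectively $\Omega_\mathrm{stiff}\to\hat{\Omega}_\mathrm{stiff}$), and (c) in the remaining soft case ($c_s^2<1$ for all fluids) stating explicitly ``we have no proof,'' offering only supporting evidence: the linear stability of $\Omega_{(i)}=0$ on ${\cal J}$ and ${\cal K}$, Eq.~\eqref{Omstab}, supplemented by the nonlinear Lemma~\ref{Non-Taub condition}, which rests on the assumed condition~\eqref{positivity condition}. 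Your proposal arrives at the same reformulation, the same reduction to the silent boundary via conjecture (1), and the same dichotomy at $w=1$, so it tracks the paper's framework faithfully; and the obstructions you name---control of the infinite Mixmaster sequence, the tilt, and the exceptional (spike) timelines---are exactly the ones the paper acknowledges. So your proposal is not a wrong approach, but it cannot be graded as a proof; it has the same epistemic status as the paper's own discussion, which is that of an unproved conjecture supported by partial results.

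Two substantive points of comparison. First, the mechanism differs: you propose time-averaging the rate $\parb_0\ln\Omega_{(i)}$ over the asymptotic sequence of Kasner epochs, whereas the paper only linearizes pointwise at the fixed-point sets ${\cal J}$, ${\cal K}$, obtaining
\begin{equation*}
\Omega_{(i)}^{-1}\,\parb_0\,\Omega_{(i)}\big|_{{\cal J},{\cal K}}
= 3G_+^{-1}\left[(1-w)(1-v^2)+(1+w)(1-p_\mathrm{max})\,v^2\right],
\end{equation*}
which for $w<1$ is strictly positive---hence decay toward the past $x^0\to-\infty$ (note that in the paper's convention decay toward the singularity corresponds to a \emph{positive} $\parb_0$-rate, opposite to the sign you wrote)---except at the Taub points $p_\mathrm{max}=1$, $v^2=1$. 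This settles your worry about the tilt: at the linear level extreme tilt ($v\to1$) reduces but does not reverse the decay; the only failure is at the Taub points, and it is this non-uniformity, not the tilt as such, that blocks a global argument. The paper's substitute for your averaging is precisely Lemma~\ref{Non-Taub condition}. Second, your averaging program is the part that cannot currently be carried out: it presupposes statistical control of the heteroclinic Kasner chain, which the paper defers to the methods of the billiard-attractor analysis (cf.\ the discussion of cumulative trends in subsection~\ref{pastatt}) and which is unproven beyond special classes. One small conflation to flag: the paper's result that only fluids with $c_s^2\geq 1$ admit comoving gauges compatible with asymptotic locality is a statement about the gauge conditions of conjecture (1), not a quantitative version of conjecture (2), so it should not be cited as corroborating your $w<1$ decay criterion, even though the two results are morally aligned.
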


In our case of a source of several perfect fluids we formulate
this conjecture in terms of our variables as:
\begin{conjecture}[2]\label{BKLM}
\begin{equation}
\lim_{x^0 \to -\infty} \Omega_\mathrm{tot} = 0,
\end{equation}
\end{conjecture}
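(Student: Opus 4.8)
The plan is to work entirely on the silent boundary, which is legitimate once Conjecture~(1) is assumed: there the field equations collapse to the Bianchi-type ODEs of Appendix~\ref{app}, so the evolution of each Hubble-normalized energy density along a fixed timeline is governed by an ordinary differential equation in $x^0$ alone. First I would extract from the perfect-fluid equations~\eqref{perf}, restricted to the silent boundary, the evolution equation for $\Omega_{(i)}$. On that subset it is homogeneous in $\Omega_{(i)}$,
\[
\parb_0\,\Omega_{(i)} = \mathcal{F}_{(i)}\big(q,\,\Sigma_{\alpha\beta},\,v^{\alpha}_{(i)},\,w_{(i)}\big)\,\Omega_{(i)},
\]
and, since $\parb_0={\cal M}^{-1}\partial_{x^0}$, it integrates at once to
\[
\Omega_{(i)}(x^0) = \Omega_{(i)}(\tilde{x}^0)\,\exp\!\left(\int_{\tilde{x}^0}^{x^0}\mathcal{F}_{(i)}\,{\cal M}\,dx^0\right).
\]
Because the singularity sits at $x^0\to-\infty$, the conclusion $\Omega_{(i)}\to0$ is equivalent to showing that the scalar rate $\mathcal{F}_{(i)}$ is positive in the time-averaged sense, i.e. $\int_{\tilde{x}^0}^{-\infty}\mathcal{F}_{(i)}\,{\cal M}\,dx^0=-\infty$. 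This is precisely the mechanism already in play in~\eqref{positivity condition}: the rate $\mathcal{F}_{(i)}$ plays for $\Omega_{(i)}$ the role that the matrix $q\delta_\alpha{}^\beta-\Sigma_\alpha{}^\beta$ plays for $E_\alpha{}^i$ in~\eqref{Gij}. Summing the finitely many fluids would then give $\Omega_\mathrm{tot}\to0$.

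The second step is to evaluate $\mathcal{F}_{(i)}$ using the Raychaudhuri relation for $q$ together with the Gauss constraint of Appendix~\ref{app}. For an orthogonal ($v_{(i)}=0$) fluid one has $\mathcal{F}_{(i)}=2q-(1+3w_{(i)})$, and eliminating the shear through the Gauss constraint rewrites this as $\mathcal{F}_{(i)}=3(1-w_{(i)})$ plus corrections that are linear in $\Omega_\mathrm{tot}$ and in the curvature $\Omega_k$ and that vanish as these tend to zero. Thus on a shear-dominated, Kasner-like regime ($q\to2$, matter and curvature subdominant) the rate approaches $3(1-w_{(i)})$, which is strictly positive exactly when $w_{(i)}<1$, so its integral to the past diverges as in~\eqref{positivity condition} and $\Omega_{(i)}\to0$. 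The strict inequality is essential: in the marginal stiff case $w_{(i)}=1$ (sound speed equal to the speed of light) the orthogonal rate vanishes identically and matter can survive, exactly the exception anticipated in Section~\ref{stabfull}.

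The main obstacle will be that this argument is intrinsically a bootstrap, and that it is not handed to us by Conjecture~(1). The rate $\mathcal{F}_{(i)}$ depends on $\Omega_\mathrm{tot}$ itself through $q$, and indeed the correction terms reduce the rate, so that a large matter content slows its own decay; the clean sign $\mathcal{F}_{(i)}\approx3(1-w_{(i)})>0$ holds only once $\Omega_\mathrm{tot}$ and $\Omega_k$ are already small. Establishing this shear dominance self-consistently---rather than assuming it---is the crux, and it is a genuinely independent dynamical fact about the past attractor, not a consequence of the asymptotic-locality condition~(1) alone, which is precisely why $\Omega_\mathrm{tot}\to0$ is posed as a separate conjecture. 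Nonvanishing tilt compounds the difficulty: $\mathcal{F}_{(i)}$ then acquires sign-indefinite terms coupling $\Sigma_{\alpha\beta}$ to $v^{\alpha}_{(i)}v^{\beta}_{(i)}$, and $v^{\alpha}_{(i)}$ obeys its own nonlinear equation whose attractor---vanishing versus extreme tilt $v^{2}_{(i)}\to1$---depends on $w_{(i)}$, so I would need an auxiliary analysis of the tilt evolution to exclude a conspiracy between extreme tilt and the shear that keeps $\Omega_{(i)}$ from decaying. Finally, since~\eqref{positivity condition} has itself been verified only for Bianchi type~$IX$, any such argument is necessarily conditional, which is why the statement is properly framed as a conjecture rather than a theorem.
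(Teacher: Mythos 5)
There is no proof in the paper to compare you against: the statement you were given is Conjecture~(2), and the authors deliberately leave it unproven --- for the soft case they say outright ``we have no proof,'' offering only supporting evidence. Note also that, read literally, the statement is actually \emph{false} for two of the paper's three cases: Proposition~\ref{case1} gives $\Omega_\mathrm{tot}\to 1$ when an ultra-stiff fluid is present, and Proposition~\ref{case2} gives $\Omega_\mathrm{tot}\to\hat{\Omega}_\mathrm{stiff}\neq 0$ in the stiff case, so the conjecture can only be intended for sources in which every fluid is asymptotically soft (case~(iii)). Your closing paragraph shows you understand both points, so the fair question is whether your sketch reproduces the evidence the paper actually assembles and correctly locates the obstruction; it does.

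Your mechanism is the paper's mechanism. Homogeneity of the $\Omega_{(i)}$ equation on the silent boundary follows from \eqref{ompfsil} once $Q^\alpha$ is eliminated via \eqref{pfrel}; your orthogonal rate $2q-1-3w$, rewritten through the Gauss constraint as $3(1-w)-3(\Omega_\mathrm{tot}-P_\mathrm{tot})-4\Omega_k$, is exactly the computation behind Eqs.~\eqref{rhocomp} and~\eqref{omultra-stiff}; and your integral-divergence criterion is the same device as condition~\eqref{positivity condition} and Lemma~\ref{Non-Taub condition}. Your diagnosis of the bootstrap --- that the favorable sign of the rate presupposes the vacuum dominance one is trying to establish --- is precisely why the authors present this as a separate conjecture, supported only by single-soft-fluid results in the literature and by the linearization~\eqref{Omstab}. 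The one place the paper gets further than you do is the tilt: you ask for ``an auxiliary analysis of the tilt evolution to exclude a conspiracy between extreme tilt and the shear,'' and the paper supplies it. Corollary~\ref{velocityshearalignment} shows the peculiar velocity direction asymptotically aligns with the dominant shear eigendirection, after which the decay rate on ${\cal J}$ and ${\cal K}$ becomes~\eqref{Omstab}, namely $3G_+^{-1}\left[(1-w)(1-v^2)+(1+w)(1-p_\mathrm{max})\,v^2\right]$, which remains positive even at extreme tilt $v^2=1$ as long as $p_\mathrm{max}<1$. The conspiracy you worry about is therefore confined to the Taub points $p_\mathrm{max}=1$, whose avoidance is exactly the content of assumption~\eqref{positivity condition} via Lemma~\ref{Non-Taub condition}. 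What remains genuinely open --- for the paper as for you --- is controlling the recurrent Mixmaster excursions toward those points in the full nonlinear dynamics, and that is the honest reason the statement stands as a conjecture rather than a theorem.
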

\noindent where the total Hubble-normalized energy-density of
the source is given by
\be \Omega_\mathrm{tot} = \sum_i \Omega_{(i)} . \ee
Note that $\Omega_\mathrm{tot}=0$ implies
$\Omega_{(i)}=0\,\forall i$, since $\tilde{\rho}_{(i)}\geq 0$,
and from this it follows that the entire total
Hubble-normalized stress-energy tensor is zero. Hence `the
matter does not matter' conjecture (2) asserts that the
Hubble-normalized stress-energy tensor asymptotically
approaches zero toward the singularity. As pointed out by BKL
themselves in the context of one fluid, this is not to be
expected for all equations of state, e.g., not for a stiff
fluid.

\section{Past stability and instability on the silent boundary}\label{paststab}

To proceed with step 1 we first give the equations on the
silent boundary, which are obtained by restricting the full
system of equations, \eqref{devoleq}, \eqref{dconstreq},
and~\eqref{perf} in Appendix~\ref{app}, to the silent boundary
invariant subspace, $({\cal M}_\alpha,\,W_\alpha,\,
\Udot_\alpha,\, r_\alpha,\, E_{\alpha}{}^{i}) = 0$. As
discussed in Appendix~\ref{invbound}, the equations on the
silent boundary are the same as in the spatially homogeneous
case, and hence all results in this section also pertain to
these models.

\subsection{Equations on the silent boundary}\label{dynsil}

Instead of the peculiar 3-velocity $v^\alpha$ we find it useful
to introduce $v\geq 0$ and the unit vector
$c^\alpha=v^\alpha/v$ as variables. This leads to the following
state vector on the silent boundary:
\begin{equation}\label{statevec}
{\bf S} = (\Sigma_{\alpha\beta},\,A_\alpha,\,N^{\alpha\beta}) \oplus
(\Omega_{(1)},\,v_{(1)},\,c^\alpha_{(1)}) \oplus ... \oplus
(\Omega_{(I)},\,v_{(I)},\,c^\alpha_{(I)}) .
\end{equation}
Note that we have not included $R_\alpha$ in the state vector
since there exists no evolution equation for $R_\alpha$, which
is due to that $R_\alpha$ represents the freedom to rotate the
spatial frame (nor ${\cal M}$, which represents the freedom to
reparameterize the reference timelines). On the silent boundary
we have the following evolution equations and constraints that
govern the dynamics of ${\bf S}$.

\vspace*{2mm} \noindent {\em Evolution equations}:
\begin{subequations}\label{devoleqsil}
\begin{align}
\parb_0\, \Sigma_{\alpha\beta} &= -(2-q)\Sigma_{\alpha\beta} +
2\epsilon^{\gamma\delta}{}_{\la \alpha}\,\Sigma_{\beta\ra
\delta}\,R_\gamma - \,{}^3\!{\cal S}_{\alpha\beta} +
3\Pi_{\alpha\beta},\label{sigsil}\\
\parb_0\,A_\alpha &= F_\alpha{}^\beta\, A_\beta,\label{Asil}\\
\parb_0\, N^{\alpha\beta} &=
(3q\delta_\gamma{}^{(\alpha} - 2F_\gamma{}^{(\alpha}) N^{\beta )\gamma},\label{Nsil}\\
\parb_0\,\Omega &= (2q - 1 - 3w)\,\Omega + [(3w-1)\,v_\alpha -
\Sigma_{\alpha\beta}\,v^\beta + 2A_\alpha]\,Q^\alpha,\label{ompfsil}\\
\parb_0 v &= \bar{G}_-^{-1}\,(1-v^2)\,
\left[3c_s^2 - 1 -2\,c_s^2\,A^\beta\,c_\beta\,v -
\Sigma_{\alpha\beta}\,c^\alpha c^\beta \right] v,\label{pec}\\
\parb_0 c_{\alpha} &= - [\delta_\alpha{}^\beta - c_\alpha c^\beta]
[\Sigma_\beta{}^\gamma\,c_{\gamma} + v\,A_\beta +
\epsilon_{\beta}{}^{\gamma\delta}\,(R_{\delta} +
v\,N_{\delta}{}^{\nu}\,c_{\nu})\,c_{\gamma}].  \label{cpec}
\end{align}
\end{subequations}

\vspace*{2mm} \noindent {\em Constraint equations}:
\begin{subequations}\label{dconstreqsil}
\begin{align}
0 &= 1 - \Sigma^2 - \Omega_k - \Omega,\label{dGausssil}\\
0 &= (3\delta_\alpha{}^\gamma\,A_\beta +
\epsilon_{\alpha\delta}{}^{\gamma}
\,N^\delta{}_\beta)\,\Sigma^\beta{}_\gamma - 3Q_\alpha,\label{dCodazzisil}\\
0 &= A_\beta\, N^\beta{}_\alpha,
\end{align}
\end{subequations}
where
\begin{subequations}\label{threecurvsil}
\begin{alignat}{2}
F_{\alpha}{}^{\beta} &=  q\,\delta_{\alpha}{}^{\beta} -
\Sigma_{\alpha}{}^{\beta} - \epsilon_{\alpha}{}^{\beta}{}
_{\gamma}\,(W^{\gamma}+R^{\gamma}), & \qquad
q &= 2\Sigma^{2} + \textfrac{1}{2}(\Om+3P) ,\label{qsil}\\
{}^3\!{\cal S}_{\alpha\beta} &= B_{\la \alpha\beta \ra} +
2\epsilon^{\gamma\delta}{}_{\la
\alpha}\,N_{\beta\ra\delta}\,A_\gamma, &\qquad B_{\alpha\beta}
&= 2 N_{\alpha\gamma}\,N^\gamma{}_\beta -
N^\gamma{}_\gamma\,N_{\alpha\beta},\\
{}^3\!{\cal R} &= -\textfrac{1}{2}B^\alpha{}_\alpha - 6A^2,
&\qquad \Omega_k &= -\textfrac{1}{6}\,{}^3\!{\cal R},\\
\bar{G}_- &= 1 - c_s^2\,v^2, & \qquad c_s^2 &=
\frac{d\tilde{p}}{d\tilde{\rho}} ,
\end{alignat}
\end{subequations}
where $\Sigma^2 =
\frac{1}{6}\Sigma_{\alpha\beta}\Sigma^{\alpha\beta}$, and where
$c_s^2$ can be interpreted as the speed of sound when
non-negative. Note that it is the complete
stress-energy-momentum objects that appear in~\eqref{sigsil},
\eqref{dGausssil}, \eqref{dCodazzisil}, and~\eqref{qsil}, while
the perfect fluid equations~\eqref{ompfsil} and~\eqref{pec},
\eqref{cpec} describe the dynamics of an individual perfect
fluid component, where we have dropped the index $(i)$ to avoid
cluttered notation. To obtain the perfect fluid equations we
have assumed that the Hubble-normalized interactions between
the different fluids are asymptotically zero, see
Appendix~\ref{app}. It follows from~\eqref{pec} that $v=0$ is
an invariant subset and so is $v=1$ when $c_s^2\neq 1$, i.e.,
when the equation of state of the fluid is not stiff.
Remarkably, $w$ does not appear in the peculiar velocity
equations~\eqref{pec} and \eqref{cpec}, nor do $\Omega$ and $q$
---the equation of state enters via $c_s^2$ only, and thus a
general barotropic equation of state leads to formally the same
expressions as that of a linear equation of state! However, in
general $c_s^2$ is a function of a suitable matter variable,
e.g. $c_s^2(\tilde{\rho})$, while $c_s^2=w=const$ in the linear
case. Moreover, the equation for the peculiar velocity
direction $c_\alpha$, i.e.~\eqref{cpec}, contains neither
$c_s^2$ nor $w$, i.e., it contains no \textit{direct} coupling
to the equation of state at all!

For completeness we here give the evolution equation for the
peculiar velocity $v_\alpha$ on the silent boundary:
\begin{align}
\parb_0 v_{\alpha} &= \bar{G}_-^{-1}\,\left[ (1-v^2)(3c_s^2 - 1 -
c_s^2\,A^\beta\,v_\beta) + (1-c_s^2)(A^\beta +
\Sigma_\gamma{}^\beta\,v^\gamma)\,v_\beta \right] v_\alpha
\nonumber \\
& \quad - [\Sigma_\alpha{}^\beta +
\epsilon_\alpha{}^{\beta\gamma}\,(R_\gamma +
N_\gamma{}^\delta\,v_\delta)]\,v_\beta - A_\alpha\,v^2 .\label{pecsil}
\end{align}
It is of interest to also give the evolution equations for
$\rho$ and $\tilde{\rho}$, for a fluid component, on the silent
boundary (i.e., let $(E_{\alpha}{}^{i}, {\cal M}_\alpha,
W_\alpha, \Udot_\alpha, r_\alpha) = 0$ in the equations for
these objects):
\begin{subequations}\label{energydensevol}
\begin{align}
\parb_0\, ({\rm ln}\,\rho) &= -(1+w) G_+^{-1}[3  - 2A_{\alpha}
v^\alpha + (v^2 + \Sigma_{\alpha\beta}\, v^\alpha v^\beta)],
\label{rhosil}\\
\parb_0 \, ({\rm ln}\, \tilde{\rho}) &= -(1+w)\bar{G}_-^{-1}
[3 - 2A_{\alpha}v^{\alpha} - (v^2 + \Sigma_{\alpha\beta}\, v^{\alpha}
v^{\beta})],\label{tilderhosil}
\end{align}
\end{subequations}
where we again have dropped the index $(i)$ in~\eqref{pecsil}
and~\eqref{energydensevol} to avoid cluttered notation.

\subsection{Past evolution on the silent boundary}\label{paststabsil}

\begin{proposition}\label{silentpastattractor}
The past asymptotic limit in a regime where $H>0$ resides on
the type $I-VII$ part of the silent boundary if the strong
energy condition is asymptotically fulfilled.
\end{proposition}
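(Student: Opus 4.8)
The plan is to exhibit a monotone function on the silent boundary whose asymptotic behaviour forces $\det N_{\alpha\beta}\to 0$. The point is that, within the data $(A_\alpha,N^{\alpha\beta})$ subject to the constraint $A_\beta N^\beta{}_\alpha=0$, the Bianchi types $I$--$VII$ (both the class A types $I,II,VI_0,VII_0$ and the class B types $III,IV,V,VI_h,VII_h$) are characterised precisely by $N_{\alpha\beta}$ having at least one vanishing eigenvalue, i.e.\ $\det N_{\alpha\beta}=0$, whereas types $VIII$ and $IX$ are exactly those with $\det N_{\alpha\beta}\neq 0$. So the assertion ``the past limit resides on the type $I$--$VII$ part'' is equivalent to $\det N_{\alpha\beta}\to 0$. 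First I would compute the evolution of $\det N_{\alpha\beta}$ from \eqref{Nsil}. Since the antisymmetric rotation terms in $F_\alpha{}^\beta$ drop out of the trace and $\Sigma_{\alpha\beta}$ is trace-free, one has $F_\alpha{}^\alpha=3q$, and a short calculation yields the clean identity
\begin{equation}
\parb_0\ln|\det N_{\alpha\beta}| = 3q ,
\end{equation}
which in a frame diagonalising $N$ and $\Sigma$ simultaneously follows at once from $\parb_0\ln|N_\alpha|=q+2\Sigma_\alpha$ summed over $\alpha$. Note that $3q=\sum_\alpha(q-\Sigma_\alpha)$ is exactly the sum of the eigenvalues entering the positivity condition~\eqref{positivity condition}, which corroborates the mechanism.

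Next I would feed in the strong energy condition, which in the Hubble-normalized source reads $\Omega+3P\geq 0$ in the asymptotic vicinity of the singularity. By \eqref{qsil} the deceleration scalar then obeys $q=2\Sigma^2+\textfrac{1}{2}(\Omega+3P)\geq 2\Sigma^2\geq 0$, so in the regime where $H>0$ (fixing the past as $x^0\to-\infty$) the function $\ln|\det N_{\alpha\beta}|$ is non-increasing toward the past. Hence $|\det N_{\alpha\beta}|$ admits a limit $L\in[0,\infty)$ along any orbit approaching the silent boundary, and it remains to rule out $L>0$ for generic orbits.

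For this I would apply a standard monotonicity (LaSalle-type) argument to $V=\ln|\det N_{\alpha\beta}|$: the $\alpha$-limit set of such an orbit lies either in the boundary stratum $\{\det N_{\alpha\beta}=0\}$, which is precisely the type $I$--$VII$ part, or in a level set on which $V$ is constant and therefore $\parb_0 V=3q=0$. Under the strong energy condition $q=0$ forces both $\Sigma_{\alpha\beta}=0$ and $\Omega+3P=0$ throughout the limit set. I would then show that this invariant subset, intersected with $\{\det N_{\alpha\beta}\neq 0\}$, collapses to the isotropic type $IX$ (Einstein-static-type) configuration: invariance of $\Sigma_{\alpha\beta}=0$ in \eqref{sigsil} enforces ${}^3\!{\cal S}_{\alpha\beta}=3\Pi_{\alpha\beta}$ and hence vanishing trace-free spatial curvature together with a comoving source, pinning the geometry to round type $IX$. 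This is an isolated, non-generic set of measure zero, excluded from the past attractor $\mathcal{A}^-$ by definition, so $L=0$ and $\det N_{\alpha\beta}\to 0$ generically, giving the claim.

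The main obstacle I anticipate is exactly this last step: upgrading ``the $\alpha$-limit set lies in $\{q=0\}$'' to ``it lies in $\{\det N_{\alpha\beta}=0\}$'' for all but a measure-zero set of data. This needs (i) the characterisation of $\{\Sigma_{\alpha\beta}=0,\ \Omega+3P=0,\ \det N_{\alpha\beta}\neq 0\}$ as the round type $IX$ fixed point, and (ii) a verification that this fixed point is a saddle whose past basin has measure zero, so that it is discarded by the definition of $\mathcal{A}^-$. A cleaner route, if one assumes the strong energy condition with strict inequality away from a measure-zero set, is to observe that then $q>0$ generically, so $V$ strictly decreases and cannot converge to a finite limit along a non-trivial orbit, immediately forcing $L=0$ and hence the result.
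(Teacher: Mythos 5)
Your core argument is the same as the paper's: the identity $\parb_0\,\mathrm{det}(N_{\alpha\beta}) = 3q\,\mathrm{det}(N_{\alpha\beta})$ on the silent boundary, the strong energy condition giving $q\geq 0$ via \eqref{qsil}, monotone decay of $|\mathrm{det}(N_{\alpha\beta})|$ toward the past, and the identification of the type $I$--$VII$ part with $\{\mathrm{det}(N_{\alpha\beta})=0\}$ (using the constraint $A_\beta N^\beta{}_\alpha=0$ for class B). Where you genuinely diverge is the degenerate locus $q=0$, i.e.\ $\Sigma^2=0$ and $\Omega_\mathrm{tot}+3P_\mathrm{tot}=0$: the paper disposes of it by computing $\parb^2_0\,\mathrm{det}(N_{\alpha\beta})|_{q=0}=0$ and $\parb^3_0\,\mathrm{det}(N_{\alpha\beta})|_{q=0}=2\,[{}^3\!{\cal S}^{\gamma\delta}\,{}^3\!{\cal S}_{\gamma\delta}]\,\mathrm{det}(N_{\alpha\beta})$ together with the assertion that ${}^3\!{\cal S}^{\gamma\delta}\,{}^3\!{\cal S}_{\gamma\delta}>0$ whenever $\mathrm{det}(N_{\alpha\beta})\neq 0$, whereas you run a LaSalle/monotonicity-principle argument and then try to exclude the invariant subset of $\{q=0,\,\mathrm{det}(N_{\alpha\beta})\neq0\}$ as non-generic via Milnor's attractor definition. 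Notably, your route treats a case the paper's proof actually overlooks: for $\mathrm{det}(N_{\alpha\beta})\neq 0$ the constraint forces $A_\alpha=0$, and then $N_{\alpha\beta}\propto\delta_{\alpha\beta}$ gives $B_{\alpha\beta}\propto\delta_{\alpha\beta}$ and hence ${}^3\!{\cal S}_{\alpha\beta}=0$, so the paper's positivity assertion fails exactly on the isotropic type IX configurations you single out; this is where the degenerate dynamics (self-similar coasting closed FRW states with $\Omega_\mathrm{tot}+3P_\mathrm{tot}=0$) really lives. Two caveats on your side, though. First, your reduction of the invariant set to round type IX needs completing: invariance of $\Sigma_{\alpha\beta}=0$ only gives ${}^3\!{\cal S}_{\alpha\beta}=3\Pi_{\alpha\beta}$, and to split this into ${}^3\!{\cal S}_{\alpha\beta}=0=\Pi_{\alpha\beta}$ you should invoke the Codazzi constraint \eqref{dCodazzisil}, which at $\Sigma_{\alpha\beta}=0$ forces $Q_\alpha=0$ (with some extra care when several tilted fluids contribute to $\Pi_{\alpha\beta}$); second, your step (ii)---that this fixed-point set attracts at most a measure-zero set toward the past---is asserted, not proved, so your argument is incomplete there, albeit no less complete than the paper's own treatment of the same locus. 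Finally, your closing ``cleaner route'' is misstated: strict monotonicity of $V=\ln|\mathrm{det}(N_{\alpha\beta})|$ does not by itself forbid convergence to a finite $L>0$; what rules that out is precisely your LaSalle step (constancy of $V$ on the $\alpha$-limit set plus invariance forces $q=0$ there, contradicting strict SEC on the limit set).
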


\begin{proof}
We have
\be\label{detNeq}
\parb_0\, \mathrm{det}(N_{\alpha\beta}) = 3q\,
\mathrm{det}(N_{\alpha\beta}) , \ee
on the silent boundary, where
\be q = 2\Sigma^2 + \textfrac{1}{2}(\Omega_\mathrm{tot} +
3P_\mathrm{tot}) .\ee
%
If the strong energy condition $\Omega_\mathrm{tot} +
3P_\mathrm{tot}\geq 0$ holds,\footnote{It is likely that there
exist generic solutions with only a positive cosmological
constant as source (with $\Omega_\mathrm{tot} +
3P_\mathrm{tot}=-2\Omega_\mathrm{tot}< 0$) that asymptotically
behaves as generic vacuum solutions with asymptotically silent
and local past singularities, and hence it should be possible
to relax the condition $\Omega_\mathrm{tot} +
3P_\mathrm{tot}\geq 0$, but for simplicity we refrain from
doing this.} then $q\geq0$, and $q=0$ only when
$\Omega_\mathrm{tot}+ 3P_\mathrm{tot}=0$ and $\Sigma^2=0$, but
then
\begin{equation}
\parb^2_0 \, {\rm det}(N_{\alpha\beta})|_{q=0} = 0, \quad
\parb^3_0 \, {\rm det}(N_{\alpha\beta})|_{q=0} = 2 \,[{}^3\!{\cal S}^{\gamma\delta}
\,{}^3\!{\cal S}_{\gamma\delta}] \, {\rm det}(N_{\alpha\beta}),
\end{equation}
where ${}^3\!{\cal S}^{\gamma\delta} \,{}^3\!{\cal
S}_{\gamma\delta}>0$ when ${\rm det}(N_{\alpha\beta})\neq 0$;
it follows that
\be\label{detnzero} {\rm det}(N_{\alpha\beta}) \rightarrow 0
\ee
toward the past singularity. Thus the past asymptotic limit of
the dynamics must reside on the
$\mathrm{det}(N_{\alpha\beta})=0$ subset, i.e., the Bianchi
type I-VII part of the silent boundary.
\end{proof}

\begin{corollary}\label{sigmabound}
\begin{displaymath}
\lim_{x^0 \to -\infty} |\Sigma_{\alpha\beta}| \leq 2.
\end{displaymath}
\end{corollary}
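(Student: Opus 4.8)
The plan is to read the bound straight off the Gauss constraint \eqref{dGausssil}, which on the silent boundary reads $\Sigma^2 = 1 - \Omega_k - \Omega$. Since $\Omega = \Omega_\mathrm{tot} = \sum_i \Omega_{(i)} \ge 0$ (because $\tilde{\rho}_{(i)} \ge 0$, so that each $\Omega_{(i)}\ge0$), the whole problem reduces to showing that $\Omega_k \ge 0$ in the past limit, which would force $\Sigma^2 \le 1$. By Proposition~\ref{silentpastattractor} that limit resides on the type $I$–$VII$ part of the silent boundary, i.e.\ on the subset $\det(N_{\alpha\beta}) = 0$, so it suffices to establish the sign of $\Omega_k$ on that subset.

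First I would prove the key inequality $\Omega_k \ge 0$ on $\det(N_{\alpha\beta}) = 0$. From \eqref{threecurvsil} one has $\Omega_k = -\textfrac{1}{6}\,{}^3\!{\cal R} = \textfrac{1}{12}B^\alpha{}_\alpha + A^2$, with $B^\alpha{}_\alpha = 2N_{\alpha\beta}N^{\alpha\beta} - (N^\gamma{}_\gamma)^2$. Diagonalizing the symmetric matrix $N^{\alpha\beta}$ with eigenvalues $n_1,n_2,n_3$ gives $B^\alpha{}_\alpha = n_1^2+n_2^2+n_3^2 - 2(n_1n_2+n_1n_3+n_2n_3)$; imposing $\det(N_{\alpha\beta})=0$, say $n_3 = 0$, collapses this to $B^\alpha{}_\alpha = (n_1-n_2)^2 \ge 0$, so that $\Omega_k = \textfrac{1}{12}(n_1-n_2)^2 + A^2 \ge 0$ (the class~$B$ contribution $A^2$ only helps, and the constraint $A_\beta N^\beta{}_\alpha = 0$ is consistent with a vanishing eigenvalue). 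Hence $\Omega_k \ge 0$ asymptotically, and the Gauss constraint yields $\Sigma^2 \le 1$ in the past limit.

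Finally I would convert the scalar bound $\Sigma^2 = \textfrac{1}{6}\Sigma_{\alpha\beta}\Sigma^{\alpha\beta} \le 1$ into the stated bound on $\Sigma_{\alpha\beta}$. Since $\Sigma_{\alpha\beta}$ is symmetric and trace-free, its eigenvalues $\sigma_1,\sigma_2,\sigma_3$ satisfy $\sigma_1+\sigma_2+\sigma_3 = 0$ and $\sigma_1^2+\sigma_2^2+\sigma_3^2 = 6\Sigma^2 \le 6$. Applying $(\sigma_2+\sigma_3)^2 \le 2(\sigma_2^2+\sigma_3^2)$ to any chosen eigenvalue gives $\sigma_1^2 \le 2(6\Sigma^2 - \sigma_1^2)$, i.e.\ $|\sigma_1| \le 2\Sigma \le 2$; as the spectral radius of a symmetric matrix dominates each of its entries, this gives $|\Sigma_{\alpha\beta}| \le 2$ in the limit, as claimed. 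The only genuinely substantive step is the sign of $\Omega_k$ on the $\det(N_{\alpha\beta})=0$ subset—everything else is the Gauss constraint plus elementary linear algebra. I would also want to confirm that $\Omega \ge 0$ holds for all admissible equations of state (equivalently $G_+^{(i)}\ge0$) and that taking the limit commutes with these algebraic inequalities, both of which are immediate under the standing assumptions inherited from Proposition~\ref{silentpastattractor}.
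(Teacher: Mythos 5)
Your proposal is correct and follows essentially the same route as the paper's own proof: invoke Proposition~\ref{silentpastattractor} to restrict to $\det(N_{\alpha\beta})=0$, deduce $\Omega_k\geq 0$ there, combine with the Gauss constraint \eqref{dGausssil} to get $\Sigma^2\leq 1$, and convert that into the componentwise bound $|\Sigma_{\alpha\beta}|\leq 2$. The only difference is that you spell out the two steps the paper leaves implicit --- the eigenvalue computation showing $B^\alpha{}_\alpha=(n_1-n_2)^2\geq0$ when one eigenvalue of $N_{\alpha\beta}$ vanishes, and the trace-free linear algebra giving $|\sigma_1|\leq 2\Sigma$ --- both of which are done correctly.
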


\begin{proof}
$\mathrm{det}(N_{\alpha\beta})=0$ implies that $\Omega_k\geq
0$, which, together with the Gauss constraint
$1-\Sigma^2=\Omega_k + \Omega\geq  0$, yields
\begin{equation}
\Sigma^2\leq 1 \qquad \Rightarrow \qquad -2\leq
\Sigma_{\alpha\beta}\leq 2.
\end{equation}
\end{proof}

On the silent boundary
\be
\parb_0\, A^2 = 2(q\delta_\alpha{}^\beta -
\Sigma_\alpha{}^\beta)A^\alpha\,A_\beta , \ee
and hence, assuming the validity of `the locality conjecture'
(1), and thereby that Eq. \eqref{positivity condition} holds,
\be A_\alpha \rightarrow 0 \ee
toward the singularity, i.e., the past attractor has to reside
on the subset that consists of the union of the class A
($A_\alpha=0$) type I, II, VI$_0$, and VII$_0$ subsets on the
silent boundary.

On the class A part of the silent boundary~\eqref{pec} reduces
to
\be \label{veqclassA}
\parb_0 v = \bar{G}_-^{-1}\,(1-v^2)\,
(3c_s^2 - 1 - \Sigma_{\alpha\beta}\,c^\alpha c^\beta )\, v. \ee
Corollary~\eqref{sigmabound} and Eq.~\eqref{veqclassA} indicate
that there is a bifurcation in the dynamics of the particular
velocities of the fluids when $c_s^2=1$. We will therefore
below distinguish between three main cases, based on the
asymptotic properties of the equations of state:
\begin{itemize}
\item[(i)] There exists at least one fluid with an
    asymptotically ultra-stiff equation of state, i.e.,
    $c_s^2>1, w>1$ when $x^0 \rightarrow -\infty$.
\item[(ii)] All perfect fluids have asymptotic equations of
    state such that $c_s^2<1$, $w<1$ when $x^0 \rightarrow
    -\infty$, except for at least one fluid which has an
    asymptotically stiff equation of state, i.e., $c_s=1,
    w=1$ when $x^0 \rightarrow -\infty$.
\item[(iii)] All perfect fluids have asymptotic equations
    of state such that $c_s^2<1$, $w<1$ when $x^0
    \rightarrow -\infty$, i.e., all equations of state are
    softer than a stiff equation of state asymptotically.
\end{itemize}
We will denote the three cases as the (asymptotically)
\textit{ultra-stiff\/}, \textit{stiff\/}, and \textit{soft\/}
cases, respectively; as we will see, their past dynamics is
associated with an increasingly complicated and challenging
analysis.\footnote{The physical status of an ultra-stiff
equation of state can be questioned since $c_s$ is larger than
the speed of light, however, it is of interest for structural
stability reasons to study sources with fluids with such an
equation of state, moreover, in~\cite{collim05}, and references
therein, the study of problems associated with ultra-stiff
equations of state is motivated by considering broader
theoretical contexts than general relativity.}

To proceed we prove, under assumption~\eqref{positivity
condition}, the following lemma:

\begin{lemma}\label{Non-Taub condition}
\begin{equation}
\int_{\tilde{x}^0}^{-\infty}{\rm eig} (2\delta_\alpha{}^\beta -
\Sigma_\alpha{}^\beta){\cal M}dx^0 = -\infty.
\end{equation}
\end{lemma}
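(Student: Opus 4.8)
The plan is to exploit the fact that the two matrices in Lemma~\ref{Non-Taub condition} and in the positivity condition~\eqref{positivity condition} differ only by a multiple of the identity. Since $(2\delta_\alpha{}^\beta - \Sigma_\alpha{}^\beta) - (q\delta_\alpha{}^\beta - \Sigma_\alpha{}^\beta) = (2-q)\delta_\alpha{}^\beta$, the two spectra are related by a uniform shift,
\[
{\rm eig}(2\delta_\alpha{}^\beta - \Sigma_\alpha{}^\beta) = {\rm eig}(q\delta_\alpha{}^\beta - \Sigma_\alpha{}^\beta) + (2-q).
\]
Hence it suffices to control the sign of $2-q$ along the timeline and then transfer the assumed divergence~\eqref{positivity condition} to the shifted integrand by a comparison argument.

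First I would show that $q\le 2$ asymptotically. Solving the Gauss constraint~\eqref{dGausssil} for $\Sigma^2$ and substituting into the expression~\eqref{qsil} for $q$ gives the clean identity
\[
2 - q = 2\Omega_k + \textfrac{3}{2}\,(\Omega_\mathrm{tot} - P_\mathrm{tot}).
\]
By Proposition~\ref{silentpastattractor} and Corollary~\ref{sigmabound} the past limit resides on the ${\rm det}(N_{\alpha\beta})=0$ part of the silent boundary, where $\Omega_k\ge 0$; and a short computation from the fluid relations~\eqref{pfrel} shows that every component with $w_{(i)}\le 1$ obeys $\Omega_{(i)}\ge P_{(i)}$, so that summing gives $\Omega_\mathrm{tot}\ge P_\mathrm{tot}$. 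Therefore $2-q\ge 0$, i.e. $q\le 2$, for all sufficiently negative $x^0$.

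With $q\le 2$ on the asymptotic tail, the shift relation gives ${\rm eig}(2\delta_\alpha{}^\beta - \Sigma_\alpha{}^\beta)\ge {\rm eig}(q\delta_\alpha{}^\beta - \Sigma_\alpha{}^\beta)$ there, and since ${\cal M}$ is positive and bounded away from $0$ and $\infty$ by~\eqref{condasila}, the same inequality holds after multiplying by ${\cal M}$. Keeping track of the orientation of the integration toward $x^0\to-\infty$ (so that the reversed integral of the larger integrand inherits the divergence of the smaller one, up to a finite contribution from the bounded non-tail region), the divergence~\eqref{positivity condition} propagates to yield
\[
\int_{\tilde{x}^0}^{-\infty}{\rm eig}(2\delta_\alpha{}^\beta - \Sigma_\alpha{}^\beta)\,{\cal M}\,dx^0 = -\infty,
\]
as claimed.

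The main obstacle is the sign of $2-q$, and more precisely of $\Omega_\mathrm{tot}-P_\mathrm{tot}$. This is under control in the soft and stiff cases, where $w_{(i)}\le 1$ asymptotically forces $\Omega_{(i)}\ge P_{(i)}$ via~\eqref{pfrel}, but it breaks down for an ultra-stiff fluid ($w>1$), where $q>2$ is possible and the comparison with~\eqref{positivity condition} no longer transfers the divergence; that case would require a separate argument. A secondary subtlety, which motivates the name of the lemma, is the behaviour at the Taub points: there an eigenvalue $2-s_i$ of $2\delta_\alpha{}^\beta-\Sigma_\alpha{}^\beta$ touches zero, so the integrand admits no uniform positive lower bound, and it is precisely the comparison with the divergent $q$-integral---rather than any instantaneous estimate---that secures the conclusion.
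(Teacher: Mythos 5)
For the stiff and soft cases (ii) and (iii) your argument coincides with the paper's: the identity $2-q=2\Omega_k+\textfrac{3}{2}(\Omega_{\mathrm{tot}}-P_{\mathrm{tot}})$ obtained from \eqref{dGausssil} and \eqref{qsil}, the bound $\Omega_{\mathrm{tot}}\geq P_{\mathrm{tot}}$ from \eqref{pfrel} (valid since all $w_{(i)}\leq 1$ there), the positivity of $\Omega_k$ on the $\det(N_{\alpha\beta})=0$ part of the silent boundary, and then the uniform spectral shift together with the reversed orientation of the integral, which transfers the divergence of \eqref{positivity condition}. That part is sound and is exactly the computation in the paper's proof.

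The genuine gap is the ultra-stiff case (i), which you correctly flag as needing a separate argument but never supply. The lemma is stated for all three cases and is used beyond cases (ii)--(iii), e.g.\ in the discussion of $\tilde{\rho}_{(i)}\rightarrow\infty$ in Section~\ref{stabfull}; and in case (i) your mechanism really does fail, since asymptotically $q\rightarrow\textfrac{1}{2}(1+3w_{\mathrm{ultra-stiff}})>2$, so no comparison with \eqref{positivity condition} is available. The paper closes this case with an argument that needs no comparison at all: in case (i) one has $\Sigma_{\alpha\beta}\rightarrow 0$. This is established in Proposition~\ref{case1}, whose proof (monotonicity of $v_{\mathrm{ultra-stiff}}$ from Corollary~\ref{sigmabound} and Eq.~\eqref{veqclassA}, the density-ratio comparison \eqref{rhocomp}, and Eq.~\eqref{omultra-stiff}) nowhere invokes the lemma, so the forward reference is not circular. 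Once $\Sigma_{\alpha\beta}\rightarrow 0$, the eigenvalues of $2\delta_\alpha{}^\beta-\Sigma_\alpha{}^\beta$ tend to $2$, and since ${\cal M}\geq C_1>0$ by \eqref{condasila}, the integrand is eventually bounded below by a positive constant; integrating toward $x^0\rightarrow-\infty$ then gives $-\infty$ directly. Adding this observation would complete your proof and make it equivalent to the paper's.
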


\begin{proof}
Case (i): As shown next, $\Sigma_{\alpha\beta} \to 0$ in case
(i), and hence the integral diverges. Case (ii) and (iii):
Eq.~\eqref{pfrel} yields $\Omega_{tot} \geq P_{tot}$, which
leads to the inequality $q = 2\Sigma^{2} +
\textfrac{1}{2}(\Omega_{tot} + 3P_{tot}) = 2 -
\textfrac{3}{2}(\Omega_{tot} - P_{tot}) -2\Omega_{k} \leq 2$.
This combined with Eq.~\eqref{positivity condition} gives
\begin{eqnarray}
&&\int_{\tilde{x}^0 }^{-\infty} {\rm eig}(2\delta_\alpha{}^\beta -
\Sigma_\alpha{}^\beta){\cal M}dx^0 = \int_{\tilde{x}^0}
^{-\infty}{\rm eig}\left[(2-q)\delta_\alpha{}^\beta + (q \delta_\alpha
{}^\beta - \Sigma_\alpha{}^\beta)\right]{\cal M}dx^0 \leq \nonumber\\
&&\int_{\tilde{x}^0}^{-\infty}{\rm eig}(q \delta_\alpha{}^\beta -
\Sigma_\alpha{}^\beta){\cal M}dx^0 = -\infty.
\end{eqnarray}
\end{proof}

\begin{proposition}\label{case1}
The past asymptotic state in case (i) is characterized by
\begin{equation}
\Omega_{\mathrm{ultra-stiff}} \to 1; \qquad (\Sigma_{\alpha\beta}
,\, N_{\alpha\beta},\, v_{\mathrm{ultra-stiff}},\, \Omega_{(i)})
\to 0,\ \forall i \neq {\scriptstyle \mathrm{ultra-stiff}}.
\end{equation}
\end{proposition}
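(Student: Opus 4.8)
The plan is to carry out everything on the class A, Bianchi type $I$--$VII$ part of the silent boundary, which the preceding material already singles out: Proposition~\ref{silentpastattractor} puts the past limit on $\det(N_{\alpha\beta})=0$, the $\parb_0 A^2$ argument following Corollary~\ref{sigmabound} forces $A_\alpha\to 0$, and Corollary~\ref{sigmabound} itself gives the uniform bound $\Sigma_{\alpha\beta}c^\alpha c^\beta\le 2$. I would split the proof into two stages: first the velocity claim $v_{\mathrm{ultra-stiff}}\to 0$, and then, feeding $v\to 0$ back in, the energy-density domination that produces $\Omega_{\mathrm{ultra-stiff}}\to 1$ together with $\Sigma_{\alpha\beta},\,N_{\alpha\beta},\,\Omega_{(i)}\to 0$ for $i\neq\mathrm{ultra-stiff}$.

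For the velocity I would read the sign off~\eqref{veqclassA}. As the fluid is asymptotically ultra-stiff, $c_s^2>1$, so $3c_s^2-1>2$, and the Corollary bound gives $3c_s^2-1-\Sigma_{\alpha\beta}c^\alpha c^\beta>0$ asymptotically. In the region $\bar{G}_-=1-c_s^2v^2>0$, i.e.\ $v<1/c_s$, the factors $\bar{G}_-^{-1}$, $(1-v^2)$ and the bracket are all positive, so $\parb_0 v>0$ and $v$ decreases monotonically to the invariant value $v=0$ as $x^0\to-\infty$. By~\eqref{pfrel} this also sends $Q^\alpha,\Pi_{\alpha\beta}\to 0$ and $P\to w\,\Omega$ for the ultra-stiff component.

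For the domination I would compare logarithmic growth rates toward the past using~\eqref{rhosil} (with $A_\alpha=0$). The ultra-stiff fluid, having $v\to 0$, obeys $\parb_0(\ln\rho_{\mathrm{ultra-stiff}})\to -3(1+w_{\mathrm{ultra-stiff}})$, while any other fluid satisfies $\parb_0(\ln\rho_{(j)})=-(1+w_j)(G_+^{(j)})^{-1}[3+v_j^2+\Sigma_{\alpha\beta}v_j^\alpha v_j^\beta]$. Using $|\Sigma_{\alpha\beta}|\le 2$, $v_j<1$ and $-1\le w_j<w_{\mathrm{ultra-stiff}}$ one checks that the bracketed growth factor stays strictly below $3(1+w_{\mathrm{ultra-stiff}})$, so $\parb_0\ln(\rho_{(j)}/\rho_{\mathrm{ultra-stiff}})>0$ and hence $\Omega_{(j)}/\Omega_{\mathrm{ultra-stiff}}\to 0$. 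The same comparison applies to the shear and spatial curvature, which on the type $I$--$VII$ subset grow like effective fluids with $w_{\mathrm{eff}}=1$ and $w_{\mathrm{eff}}=-\tfrac{1}{3}$, both softer than the ultra-stiff fluid, giving $\Sigma^2/\Omega_{\mathrm{ultra-stiff}}\to 0$ and $\Omega_k/\Omega_{\mathrm{ultra-stiff}}\to 0$. Dividing the Gauss constraint~\eqref{dGausssil} by $\Omega_{\mathrm{ultra-stiff}}$ and letting all these ratios vanish forces $\Omega_{\mathrm{ultra-stiff}}\to 1$; then $\Sigma^2\to 0$, $\Omega_k\to 0$, $\Omega_{(j)}\to 0$, and $\Omega_k\to 0$ with $A_\alpha\to 0$ collapses the geometry onto the type $I$ subset, i.e.\ $N_{\alpha\beta}\to 0$.

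The hard part is the velocity step, not the domination. The bracket in~\eqref{veqclassA} is positive, but $\bar{G}_-$ changes sign at $v=1/c_s<1$: in the supersonic region $v>1/c_s$ one has $\bar{G}_-<0$, hence $\parb_0 v<0$, and there $v\to 1$ rather than $0$ toward the past. The genuine obstacle is therefore to justify that the relevant past attractor lies in the subsonic basin $v<1/c_s$ --- e.g.\ by discarding $v>1/c_s$ as the regime in which $\bar{G}_-<0$ and the fluid sector loses hyperbolicity --- so that $v=0$ is the generic limit. Two minor points also need care: circularity between $\Sigma^2\to 0$ and the growth-rate estimates is avoided because the bound below $3(1+w_{\mathrm{ultra-stiff}})$ is uniform in $\Sigma_{\alpha\beta}$ (it uses only $|\Sigma_{\alpha\beta}|\le 2$); and one should assume a unique stiffest fluid, or read $\Omega_{\mathrm{ultra-stiff}}$ as the total density of all maximally stiff components, since two fluids sharing the largest $w$ with $v\to 0$ scale identically and both survive.
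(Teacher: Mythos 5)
Your strategy parallels the paper's own proof---the same sign analysis of \eqref{veqclassA} for the tilt, and a fluid growth-rate comparison that is algebraically equivalent to the paper's Eq.~\eqref{rhocomp}---but two of your geometric steps do not hold as stated. The fatal one is the last: ``$\Omega_k\to 0$ with $A_\alpha\to 0$ collapses the geometry onto the type I subset, i.e.\ $N_{\alpha\beta}\to 0$.'' This is false on the class A, $\det(N_{\alpha\beta})=0$ part of the silent boundary: the isotropic type VII$_0$ configurations $N_{\alpha\beta}=\mathrm{diag}(0,N,N)$, and cycle, have $\Omega_k=0$, $A_\alpha=0$ and $N\neq 0$ arbitrary, so vanishing Hubble-normalized spatial curvature does not put you in type I. The paper closes exactly this hole with a separate argument: at the candidate limit ($\Omega_{\text{ultra-stiff}}=1$, $\Sigma^2=0$) Eq.~\eqref{Nsil} gives $\parb_0\,N = qN = \tfrac{1}{2}(1+3w_{\text{ultra-stiff}})N$ with $q>0$, hence $N\to 0$ toward the past and the attractor lies on ${\cal F}$. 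Second, your route to $\Omega_{\text{ultra-stiff}}\to 1$ through ``shear behaves as an effective $w_{\mathrm{eff}}=1$ fluid, curvature as $w_{\mathrm{eff}}=-\tfrac{1}{3}$'' is only exact in type I: in types II, VI$_0$, VII$_0$ the shear is sourced by ${}^3\!{\cal S}_{\alpha\beta}$ in \eqref{sigsil} (and by the $\Pi_{\alpha\beta}$ of the tilted test fluids), so these are not valid pointwise growth rates and the claimed ratios $\Sigma^2/\Omega_{\text{ultra-stiff}}\to 0$, $\Omega_k/\Omega_{\text{ultra-stiff}}\to 0$ do not follow. The paper sidesteps this entirely: once $v_{\text{ultra-stiff}}\to 0$ and $\Omega_{(i)}\to 0$ are established, the Gauss constraint \eqref{dGausssil} and \eqref{qsil} turn \eqref{ompfsil} into the closed scalar equation \eqref{omultra-stiff}, $\parb_0\,\Omega = -[3(w-1)(1-\Omega)+4\Omega_k]\,\Omega$, whose right-hand side has a definite sign ($w>1$ and $\Omega_k\geq 0$ on type I--VII$_0$); monotonicity together with $\Omega\leq 1$ then forces $\Omega\to 1$ and $\Omega_k\to 0$, and $\Sigma^2\to 0$ follows from Gauss. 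You should replace your scaling heuristics by this monotonicity argument.

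On the step you call the hard part: your worry about the supersonic region is legitimate, but be aware that the paper does not resolve it either---its proof reads off the sign of \eqref{veqclassA} exactly as you do and says nothing about $v>1/c_s$, where $\bar{G}_-<0$ and the same positivity of the bracket drives $v\to 1$ toward the past. So the proposition, in the paper as in your proposal, implicitly restricts the ultra-stiff tilt to the region $\bar{G}_->0$ (toward the past the two regions are dynamically separated, since on both sides the flow runs away from the singular sonic surface $\bar{G}_-=0$). Flagging this is to your credit rather than a defect relative to the paper. Your final remark on several maximally stiff fluids likewise matches the paper's convention of lumping them into a single $\Omega_{\text{ultra-stiff}}$.
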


\begin{proof}
Corollary~\eqref{sigmabound} and Eq.~\eqref{veqclassA} give
that $\lim_{x^0 \to -\infty} v_{\mathrm{ultra-stiff}}=0$. On
the class A $v_{\mathrm{ultra-stiff}}=0$ boundary
Eq.~\eqref{ompfsil} yields
\begin{align}\label{rhocomp}
&
\parb_0\, {\rm ln} (\Omega_{(i)}/\Omega_{\text{ultra-stiff}}) = 3(w_{\text{ultra-stiff}}-1)\nonumber\\
& \qquad  + (G_+)_{(i)}^{-1}[3(1-w_{(i)})(1-v_{(i)}^2) +
(1+w_{(i)})(2\delta_{\alpha\beta} - \Sigma_{\alpha\beta})v^\alpha_{(i)}
v^\beta_{(i)} ] > 0,\end{align}
where the $(i)$:th fluid has a comparably asymptotic soft
equation of state. For simplicity we have assumed that the
ultrastiff fluid obeys an asymptotically linear ultra-stiff
equation of state such that
$w_{\text{ultra-stiff}}=\lim_{\tilde{\rho}_{\text{ultra-stiff}}\rightarrow\infty}
(w)$; in the case of several fluids with the same asymptotic
ultra-stiff asymptotic equation of state,
$\Omega_{\text{ultra-stiff}}$, represent their total
contributions. Since the r.h.s. of~\eqref{rhocomp} is strictly
positive it follows that
$\Omega_{(i)}/\Omega_{\text{ultra-stiff}}\rightarrow 0$ toward
the past, and since $\Omega_{\text{ultra-stiff}}$ is bounded,
because of the Gauss constraint $1-\Sigma^2 - \Omega_k -
\Omega_\mathrm{tot}=0$ and the non-negativity of the energy
densities and $\Omega_k$, this leads to that the ultra-stiff
fluid(s) dominates toward the singularity, and hence
$\Omega_{(i)} \rightarrow 0$; thus the attractor in the
ultra-stiff case (i) resides on the class A Bianchi type I --
VII$_0$ part of the silent boundary with
$v_{\text{ultra-stiff}}=0$, $\Omega_{(i)}= 0$, for all $i$
except for the $i$ associated with the ultra-stiff fluid(s),
subset. This leads to that~\eqref{ompfsil} asymptotically
yields
\be\label{omultra-stiff}
\parb_0 \Omega_{\text{ultra-stiff}} =
-[3(w_{\text{ultra-stiff}}-1)(1-\Omega_{\text{ultra-stiff}}) +
4\Omega_k]\,\Omega_{\text{ultra-stiff}} , \ee
and hence, due to that $\Omega_{\text{ultra-stiff}}\leq 1$,
asymptotically $\Omega_{\text{ultra-stiff}}=1$ and
$\Omega_k=0$, and thus, because of the Gauss constraint,
$\Sigma^2=0$. That $\Omega_k=0$ and $\Sigma^2=0$ yield that the
past attractor in the ultra-stiff case must reside on the
isotropic type I subset or the isotropic type VII$_0$ subset;
in the latter case we can choose a Fermi frame in which
$N_{\alpha\beta}=\text{diag}(0,N,N)$, or cycle, which yields
$\parb_0\,N = q N = \frac{1}{2}(1 + 3w_{\text{ultra-stiff}})N$,
and hence $N\rightarrow 0$, i.e., the past attractor is located
on the isotropic type I subset, which is a frame independent
statement; we will refer to the silent isotropic type I subset
as the silent \textit{Friedmann subset} ${\cal F}$.

The above arguments are easily generalized to the situation
when the most ultra-stiff equation(s) of state does not have a
limit, but a lower bound $w_{\text{ultra-stiff}}^->1$; one
still obtains that the past attractor resides on ${\cal F}$
with $\Omega_{\text{ultra-stiff}}=1, \,\Omega_{(i)}=0, \,
v_{\text{ultra-stiff}}=0, \, \Sigma^2=0$, even though $q$ has
no limit.
\end{proof}

\begin{proposition}\label{case2}
The past asymptotic state in case (ii) is characterized by
\begin{equation}
q \to 2,\qquad (N_{\alpha\beta},\, v_{\mathrm{stiff}},\, \Omega_{(i)})
\to 0,\ \forall i \neq {\scriptstyle \mathrm{stiff}}.
\end{equation}
\end{proposition}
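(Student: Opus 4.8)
The plan is to mirror the structure of Proposition~\ref{case1}, working on the class A part of the silent boundary (recall that $A_\alpha\to 0$ has already been established under~\eqref{positivity condition}), and to establish the four claimed limits in the order $v_{\mathrm{stiff}}\to 0$, then $\Omega_{(i)}\to 0$ for $i\neq\mathrm{stiff}$, then $N_{\alpha\beta}\to 0$, and finally $q\to 2$, feeding each conclusion into the next. First I would dispose of the stiff velocity. For a stiff fluid $c_s^2=1$, so $\bar{G}_-=1-v^2$ and $3c_s^2-1=2$, whence~\eqref{veqclassA} collapses to $\parb_0\,\ln v_{\mathrm{stiff}} = 2-\Sigma_{\alpha\beta}c^\alpha c^\beta = (2\delta_{\alpha\beta}-\Sigma_{\alpha\beta})c^\alpha c^\beta$. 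Since $c^\alpha$ is a unit vector this is a Rayleigh quotient of $2\delta_{\alpha\beta}-\Sigma_{\alpha\beta}$ and is bounded above by the largest eigenvalue of that matrix; integrating toward the past and invoking Lemma~\ref{Non-Taub condition}, whose content is precisely that the time integral of every eigenvalue of $2\delta_\alpha{}^\beta-\Sigma_\alpha{}^\beta$ diverges to $-\infty$, gives $\ln v_{\mathrm{stiff}}\to-\infty$, i.e.\ $v_{\mathrm{stiff}}\to 0$.

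Next, $\Omega_{(i)}\to 0$ for the soft fluids follows from the same comparison as in Proposition~\ref{case1}. Forming $\parb_0\,\ln(\Omega_{(i)}/\Omega_{\mathrm{stiff}})$ from~\eqref{ompfsil} and using $v_{\mathrm{stiff}}\to 0$, the $2q$ terms cancel in the difference and the stiff value $w_{\mathrm{stiff}}=1$ removes the remaining constant, leaving $(G_+^{(i)})^{-1}[3(1-w_{(i)})(1-v_{(i)}^2)+(1+w_{(i)})(2\delta_{\alpha\beta}-\Sigma_{\alpha\beta})v_{(i)}^\alpha v_{(i)}^\beta]$. This is strictly positive because $w_{(i)}<1$ (soft) and because $2\delta_{\alpha\beta}-\Sigma_{\alpha\beta}\geq 0$ (Corollary~\ref{sigmabound} bounds the eigenvalues of $\Sigma_{\alpha\beta}$ by $2$). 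Hence the ratio decreases monotonically to zero toward the past, and since the Gauss constraint~\eqref{dGausssil} keeps $\Omega_{\mathrm{stiff}}$ bounded, $\Omega_{(i)}\to 0$.

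The decisive step is $N_{\alpha\beta}\to 0$, which I would split in two. In a Fermi-propagated frame diagonalising $N_{\alpha\beta}$ one has $\parb_0\,\ln N_i = q+2\Sigma_i$, so $\parb_0\,\ln(N_iN_j)=2(q-\Sigma_k)$ with $k$ the remaining index; since $q-\Sigma_k$ is an eigenvalue of $q\delta_\alpha{}^\beta-\Sigma_\alpha{}^\beta$, condition~\eqref{positivity condition} forces every product $N_iN_j\to 0$. Together with $\mathrm{det}(N_{\alpha\beta})\to 0$ (Proposition~\ref{silentpastattractor}) and $A_\alpha\to 0$, this leaves at most one surviving curvature component (Bianchi type $II$, or type $I$). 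To remove a remaining type $II$ component I would use the monotonicity of the stiff energy density: with $v_{\mathrm{stiff}}\to 0$ and the soft $\Omega_{(i)}\to 0$, Eq.~\eqref{ompfsil} gives $\parb_0\,\ln\Omega_{\mathrm{stiff}}\to 2(q-2)=-4\Omega_k\leq 0$, where $q\to 2(\Sigma^2+\Omega_{\mathrm{stiff}})=2(1-\Omega_k)$ follows from $q=2\Sigma^2+\textfrac{1}{2}(\Omega_{\mathrm{tot}}+3P_{\mathrm{tot}})$, the stiff relation $P_{\mathrm{stiff}}\to\Omega_{\mathrm{stiff}}$, and the Gauss constraint. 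Thus $\Omega_{\mathrm{stiff}}$ is eventually monotone and bounded, so it converges; this makes $\Omega_k$ integrable along the timeline, and since $\Omega_k\geq 0$ (which holds once $\mathrm{det}(N_{\alpha\beta})=0$) with bounded $\parb_0\Omega_k$, a Barbalat-type argument yields $\Omega_k\to 0$. On the type $II$ subset $\Omega_k=\textfrac{1}{12}N_1^2$, so $\Omega_k\to 0$ forces the last component $N_1\to 0$, completing $N_{\alpha\beta}\to 0$; feeding this back gives $q=2(1-\Omega_k)\to 2$.

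The main obstacle is exactly this last elimination of the surviving type $II$ curvature. The products $N_iN_j$ and $\mathrm{det}(N_{\alpha\beta})$ decay for \emph{kinematic} reasons, directly from the positivity condition, but a single $N_i$ need not, and ruling out recurrent type $II$ curvature is the mathematical content of quiescence in the stiff case. The argument hinges on converting the monotone behaviour of $\Omega_{\mathrm{stiff}}$ into pointwise decay of $\Omega_k$, which requires boundedness of $\parb_0\Omega_k$ --- precisely the kind of uniform estimate that would be delicate on the full infinite-dimensional state space, but which is available here on the silent boundary, where $\Sigma^2\leq 1$ and the already-small $N$-products keep all relevant quantities bounded and the monotonicity principle applicable.
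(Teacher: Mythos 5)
Your steps for $v_{\mathrm{stiff}}$ and for the soft densities follow the same route as the paper, which proves Proposition~\ref{case2} by rerunning the case-(i) arguments ``with the extra condition of Lemma~\ref{Non-Taub condition}''. Two corrections there. In the $v_{\mathrm{stiff}}$ step your inequality points the wrong way: integrating toward the past reverses inequalities, so what is needed is that the Rayleigh quotient is bounded \emph{below} by the \emph{smallest} eigenvalue of $2\delta_{\alpha}{}^{\beta}-\Sigma_{\alpha}{}^{\beta}$, whose past integral diverges by the lemma; the upper bound by the largest eigenvalue yields nothing. More seriously, ``strictly positive, hence the ratio decreases monotonically to zero'' is a non sequitur: a monotone bounded quantity converges, but not necessarily to zero. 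With $w_{\mathrm{stiff}}=1$ no constant term survives in the bracket, and its value is not bounded away from zero --- it degenerates as $v_{(i)}^2\to1$ with $v_{(i)}$ approaching an eigendirection of $\Sigma_{\alpha\beta}$ with eigenvalue $2$ (a Taub configuration). One must again integrate: for instance bound the bracket below by the convex-combination estimate $\min\bigl(3(1-w_{(i)}),\,\lambda_{\mathrm{min}}\bigr)$, where $\lambda_{\mathrm{min}}$ is the smallest eigenvalue of $2\delta_{\alpha}{}^{\beta}-\Sigma_{\alpha}{}^{\beta}$, and invoke Lemma~\ref{Non-Taub condition} to get a divergent past integral. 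That is exactly the role the lemma plays in the paper's proof of this step, and omitting it is a genuine gap.

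Your treatment of $N_{\alpha\beta}$ and $q$ departs from the paper: the paper does not argue this part dynamically at all, but reduces the asymptotics to a single orthogonal stiff fluid on the class A type I--VII$_0$ silent boundary (all other fields being test fields) and then cites the known results~\cite{rin01,waiell97}, which identify the Jacobs subset, hence $N_{\alpha\beta}\to0$ and $q\to2$, as the past attractor of that reduced system. Your attempt to rebuild those results is the right strategy in outline, but it has two genuine gaps. (a) The formula $\parb_0\ln N_i=q+2\Sigma_i$, with $\Sigma_i$ the shear eigenvalues, presupposes that $N_{\alpha\beta}$ and $\Sigma_{\alpha\beta}$ are simultaneously diagonal in a Fermi frame; by the Codazzi constraint~\eqref{dCodazzisil} this is impossible while $Q^\alpha_{\mathrm{tot}}\neq0$, i.e.\ throughout the approach, before the velocities and soft densities have died out. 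It becomes available only asymptotically, and then with error terms your kinematic integration of~\eqref{positivity condition} does not control. (b) ``$\Omega_{\mathrm{stiff}}$ is eventually monotone'' does not follow from $v_{\mathrm{stiff}}\to0$: the exact equation is $\parb_0\ln\Omega_{\mathrm{stiff}}=-4\Omega_k+2G_+^{-1}\bigl[(2\delta_{\alpha\beta}-\Sigma_{\alpha\beta})v^\alpha v^\beta-2\Omega_{\mathrm{stiff}}v^2+2A_\alpha v^\alpha\bigr]$ (plus non-positive soft-fluid contributions), and since $\Omega_k$ and the correction terms may both tend to zero, neither dominates the other, so the sign is indefinite. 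The conclusion can be saved without monotonicity --- $\Omega_{\mathrm{stiff}}\leq1$ bounds $\ln\Omega_{\mathrm{stiff}}$ above, and the dangerous correction has finite past integral because $(2\delta_{\alpha\beta}-\Sigma_{\alpha\beta})v^\alpha v^\beta$ equals, up to $A$-terms, $\textfrac{1}{2}\parb_0 v^2$, which telescopes --- but these estimates, together with the boundedness of $\parb_0\Omega_k$ required for your Barbalat step (which itself relies on (a) to keep $N_{\alpha\beta}$ bounded and to exclude the type VII$_0$ degeneracy where $\Omega_k=0$ with $N_{\alpha\beta}\neq0$), are precisely what is missing. In short: your first two steps are the paper's proof modulo the corrections above, while your third step replaces the paper's appeal to known theorems with a self-contained argument whose key computations are only valid on the subset one is trying to prove the solution approaches.
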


\begin{proof}
The analysis of the stiff case (ii) proceeds with similar
arguments as in the proof of case (i), but with the extra
condition of lemma \ref{Non-Taub condition}. This leads to that
$v_\mathrm{stiff} = 0$ asymptotically, and that
$\Omega_{(i)}=0$ asymptotically for all fluids with equations
of state that are asymptotically softer than the asymptotically
stiff fluid(s). Hence the past asymptotic state resides on the
union of the class A Bianchi type I, II, VI$_0$, VII$_0$
subsets for a single orthogonal stiff fluid, where
$c^\alpha_{(i)}$ and $v_{(i)}$ act as test fields, i.e., fields
that do not affect the spacetime geometry but are affected by
it. The past asymptotic dynamics for the single orthogonal
stiff fluid case in Bianchi types I, II, VI$_0$, VII$_0$ is
well known~\cite{rin01,waiell97}, and from this it follows that
the past attractor resides on the type I subset where
$\Omega_{\text{stiff}}=\hat{\Omega}_{\text{stiff}},\, q=2$,
where we have introduced the convention of using hats on purely
spatially dependent, i.e., temporally constant, quantities. It
therefore follows that the past attractor in case (ii) resides
on the type I subset where
\be
\Omega_\mathrm{tot}=\Omega_{\text{stiff}}=\hat{\Omega}_{\text{stiff}},\qquad
\Omega_{(i)}=0 ,\qquad v_{\text{stiff}}=0 ,\qquad q=2 ; \ee
we will refer to this subset as the silent \textit{Jacobs
subset\/} ${\cal J}$ (the exact solutions for a single stiff
perfect fluid in Bianchi type I were first found by
Jacobs~\cite{jac68}).
\end{proof}

We now turn to the behavior of $\Omega_{(i)}$ in the soft case
(iii). For this case we have no proof, but we expect that the
`matter does not matter' conjecture (2) holds, and that
$\Omega_{(i)}\rightarrow 0$ for all $i$ toward the past
singularity, and that the past attractor hence resides on the
vacuum subset $\Omega_\mathrm{tot}=0$. The reason for the
expectation that $\Omega_\mathrm{tot}=0$ asymptotically is that
there exists evidence for that this happens when one has one
fluid with a soft equation of
state~\cite{rin01,heiugg09a,uggetal03}, and it seems reasonable
that one can apply this result for each fluid individually;
furthermore, in the vacuum case there exists evidence that the
past attractor resides on the union of the silent vacuum type I
subset, known as the silent \textit{Kasner subset} ${\cal K}$,
and the silent vacuum type II
subset~\cite{andetal05,rin01,heiugg09a,uggetal03,heietal09}.
Moreover, in two previous studies of tilted multi-fluid models
of Bianchi type I~\cite{sanugg08,san09} we presented evidence
that indicated that the past attractor of the Bianchi type I
models with two soft fluids resided on ${\cal K}$, and since we
expect that ${\cal K}$ plays a `dominant' role in the
asymptotic dynamics this gives further support for the claim
that $\Omega_\mathrm{tot}\rightarrow 0$.

From Proposition \ref{case1} in case (i), Proposition
\ref{case2} in case (ii), and the `matter does not matter'
conjecture (2) in case (iii), it follows that asymptotically
toward the past
\begin{equation}
Q^\alpha_\mathrm{tot}=0 \qquad \text{and} \qquad
\Pi^{\alpha\beta}_\mathrm{tot}=0,
\end{equation}
in all cases, since $\Omega_{(i)}=0$, for all $i$, except for
the asymptotically `dominant' ultra-stiff fluid(s) in case (i)
and the asymptotically stiff fluid(s) in case (ii), but in
those cases $v_\mathrm{ultra-stiff}=0$ and
$v_\mathrm{stiff}=0$, respectively.

For all Class A models with $Q^\alpha_\mathrm{tot}=0$ and
$\Pi^{\alpha\beta}_\mathrm{tot}=0$ it is possible to
simultaneously diagonalize $N_{\alpha\beta}$ and
$\Sigma_{\alpha\beta}$ in a Fermi frame. The reason for this is
as follows: In class A $Q^\alpha_\mathrm{tot}=0$ leads to that
the Codazzi constraint~\eqref{dCodazzisil} takes the form
$\epsilon_{\alpha\delta}{}^{\gamma}\,N^\delta{}_\beta\,\Sigma^\beta{}_\gamma=0$,
which implies that $N_{\alpha\beta}$ and $\Sigma_{\alpha\beta}$
are simultaneously diagonalizable for a given arbitrary value
of $x^0$ ($N_{\alpha\beta}$ transforms as a tensor density on
the silent boundary under spatial frame rotations).
Furthermore, the preservation of the simultaneous
diagonalization during evolution is possible because
$\Pi^{\alpha\beta}_\mathrm{tot}=0$, but it also requires that
one uses a Fermi frame. We hence expect that it is possible to
asymptotically diagonalize $\Sigma_{\alpha\beta}$ and
$N_{\alpha\beta}$ in a frame that is asymptotically a Fermi
frame\footnote{At least generically, the results
in~\cite{heietal09} show that some degrees of freedom only seem
to be statistically suppressed, and that there may be a few
timelines with different behavior; hence there may be some
timelines with different asymptotic dynamics than that we
presently describe.} so that
\begin{equation}
R_\alpha = 0, \quad
\Sigma_{\alpha\beta} = \mathrm{diag}(\Sigma_1,\Sigma_2,\Sigma_3), \quad
N_{\alpha\beta} = \mathrm{diag}(N_1,N_2,N_3); \quad \Sigma_1 + \Sigma_2 + \Sigma_3 = 0 .
\end{equation}

In all fluid cases, the silent Bianchi type I subset plays a
prominent role, indeed, according to the previous analysis the
past attractors for the ultra-stiff and stiff cases reside
there, and we therefore now turn to this subset in more detail.

\subsection{The silent Bianchi type I
subset}\label{subsec:typeI}

As follows from the previous subsection for cases (i) and (ii),
and as conjectured for case (iii), sources that consist of
multiple perfect fluids lead to that the past asymptotic subset
for Bianchi type I resides on the subset with
$\Omega_{(i)}=0\,\forall i$, except for the `dominant' matter
component(s) $\Omega_\mathrm{ultra-stiff}=1$ (where also
$v^\alpha_\mathrm{ultra-stiff}=0$) and $\Omega_\mathrm{stiff} =
\hat{\Omega}_\mathrm{stiff}$ (where also
$v^\alpha_\mathrm{stiff}=0$) in cases (i) and (ii),
respectively. This implies that the past asymptotic dynamics on
the silent type I boundary resides on ${\cal F}$, ${\cal J}$,
and ${\cal K}$, for the ultra-stiff, stiff, and soft cases,
respectively, where ${\cal K}$ constitutes the boundary of
${\cal J}$ in the stiff case. In $\Sigma_{\alpha\beta}$-space
the field equations for Bianchi type I immediately lead to that
these subsets are characterized by the following eigenvalues
for $\Sigma_{\alpha\beta}$:
\begin{subequations}
\begin{align}
{\cal F}:&\quad \Sigma_\alpha = \hat{\Sigma}_\alpha
    = 0,\ \forall \ \alpha, \quad \Sigma^2=0\,\, \Leftrightarrow
    \,\, \Omega_\mathrm{tot} = 1.\\
{\cal J}:&\quad \Sigma_\alpha =
    \hat{\Sigma}_\alpha,\ \forall \ \alpha,\qquad\,\,\,\,\,
    \Sigma^2 = \hat{\Sigma}^2 = 1 - \Omega_\mathrm{tot} =
    1 - \hat{\Omega}_\mathrm{stiff}.\\
{\cal K}:&\quad \Sigma_\alpha =
    \hat{\Sigma}_\alpha,\ \forall \ \alpha,\qquad\,\,\,\,\,
    \Sigma^2 = \hat{\Sigma}^2 = 1 \,\, \Leftrightarrow\,\,
    \Omega_\mathrm{tot} = 0.
\end{align}
\end{subequations}

The eigenvalues $\Sigma_\alpha=\hat{\Sigma}_\alpha$ can be
expressed in terms of the \textit{shape parameters} $p_\alpha$,
see~\cite{limetal06}, defined according to
\begin{equation}\label{shapedef}
(\hat{\Sigma}_1,\hat{\Sigma}_2,\hat{\Sigma}_3) =
(3p_1-1,3p_2-1,3p_3-1) , \qquad
p_1+p_2+p_3 = 1 ,
\end{equation}
where we have omitted the hats on the spatially dependent
$p_\alpha$ to conform with standard notation. In terms of the
shape parameters, the past asymptotic states on the silent
Bianchi type I subset for the three subsets are described by:
\begin{subequations}
\begin{align}
{\cal F}: &\quad (p_1,p_2,p_3)\ =\
    \textfrac{1}{3}\,(1,1,1).\\
{\cal J}:&\quad p_1^2 + p_2^2 + p_3^2 \ = \ 1 -
    \textfrac{2}{3}\hat{\Omega}_{\text{stiff}} < \ 1.
   \label{shapecond}\\
{\cal K}:&\quad p_1^2 + p_2^2 + p_3^2\ =\ 1.
\end{align}
\end{subequations}

Even though $\Omega_{(i)}=0\,\,\forall i$ (with the exception
of $\Omega_\mathrm{ultra-stiff}=1$ and $\Omega_\mathrm{stiff} =
\hat{\Omega}_\mathrm{stiff}$ in cases (i) and (ii),
respectively), the subsets ${\cal F}$, ${\cal J}$, and ${\cal
K}$ also involve the equations for $c^\alpha_{(i)}$ and
$v_{(i)}$, which act as test fields, i.e., fields that do not
affect the spacetime geometry but are affected by it, and hence
a complete past asymptotic description also involves the
asymptotic determination of these fields. For this purpose we
use a shear diagonalized Fermi frame so that
$\Sigma_{\alpha\beta}=\mathrm{diag}(3p_1-1,3p_2-1,3p_3-1)$ and
$R_\alpha=0$, and insert $A_\alpha=0,\, N_{\alpha\beta}=0$,
which characterizes Bianchi type I, into the
equations~\eqref{cpec} for $c_\alpha$; this leads to:
\begin{subequations}\label{pecdir}
\begin{align}
\parb_0\, c_1 &= 3[(p_2 - p_1)c_2^2 +
(p_3 - p_1)c_3^2]\,c_1 ,\\
\parb_0\, c_2 &= 3[(p_3 - p_2)c_3^2 +
(p_1 - p_2)c_1^2]\,c_2 ,\\
\parb_0\, c_3 &= 3[(p_1 - p_3)c_1^2 +
(p_2 - p_3)c_2^2]\,c_3 ,
\end{align}
\end{subequations}
where we again for simplicity have dropped the index $(i)$.
These equations, which decouple from the equation for $v$, can
be treated as a separate dynamical system that satisfies the
constraint $c_\alpha c^\alpha=1$, i.e., we have a dynamical
system on a sphere with unit radius, parameterized by $p_1$,
$p_2$, and $p_3$. We note that this system is the same as that
for $v^\alpha$ when $v^2=1$, i.e., the dynamics for $c_\alpha$
is the same as for the extreme tilt subset $v^2=1$, which
in~\cite{uggetal03} was examined by means of spherical
coordinates in the case $p_1<p_2<p_3$.

\begin{proposition}\label{pecdirprop}
The past asymptotic state of the system \eqref{pecdir} is given
by:
\begin{itemize}
\item[]Case (i): $c_\alpha=\hat{c}_\alpha$.
\item[] Cases (ii) and (iii): Let $(\alpha\beta\gamma) =
    (123)$, or a permutation thereof. (a) If $p_\alpha\leq
    p_\beta<p_\gamma$, then $c_\alpha,\,c_\beta \to 0,\quad
    c_\gamma \to \pm1$ when $c_\gamma \gtrless 0$. (b) If
    $p_\alpha < p_\beta=p_\gamma$, then $c_\alpha \to 0,\
    c_\beta, c_\gamma \to \hat{c}_\beta,\hat{c}_\gamma,
    \hat{c}_\beta^2 + \hat{c}_\gamma^2=1$.
\end{itemize}
\end{proposition}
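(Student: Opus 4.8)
The plan is to analyze the dynamical system \eqref{pecdir} on the unit sphere $c_\alpha c^\alpha = 1$, treating the shape parameters $p_1, p_2, p_3$ (with $p_1 + p_2 + p_3 = 1$) as fixed parameters, and to identify the past attractors (the $\alpha$-limits, since we integrate toward $x^0 \to -\infty$). First I would note the structural feature of the system: each equation for $\parb_0 c_\alpha$ carries an explicit factor of $c_\alpha$, so each of the three coordinate subsets $c_\alpha = 0$ is invariant, and consequently so are the six `axis points' $c_\alpha = \pm 1$ (with the other two components vanishing) and, in the degenerate case, the great circles where one coordinate vanishes. These fixed points and invariant circles are the candidate past attractors, and the whole problem reduces to a stability analysis governed by the \emph{ordering} of the $p_\alpha$.

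For case (i), the ultra-stiff case, recall from Proposition~\ref{case1} that $\Sigma_{\alpha\beta} \to 0$, i.e. $\hat{\Sigma}_\alpha \to 0$, which via \eqref{shapedef} forces $p_\alpha \to \frac{1}{3}$ for all $\alpha$. In this isotropic limit all the coefficients $(p_\beta - p_\alpha)$ on the right-hand sides of \eqref{pecdir} vanish, so $\parb_0 c_\alpha \to 0$ and every point of the sphere is asymptotically a fixed point; hence $c_\alpha \to \hat{c}_\alpha$ for arbitrary frozen data, which is the claimed result. This case is essentially immediate once the Friedmann limit of the $p_\alpha$ is invoked.

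For cases (ii) and (iii) the shear need not vanish, so I would carry out a genuine stability analysis under the generic ordering assumption. Take $(\alpha\beta\gamma)=(123)$ with $p_\alpha \le p_\beta < p_\gamma$. The strategy is to exhibit a monotone (Lyapunov-type) function: I would examine the evolution of, say, $\ln(c_\gamma^2)$ or equivalently the quantity $1 - c_\gamma^2 = c_\alpha^2 + c_\beta^2$ under \eqref{pecdir}. Using $\parb_0 c_\gamma = 3[(p_1-p_3)c_1^2 + (p_2-p_3)c_2^2]\,c_\gamma$ together with $c_\alpha c^\alpha = 1$, one finds that $\parb_0(\ln c_\gamma^2)$ is a nonnegative combination of $c_\alpha^2$ and $c_\beta^2$ with coefficients $6(p_\gamma - p_\alpha) \ge 0$ and $6(p_\gamma - p_\beta) > 0$; hence, integrating toward the past, $c_\gamma^2$ monotonically increases (toward $x^0 \to -\infty$ after accounting for the sign of the time direction), and the only way the source terms can fail to drive $c_\gamma^2 \to 1$ is if $c_\alpha^2, c_\beta^2 \to 0$ already --- which is exactly the desired conclusion $c_\alpha, c_\beta \to 0$, $c_\gamma \to \pm 1$. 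The sign of the limit is fixed because each $c_\alpha = 0$ is invariant, so $c_\gamma$ cannot change sign and the limit agrees with the sign of $c_\gamma$ along the orbit. For the degenerate subcase (b) with $p_\alpha < p_\beta = p_\gamma$, the same monotonicity argument applied to $c_\alpha^2$ shows $c_\alpha \to 0$, while on the invariant great circle $c_\alpha = 0$ the two remaining equations have identical vanishing coefficient $(p_\beta - p_\gamma) = 0$, so the flow there is frozen and $(c_\beta, c_\gamma) \to (\hat{c}_\beta, \hat{c}_\gamma)$ with $\hat{c}_\beta^2 + \hat{c}_\gamma^2 = 1$.

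The main obstacle I anticipate is making the monotonicity argument genuinely global rather than merely a linearization near the axis points: one must rule out that an orbit lingers indefinitely near a saddle (for instance the intermediate axis $c_\beta = \pm 1$ when $p_\alpha < p_\beta < p_\gamma$, which is a saddle), and confirm that the measure-zero stable manifolds of such saddles do not obstruct the stated generic convergence. I would handle this by verifying that the proposed monotone function is strictly monotone except precisely on the lower-dimensional invariant sets (the saddle points and their separatrices), so that by the monotonicity principle every orbit not starting on these exceptional sets converges to the strict maximum $c_\gamma = \pm 1$; the exceptional orbits form a set of measure zero and are consistent with the ``past attractor'' terminology of the paper (cf. the footnote definition of $\mathcal{A}^-$). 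Comparing with the extreme-tilt ($v^2 = 1$) analysis of~\cite{uggetal03} in the strictly ordered case $p_1 < p_2 < p_3$ would provide an independent check on the phase portrait and the identification of the stable node, saddle, and unstable node among the axis points.
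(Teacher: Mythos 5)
Your proof is correct and is essentially the paper's own argument: both rest on the invariance of the subsets $c_\alpha=0$ and on monotone functions built from the $c_\alpha^2$, the only organizational difference being that you drive $c_\gamma^2$ (largest shape parameter) monotonically up toward the past in a single step, whereas the paper drives $c_\alpha^2$ (smallest shape parameter) down and then reduces the dynamics to the invariant circle ${\cal C}_{\beta\gamma}$. One small caution: in case (i) the parameters entering \eqref{pecdir} are exactly $p_\alpha=\frac{1}{3}$ on ${\cal F}$, so the right-hand sides vanish identically and $c_\alpha$ is exactly constant; no asymptotic-freezing argument is needed (and, read literally, $\parb_0 c_\alpha \to 0$ alone would not imply that $c_\alpha$ converges).
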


\begin{proof}
In case (i) where $(p_1,p_2,p_3)=\frac{1}{3}(1,1,1)$, it
follows directly that $\parb_0 c_\alpha = 0$ and hence
$c_\alpha=\hat{c}_\alpha$.

Cases (ii) and (iii) can be treated collectively. We first note
that if $p_\alpha<p_\beta<p_\gamma$, where $(\alpha\beta\gamma)
= (123)$, or a permutation thereof, then the
system~\eqref{pecdir} admits the invariant subsets ${\cal
C}_{12}$ on which $c_3=0$, and cycle, leading to a division of
the sphere into six disjoint subsets with the subset ${\cal
C}_{12},\,{\cal C}_{23},\,{\cal C}_{31}$ as boundaries,
furthermore, the intersections of these subsets yield the fix
points $\mathrm{C}^\pm_{\alpha}$ for which $c_\alpha=\pm
1,\,c_\beta=c_\gamma=0$, $(\alpha\beta\gamma)=(123)$, and
cycle. If $p_\alpha=p_\beta\neq p_\gamma$, where
$(\alpha\beta\gamma) = (123)$, and cycle, then the
system~\eqref{pecdir} also admits subsets when one of the
components $c_1,\,c_2,$ or $c_3$ is zero, but the subset ${\cal
C}_{\alpha\beta}$ on which $c_\gamma=0$ reduces to a circle of
fix points with
$c_\alpha=\hat{c}_\alpha,\,c_\beta=\hat{c}_\beta,\,\hat{c}_\alpha^2+\hat{c}_\beta^2=1$,
which we denote by $\mathrm{C}_{\alpha\beta}^\ocircle$.

If $p_\alpha\leq p_\beta<p_\gamma$, then $\parb_0\, c^2_\alpha
> 0$, and hence $c_\alpha \to 0$. This reduces the system to
the subset ${\cal C}_{\beta\gamma}$ where $p_\beta < p_\gamma$
gives $\parb_0\, c^2_\beta > 0$, $\parb_0\, c_\gamma =
f(c_\beta)\, c_\gamma$, where $f(c_\beta) < 0$, and hence
$c_\beta \to 0$ and $c_\gamma \to \pm 1$ toward the past when
$c_\gamma \gtrless 0$. If $p_\alpha < p_\beta = p_\gamma$, then
$c_\alpha \to 0$ still holds, but in this case the system
reduces to the circle of fix points
$\mathrm{C}_{\beta\gamma}^\ocircle$.
\end{proof}

\begin{corollary}\label{velocityshearalignment}
The past asymptotic peculiar velocity direction(s) $c^\alpha$
of the test fields coincide with the asymptotic eigenvector(s)
of $\Sigma_{\alpha\beta}$ associated with the eigenvalue(s)
$\hat{\Sigma}_\mathrm{max} = \max(\hat{\Sigma}_1,
\hat{\Sigma}_2, \hat{\Sigma}_3) = \max(3p_1-1,3p_2-1,3p_3-1)$.
\end{corollary}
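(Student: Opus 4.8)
The plan is to read off the asymptotic conclusion of Proposition~\ref{pecdirprop} and identify, in each case, the limiting value of $c^\alpha$ with the eigendirection of $\Sigma_{\alpha\beta}$ corresponding to the largest eigenvalue. Recall that in the shear-diagonalized Fermi frame we have $\Sigma_{\alpha\beta} = \mathrm{diag}(3p_1-1, 3p_2-1, 3p_3-1)$, so the eigenvalue $\hat{\Sigma}_\alpha = 3p_\alpha - 1$ is the largest precisely when $p_\alpha$ is the largest of the three shape parameters; equivalently, $\hat{\Sigma}_\mathrm{max}$ corresponds to the direction(s) with the maximal $p$. The whole corollary is thus a translation between ``largest $p_\alpha$'' and ``dominant shear eigendirection,'' which is immediate once the monotonicity structure of \eqref{pecdir} is in hand.

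First I would dispose of case (i): there $(p_1,p_2,p_3) = \textfrac{1}{3}(1,1,1)$, so all three eigenvalues of $\Sigma_{\alpha\beta}$ coincide (indeed $\Sigma_{\alpha\beta} = 0$) and $\hat{\Sigma}_\mathrm{max}$ is degenerate with every direction an eigenvector; the arbitrary limit $c_\alpha = \hat{c}_\alpha$ from Proposition~\ref{pecdirprop} trivially lies in this (full) eigenspace, so the statement holds vacuously. Next I would treat cases (ii) and (iii) together. For the generic ordering $p_\alpha \leq p_\beta < p_\gamma$, Proposition~\ref{pecdirprop}(a) gives $c_\gamma \to \pm 1$ with $c_\alpha, c_\beta \to 0$, i.e.\ $c^\alpha$ aligns with the $\gamma$-axis; since $p_\gamma$ is strictly the largest shape parameter, $\hat{\Sigma}_\gamma = 3p_\gamma - 1 = \hat{\Sigma}_\mathrm{max}$, and the $\gamma$-axis is exactly the eigenvector of $\Sigma_{\alpha\beta}$ for $\hat{\Sigma}_\mathrm{max}$. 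This establishes the alignment in the nondegenerate subcase.

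Finally I would handle the degenerate subcase $p_\alpha < p_\beta = p_\gamma$ of Proposition~\ref{pecdirprop}(b), where $c_\alpha \to 0$ but $(c_\beta, c_\gamma)$ approaches an arbitrary point $(\hat{c}_\beta, \hat{c}_\gamma)$ on the circle of fixed points with $\hat{c}_\beta^2 + \hat{c}_\gamma^2 = 1$. Here $p_\beta = p_\gamma$ are jointly maximal, so $\hat{\Sigma}_\beta = \hat{\Sigma}_\gamma = \hat{\Sigma}_\mathrm{max}$ is a twofold eigenvalue whose eigenspace is the $(\beta,\gamma)$-plane; the limiting $c^\alpha$ lies in precisely this plane, hence in the $\hat{\Sigma}_\mathrm{max}$-eigenspace, matching the ``eigenvector(s)'' phrasing of the statement. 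The only point requiring a little care—the main (minor) obstacle—is the bookkeeping of which eigenvalue is largest across the various permutations and degeneracies: one must confirm that the monotonic increase of $p$ translates correctly into monotonic increase of $3p-1$ (immediate, as the map $p \mapsto 3p-1$ is increasing) and that the degenerate eigenspaces are matched to the degenerate attractor sets. With that correspondence checked, the corollary follows directly from Proposition~\ref{pecdirprop} without further computation.
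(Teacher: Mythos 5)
Your proof is correct and takes essentially the same route as the paper, whose entire proof is the remark that the corollary follows immediately from Proposition~\ref{pecdirprop}: your identification of the maximal shape parameter with the maximal shear eigenvalue via the increasing map $p\mapsto 3p-1$ in the shear-diagonalized Fermi frame is precisely the translation that remark relies on. Your explicit handling of the degenerate subcases (the isotropic case (i) and the circle of fixed points in case (b), where the limit lies in the two-dimensional $\hat{\Sigma}_\mathrm{max}$-eigenspace) merely spells out what the paper leaves implicit.
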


\begin{proof} This follows immediately from the proof
of~\eqref{pecdirprop}.
\end{proof}

We now turn to the past asymptotic behavior for the peculiar
test speeds $v$ on the type I subset. By regarding $c_\alpha$
as time-dependent coefficients in the evolution equation for
$v$, we can apply a theorem by Strauss and
Yorke~\cite{stryor67} that implies that $v$ is past
asymptotically determined by the past asymptotics of
$c_\alpha$. Corollary \ref{velocityshearalignment} then reduces
Eq.~\eqref{veqclassA} for $v$ to
\begin{equation}\label{v2sub}
\parb_0 v = 3\bar{G}_-^{-1}(1-v^2)(c_s^2 - p_\mathrm{max})\,v ,
\qquad \text{where} \qquad
p_\mathrm{max} = \max(p_1,p_2,p_3) .
\end{equation}
Consequently $v$ is monotonically decreasing (increasing)
toward the past if $c_s^2
> p_\mathrm{max} = \frac{1}{3}(1+ \hat{\Sigma}_\mathrm{max})$ ($c_s^2 <
p_\mathrm{max} = \frac{1}{3}(1 + \hat{\Sigma}_\mathrm{max})$),
and hence $v=0$ ($v=1$), while $v=\hat{v}$ if
$c_s^2=p_\mathrm{max}$, asymptotically toward the past; in
these formulas $c_s^2$ refers to the asymptotic limit of
$c_s^2$ when $\tilde{\rho}\rightarrow\infty$ (for simplicity we
assume that $c_s^2$ has such a limit, however, many of our
results are easily generalized to the case when $c_s^2$ has
asymptotic bounds, but no limit).

\begin{itemize}
\item[] Case (i): $p_\mathrm{max}=\frac{1}{3}$ and hence
    $v\rightarrow 0$ when $c_s^2>\frac{1}{3}$;
    $v\rightarrow \hat{v}$ when $c_s^2=\frac{1}{3}$;
    $v\rightarrow 1$ when $c_s^2<\frac{1}{3}$.
\item[] Case (ii): Eqs.~\eqref{shapedef}
    and~\eqref{shapecond} yield that
    $\frac{1}{3}(1+\hat{\Sigma})\leq p_\mathrm{max} \leq
    \frac{1}{3}(1+2\hat{\Sigma})<1$, where
    $\hat{\Sigma}=\sqrt{1-\hat{\Omega}_{\text{stiff}}}<1$.
    Hence $c_s^2 < \frac{1}{3}\Rightarrow v\rightarrow 1$
    toward the past; if $c_s^2 > \frac{1}{3}$ there exist
    some $p_\mathrm{max}$ values on ${\cal J}$ for which
    $v\rightarrow 0$, some for which $v\rightarrow
    \hat{v}$, and some for which $v\rightarrow 1$,
    depending on if $c_s^2>p_\mathrm{max}$,
    $c_s^2=p_\mathrm{max}$, or $c_s^2<p_\mathrm{max}$ (the
    smallest possible $p_\mathrm{max}$ value is
    $\frac{1}{3}$ and occurs when $\Sigma^2=0$).
\item[] Case (iii): Eqs.~\eqref{shapedef}
    and~\eqref{shapecond} yield that $\frac{2}{3}\leq
    p_\mathrm{max} <1$.\footnote{We exclude that
    $p_\mathrm{max}=1$; this is intimately connected with
    that the assumption~\eqref{positivity condition} holds.
    Note that $p_\mathrm{max}=1$ has been proved to be
    excluded for the non-LRS Bianchi types VIII and IX
    cases~\cite{rin01,heiugg09a,rin00}.} Hence $c_s^2 <
    \frac{2}{3} \Rightarrow v\rightarrow 1$ toward the
    past. If $c_s^2 > \frac{2}{3}$ there exist some points
    on ${\cal K}$ for which $v\rightarrow 0$ and some for
    which $v\rightarrow 1$.
\end{itemize}

It is of interest to note that $c_s^2\rightarrow 1 \Rightarrow
v\rightarrow 0$ everywhere on ${\cal J}$ and ${\cal
K}$.\footnote{Except for the excluded points with
$p_\mathrm{max}=1$.}

\subsection{Stability and instability of the type I subset on the silent boundary}\label{perturbI}

For the ultra-stiff case (i) it is easily seen that ${\cal F}$
is a stable subset w.r.t. perturbations of $E_\alpha{}^i$,
$A_\alpha$, $N_{\alpha\beta}$, $\Sigma_{\alpha\beta}$,
$\Omega_{(i)}$, and hence there exists a past attractor in the
full state space that resides on ${\cal F}$ in this case, a
statement that is also supported by the analysis
in~\cite{collim05}. We therefore turn to the past attractor for
the stiff and soft cases (ii) and (iii), respectively.

To identify the past attractor subset on the silent boundary we
next linearly perturb ${\cal J}$ and ${\cal K}$ by using a
Fermi frame in which the perturbed Bianchi type I subsets are
expressed in a Fermi frame with diagonalized shear
$\Sigma_{\alpha\beta}=\mathrm{diag}
(\hat{\Sigma}_1=3p_1-1,\hat{\Sigma}_2=3p_2-1,\hat{\Sigma}_3=3p_3-1)$:
\begin{subequations}\label{geomstabeq}
\begin{align}
A_\alpha^{-1}\,\parb_0\,A_\alpha|_{{\cal J},{\cal K}} &= 2-\hat{\Sigma}_{\alpha} = 3(1-p_\alpha),
\label{Astab}\\
N_{\alpha}^{-1}\,\parb_0\, N_{\alpha}|_{{\cal J},{\cal K}} &=2(1+\hat{\Sigma}_\alpha) = 6p_\alpha \qquad
\quad\, \text{where} \qquad\, N_\alpha=N_{\alpha\alpha},
\label{Ndiastab}\\
N_{\alpha\beta}^{-1}\,\parb_0\, N_{\alpha\beta}|_{{\cal J},{\cal K}} &= 2-\hat{\Sigma}_\gamma  = 3(1-
p_\gamma)\qquad \text{where}\quad (\alpha\beta\gamma) = (123)\,\,
\text{and cycle},\label{Noffdstab}\\
\Omega^{-1}_{(i)}\parb_0\,\Omega_{(i)}|_{{\cal J},{\cal K}} &= 3G_+^{-1}\left[(1-w)(1-v^2) +
(1+w)(1 - p_{\mathrm{max}})\,v^2\right],\label{Omstab}
\end{align}
\end{subequations}
where the above equations refer to separate components. The
notation $|_{{\cal J},{\cal K}}$ indicates evaluation at ${\cal
J}$ in the stiff case (ii), and at ${\cal K}$ in the soft case
(iii). Note that we have obtained the same form for the
equations in the stiff case (ii) and the soft case (iii), since
$q=2$ in both cases. Moreover, in the case of $\Omega_{(i)}$ we
have in addition inserted the type I past attractor value
associated with ${\cal J}$, ${\cal K}$ for
$\Sigma_{\alpha\beta}\,c^\alpha\,c^\beta$ according to
Corollary \ref{velocityshearalignment}.

Notably there are no equations for $\Sigma_{\alpha\beta}$
in~\eqref{geomstabeq}. The reason for this is that the
stability analysis of $\Sigma_{\alpha\beta}$ depends on the
choice of spatial frame. However, Eqs.~\eqref{geomstabeq} hold
for any spatial frame that admits
$\Sigma_{\alpha\beta}=\mathrm{diag}(\hat{\Sigma}_1,\hat{\Sigma}_2,\hat{\Sigma}_3)$
and $R_\alpha=0$ as an invariant subset on ${\cal J}$ and
${\cal K}$; furthermore, when projecting out the peculiar
velocities, which we will do in the reminder of this
subsection, these sets form sets of fix points for the
projected system of equations. Nevertheless, in the full state
space, as well as on the silent boundary where in general
$A_\alpha\neq 0, Q_\alpha \neq 0, \Pi_{\alpha\beta}\neq 0$, the
shear cannot be diagonalized in a Fermi frame, and thus when we
consider the general dynamics we cannot assume that
$\Sigma_{\alpha\beta}=\mathrm{diag}(\hat{\Sigma}_1,\hat{\Sigma}_2,\hat{\Sigma}_3)$
and $R_\alpha=0$. For our purposes, however, it suffices to
consider an asymptotic spatial frame choice. In this paper we
will use an asymptotic Fermi frame, i.e.,
$R_\alpha=0$,\footnote{There exists other interesting choices,
e.g. $R_\alpha = \epsilon_\alpha\,\Sigma_{\beta\gamma}$, where
$(\alpha\beta\gamma)=(123)$, or cycle, and where
$\epsilon_\alpha$ is equal to $\pm 1$;
$\epsilon_\alpha=(-1,1,-1)$ is connected with the Iwasawa frame
used in e.g.~\cite{heietal09}, $\epsilon_\alpha=(1,1,1)$ is the
frame choice used in~\cite{uggetal03}. For these choices
$R_\alpha$ destabilizes parts of the fix point sets on ${\cal
J},\,{\cal K}$ by inducing so-called frame transitions (also
known as centrifugal bounces in a Hamiltonian context,
see~\cite{heietal09}), trajectories that connect one fix point
representation of a type I solution with another, by means of
an axes permutation~\cite{heietal09}.} but note that we cannot
e.g. diagonalize the shear in such a frame, except
asymptotically. On the Kasner subset ${\cal K}$, and the Jacobs
subset ${\cal J}$, a Fermi frame choice leads to that the
$\Sigma_{\alpha\beta}$ evolution equation~\eqref{sigsil}
immediately yields
\be \Sigma_{\alpha\beta}=\hat{\Sigma}_{\alpha\beta},
\ee
since $q=2$ in both cases. Thus, because of that
$\Sigma^2=\hat{\Sigma}^2 = 1- \hat{\Omega}_\mathrm{stiff}$
there exists a ellipsoidal ball (ellipsoid) of fix points in
the (shear projected) ${\cal J}$ (${\cal K}$) case, which
corresponds to a center manifold. However, in this case a
temporally constant rotation of axes that diagonalizes
$\Sigma_{\alpha\beta}$ so that $\Sigma_{\alpha\beta} =
\text{diag}(\hat{\Sigma}_1,\hat{\Sigma}_1,\hat{\Sigma}_1)$,
which leads to~\eqref{geomstabeq}.

From~\eqref{geomstabeq} it follows that $A_\alpha$,
$N_{\alpha\beta}$, when $\alpha\neq\beta$, and $\Omega_{(i)}$
($(i)\neq \mathrm{stiff}$) are stable toward the past
everywhere on ${\cal J}$ and ${\cal K}$, with the exception of
the non-transversally-hyperbolic so-called Taub points on
${\cal K}$, where $(p_1,p_2,p_3)= (1,0,0)$, and cycle (or
equivalently $(\hat{\Sigma}_1,\hat{\Sigma}_2,\hat{\Sigma}_3)=
(2,-1,-1)$, and cycle). However, we have shown that if
lemma~\ref{Non-Taub condition} holds, which depends on that the
condition~\eqref{positivity condition} holds, then $A_\alpha$
and $\Omega_{(i)}$ both tend to zero toward the singularity and
thus the states $A_\alpha=0$ and $\Omega_{(i)}=0$ are past
stable, even though they are not linearly stable
everywhere.\footnote{There hence exists an intricate connection
between avoidance of the Taub points, which in turn are part of
the Taub subset described in Appendix~\ref{invbound}, and
asymptotic locality via lemma~\ref{Non-Taub condition} and
condition~\eqref{positivity condition}. Determining exactly
what this connection is poses a formidable and important
challenge. In this context it is worth mentioning that there
may exist an open set of solutions with so-called weak null
singularities, which are not asymptotically silent or local;
moreover, these singularities seem to be intimately associated
with the Taub subset~\cite{limetal06}.} The decoupling of the
$\Omega_{(i)}$ equations~\eqref{Omstab} from each other, and
their shared linear stability properties in conjunction with
the previous non-linear stability result, gives some support
for the `matter does not matter' conjecture (2), in the context
of that the condition~\eqref{positivity condition} holds, cf.
also~\cite{rin01,heiugg09b,heietal09}.

The stability toward the past of $N_\alpha$ depends on the sign
of $p_\alpha=\frac{1}{3}(1+\hat{\Sigma}_\alpha)$. In the stiff
case (ii) it follows from~\eqref{geomstabeq} that the part of
${\cal J}$ that obeys $\hat{\Sigma}_\alpha=-1\,\forall \alpha$,
i.e. with $\hat{\Sigma}_\alpha >-1$ or, equivalently,
$p_\alpha>0$, is stable w.r.t. $N_\alpha$ perturbations toward
the past; we will denote this part of ${\cal J}$ as ${\cal
J}^{\Delta}$. Outside ${\cal J}^{\Delta}$, ${\cal J}$ have an
unstable mode associated with $N_\alpha$ when
$\hat{\Sigma}_\alpha<-1$, or, equivalently, when $p_\alpha<0$.
This follows from that only one of $p_1,p_2$, and $p_3$ is
negative, because $p_1+p_2+p_3=1$ and
$p_1^2+p_2^2+p_3^3=1-\frac{2}{3}\hat{\Omega}_\mathrm{stiff}$
yields
\begin{equation}
\textfrac{1}{3}(1-2\hat{\Sigma}) \leq p_\alpha \leq
\textfrac{1}{3}(1-\hat{\Sigma}) \leq p_\beta \leq
\textfrac{1}{3}(1+\hat{\Sigma})  \leq p_\gamma \leq
\textfrac{1}{3}(1+2\hat{\Sigma}),
\end{equation}
where $\hat{\Sigma}=\sqrt{1-\hat{\Omega}_{\text{stiff}}}$, and
where $(\alpha\beta\gamma) = (123)$, and cycle. Since we showed
in Proposition \ref{case2} that the past attractor in the stiff
fluid case must be confined to ${\cal J}$ we immediately get
from requiring consistency with the stability analysis that it
must be contained in the closure of the stable part of ${\cal
J}^{\Delta}$, i.e $\overline{{\cal J}^{\Delta}}$.

In the soft case (iii), it follows from~\eqref{geomstabeq} that
${\cal K}$ is unstable everywhere toward the past (except at
the excluded points $p_\mathrm{max}=1$) with an unstable
$N_\alpha$-mode when $\hat{\Sigma}_\alpha<-1$, or,
equivalently, when $p_\alpha<0$, since $p_1+p_2+p_3=1$ and
$p_1^2+p_2^2+p_3^3=1$ yields
\begin{equation}
-\textfrac{1}{3} \leq p_\alpha \leq 0 \leq p_\beta \leq
\textfrac{2}{3} \leq p_\gamma \leq 1,
\end{equation}
where $(\alpha\beta\gamma) = (123)$, and cycle.

The past instabilities in the stiff and soft cases are
associated with so-called silent Bianchi type II curvature
\textit{transitions\/}, to use the nomenclature
of~\cite{heietal09}, i.e., orbits associated with Bianchi type
II. We therefore take a closer look at this silent subset for
the stiff and soft cases in a Fermi-propagated simultaneously
diagonalized $\Sigma_{\alpha\beta}$ and $N_{\alpha\beta}$
frame.

\subsection{The type II subset}\label{typeII}

In the past asymptotic limit $v^\alpha_\mathrm{stiff}=0$ in the
stiff case, and $\Omega_{(i)}=0$ in the stiff and soft cases,
where $i$ refers to a fluid with an asymptotically soft
equation of state. We are thus interested in the subset on
Bianchi type II that is described by a single stiff fluid with
$v^\alpha_\mathrm{stiff}=0$ in the stiff case (ii), and the
vacuum type II subset in the soft case (iii). We choose a
Fermi-propagated shear eigenframe with
$\Sigma_{\alpha\beta}=\text{diag}(\Sigma_1,\Sigma_2,\Sigma_3)$,
and project the dynamics onto
$\Sigma_\alpha$-$\Omega_\mathrm{stiff}$-space, i.e., we
disregard the test fields $v^\alpha_{(i)}$; in addition we set
$N_{\alpha\beta}=0$, except for a single component
$N_{\gamma\gamma}=N_\gamma$, which we determine via the Gauss
constraint, which yields
$N_\gamma^2=12(1-\Sigma^2-\Omega_\mathrm{stiff})$.
Eqs.~\eqref{sigsil} and~\eqref{ompfsil} then yield
\begin{subequations}\label{II}
\begin{align}
\parb_0 (2 - \Sigma_\alpha) &= -(2 - q)(2 - \Sigma_\alpha) ,\\
\parb_0 (2 - \Sigma_\beta) &= -(2 - q)(2 - \Sigma_\beta) ,\\
\parb_0 (4 + \Sigma_\gamma) &= -(2 - q)(4 + \Sigma_\gamma),
\label{IIgamma}\\
\parb_0 \Omega_\mathrm{stiff} &= -2(2 - q)\Omega_\mathrm{stiff} ,
\end{align}
\end{subequations}
where $(\alpha\beta\gamma) =(123)$, and cycle, and where
$q=2(\Sigma^2 + \Omega_\mathrm{stiff})$, where
$\Omega_\mathrm{stiff}=0$ in the vacuum case.

It follows that the solutions to~\eqref{II} are trajectories
that are straight lines when projected onto
$\Sigma_\alpha$-space. Since $-2 \leq \Sigma_\alpha \leq 2$,
$\alpha=1,2,3$, and $q < 2$ on the type II subset,
equations~\eqref{II} show that $\Sigma_\gamma$
($\Sigma_\alpha,\Sigma_\beta$) is monotonically increasing
(decreasing) toward the past and approaches a limit value on
the type I boundary where $q=2$. Together with the previous
stability analysis, this implies that the solutions originate
from Jacobi/Kasner fixed points with $\Sigma_\gamma< -1$ and
end at Jacobi/Kasner fixed points with $\Sigma_\gamma> -1$,
when the direction of time is taken to be toward the
singularity. Hence the global future attractor of~\eqref{II} is
given by the fix points on ${\cal K}\cup {\cal J}$ (${\cal K}$
in the vacuum case) for which $\Sigma_\gamma\leq -1$, while the
past attractor of~\eqref{II} is given by the fixed points on
${\cal K}\cup{\cal J}$ (${\cal K}$ in the vacuum case) with
$\Sigma_\gamma\geq -1$. Using the nomenclature
of~\cite{heiugg09b,heietal09}, the solution trajectories are
denoted as \textit{single curvature transitions}, and they
reflect and describe the outcome of the past instabilities
associated with $N_{\alpha\beta}$ described in the previous
subsection.

\subsection{Past attractors on the silent boundary}\label{pastatt}

Combining the previous linear and non-linear stability analysis
with the results in subsection~\ref{subsec:typeI}, notably the
past asymptotic consequences for $v^\alpha_{(i)}$ that were
obtained via Eq.~\eqref{v2sub}, now allows us to make some firm
statements about the past attractors on the silent boundary in
cases (i) and (ii), and we also make some predictions about the
past attractor in case (iii). Throughout we use an asymptotic
Fermi frame.

\vspace*{2mm} \noindent {\bf The ultra-stiff fluid case (i)}:

\begin{proposition} The ultra-stiff fluid case (i): If the past
attractor ${\cal A}^-_\mathrm{ultra-stiff}$ is contained on the
silent boundary, then it is given by
\begin{displaymath}
{\cal A}^-_\mathrm{ultra-stiff} = {\cal F}^-,
\end{displaymath}
where ${\cal F}^-$ is characterized by
\begin{displaymath}
(\Sigma_{\alpha\beta}, N_{\alpha\beta},A_\alpha)=(0,0,0),\quad
\Omega_\mathrm{tot}=1, \quad Q^\alpha_\mathrm{tot}=0, \quad
\Pi^\alpha_\mathrm{tot}=0 ,
\end{displaymath}
and
\begin{displaymath}
\begin{array}{rl}
\Omega_\mathrm{tot}&=\Omega_\mathrm{ultra-stiff}=1; \quad
v^\alpha_\mathrm{ultra-stiff}=0;\quad \Omega_{(i)}=0;\\
v^\alpha_{(i)} &=0 \,\,\, \text{when}\,\,\, (c_s^2)_{(i)}>\textfrac{1}{3};\quad
v^\alpha_{(i)}=\hat{v}\hat{c}^\alpha_{(i)} \,\,\, \text{when}\,\,\,
(c_s^2)_{(i)}=\textfrac{1}{3};\quad
v^\alpha_{(i)}=\hat{c}^\alpha_{(i)} \,\,\, \text{when}\,\,\,
(c_s^2)_{(i)}<\textfrac{1}{3} .
\end{array}
\end{displaymath}
\end{proposition}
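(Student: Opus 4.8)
The plan is to read the statement as a synthesis of results already established for case~(i) on the silent boundary, which is legitimate precisely because the hypothesis places the past attractor there. First I would invoke Proposition~\ref{case1}: in the ultra-stiff case the past asymptotic state obeys $\Omega_{\mathrm{ultra-stiff}}\to 1$, $v_{\mathrm{ultra-stiff}}\to 0$, and $(\Sigma_{\alpha\beta},N_{\alpha\beta},\Omega_{(i)})\to 0$ for all $i\neq\mathrm{ultra-stiff}$, with the attractor actually lying on the Friedmann subset $\mathcal{F}$ (isotropic type~I), where $\Omega_k=0$ and $\Sigma^2=0$. On type~I one has $N_{\alpha\beta}=0$ and $A_\alpha=0$ by definition, and $\Sigma^2=\frac{1}{6}\Sigma_{\alpha\beta}\Sigma^{\alpha\beta}=0$ forces $\Sigma_{\alpha\beta}=0$, giving the geometric part $(\Sigma_{\alpha\beta},N_{\alpha\beta},A_\alpha)=(0,0,0)$. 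The Gauss constraint~\eqref{dGausssil}, $1-\Sigma^2-\Omega_k-\Omega_{\mathrm{tot}}=0$, then yields $\Omega_{\mathrm{tot}}=1$, and since $\Omega_{(i)}\to 0$ for every non-dominant fluid this is carried entirely by the dominant component, $\Omega_{\mathrm{tot}}=\Omega_{\mathrm{ultra-stiff}}=1$.

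Next I would read off the vanishing of the source currents from the perfect-fluid decomposition~\eqref{pfrel}. Since $Q^\alpha_{(i)}\propto\Omega_{(i)}v^\alpha_{(i)}$ and $\Pi^{(i)}_{\alpha\beta}=Q^{(i)}_{\langle\alpha}v^{(i)}_{\beta\rangle}$, the dominant fluid contributes nothing because $v^\alpha_{\mathrm{ultra-stiff}}=0$ even though $\Omega_{\mathrm{ultra-stiff}}=1$, while every other fluid contributes nothing because $\Omega_{(i)}=0$. Hence $Q^\alpha_{\mathrm{tot}}=0$ and $\Pi^{\alpha\beta}_{\mathrm{tot}}=0$, exactly as already argued in the discussion preceding subsection~\ref{subsec:typeI}.

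It then remains to pin down the test-field velocities $v^\alpha_{(i)}=v_{(i)}\,c^\alpha_{(i)}$ of the soft fluids on $\mathcal{F}$, and here I would use subsection~\ref{subsec:typeI} directly. For the direction, Proposition~\ref{pecdirprop} in case~(i) gives $c^\alpha_{(i)}=\hat{c}^\alpha_{(i)}$, since $(p_1,p_2,p_3)=\frac{1}{3}(1,1,1)$ makes the right-hand side of~\eqref{pecdir} vanish identically. For the speed, $\Sigma_{\alpha\beta}\to 0$ reduces~\eqref{veqclassA} to the form~\eqref{v2sub} with $p_{\mathrm{max}}=\frac{1}{3}$, i.e. $\parb_0 v=3\bar{G}_-^{-1}(1-v^2)(c_s^2-\frac{1}{3})v$, so that $v\to 0$, $\hat v$, or $1$ according to $c_s^2>\frac{1}{3}$, $c_s^2=\frac{1}{3}$, or $c_s^2<\frac{1}{3}$. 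Combining speed and direction yields $v^\alpha_{(i)}=0$ for $(c_s^2)_{(i)}>\frac{1}{3}$, $v^\alpha_{(i)}=\hat v\,\hat{c}^\alpha_{(i)}$ for $(c_s^2)_{(i)}=\frac{1}{3}$, and $v^\alpha_{(i)}=\hat{c}^\alpha_{(i)}$ for $(c_s^2)_{(i)}<\frac{1}{3}$ (where $v\to 1$ identifies $v^\alpha$ with $c^\alpha$), which is the stated characterization of $\mathcal{F}^-$.

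Since the substantive analysis is carried by the cited results, the only delicate point is justifying that Proposition~\ref{pecdirprop} and the speed equation~\eqref{v2sub}, both derived on the \emph{exact} type~I subset, govern the genuine asymptotics as $\mathcal{F}$ is merely approached. This is precisely where the Strauss--Yorke argument~\cite{stryor67} invoked in subsection~\ref{subsec:typeI} applies: treating the $c^\alpha_{(i)}$ as asymptotically frozen coefficients in the $v$-equation, and using $\Sigma_{\alpha\beta}c^\alpha c^\beta\to 0$, one obtains that the limit values persist. I expect this asymptotic-coefficient step to be the sole place requiring genuine care; the remainder is direct substitution into the constraints and into~\eqref{pfrel}.
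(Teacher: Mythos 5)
Your proposal is correct and follows essentially the same route as the paper, whose entire proof consists of citing Proposition~\ref{case1}, Proposition~\ref{pecdirprop}, and the conclusions drawn from Eq.~\eqref{v2sub}; you have simply unpacked those three ingredients (geometry and $\Omega_\mathrm{tot}$ from Proposition~\ref{case1} plus the Gauss constraint, vanishing of $Q^\alpha_\mathrm{tot}$ and $\Pi^{\alpha\beta}_\mathrm{tot}$ via~\eqref{pfrel}, directions from Proposition~\ref{pecdirprop}, and speeds from~\eqref{v2sub} with $p_\mathrm{max}=\textfrac{1}{3}$). Your explicit appeal to the Strauss--Yorke theorem to justify transferring the exact type~I results to the asymptotic regime is the same device the paper uses in subsection~\ref{subsec:typeI}, so no new gap is introduced.
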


\begin{proof}
This follows directly from Propositions~\ref{case1}
and~\ref{pecdirprop}, and the results that followed from
Eq.~\eqref{v2sub}.
\end{proof}

\vspace*{2mm} \noindent {\bf The stiff fluid case (ii)}:

\begin{proposition}
The stiff fluid case (ii):  If the past attractor ${\cal
A}^-_\mathrm{stiff}$ is contained on the silent boundary, then
it is given by
\begin{equation}
{\cal A}^-_\mathrm{stiff} = (\overline{{\cal J}^\Delta})^- ,
\end{equation}
where $(\overline{{\cal J}^\Delta})^-$ is characterized by
\begin{subequations}
\begin{align}
N_{\alpha\beta} & = 0, \qquad A_\alpha =0,\\
\Sigma_{\alpha\beta} &= \hat{\Sigma}_{\alpha\beta},\qquad
\text{such that}\qquad
\hat{\Sigma}_\alpha \geq -1 \quad (\text{or, equivalently,}\,\, p_\alpha\geq0)
\quad \forall \alpha ,\\
\Omega_\mathrm{tot} & = \hat{\Omega}_\mathrm{stiff}, \qquad Q^\alpha_\mathrm{tot}=0, \qquad
\Pi^\alpha_\mathrm{tot}=0 ,
\end{align}
\end{subequations}
where $(\hat{\Sigma}_1,\hat{\Sigma}_2,\hat{\Sigma}_3)=(3p_1-1,
3p_2-1, 3p_3-1)$ are the (non-ordered) eigenvalues of
$\hat{\Sigma}_{\alpha\beta}$, and (apart from the stiff
fluid(s) for which $\Omega_\mathrm{stiff} =
\hat{\Omega}_\mathrm{stiff} =\Omega_\mathrm{tot}$):
\begin{subequations}\label{pecstiffresult}
\begin{align}
v^\alpha_\mathrm{stiff} &= 0;\qquad \Omega_{(i)}=0;\\
v^\alpha_{(i)} &= 0 \,\,\, \text{when}\,\,\, (c_s^2)_{(i)}>
p_\mathrm{max} = \textfrac{1}{3}(1+ \hat{\Sigma}_\mathrm{max}),\\
v^\alpha_{(i)} &= \hat{v}^\alpha_{(i)} \,\,\, \text{when}\,\,\, (c_s^2)_{(i)}=
p_\mathrm{max} = \textfrac{1}{3}(1+ \hat{\Sigma}_\mathrm{max}),\\
v^\alpha_{(i)} &= \hat{c}^\alpha_{(i)} \,\,\, \text{when}\,\,\, (c_s^2)_{(i)}<
p_\mathrm{max} = \textfrac{1}{3}(1+ \hat{\Sigma}_\mathrm{max}),
\end{align}
\end{subequations}
\end{proposition}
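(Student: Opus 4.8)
The plan is to assemble the claim from three ingredients already in place: the geometric reduction of Proposition~\ref{case2}, the linear and nonlinear stability results of subsection~\ref{perturbI}, and the test-field asymptotics furnished by Corollary~\ref{velocityshearalignment} together with Eq.~\eqref{v2sub}. First I would invoke Proposition~\ref{case2}, which in the stiff case already forces $q\to 2$, $N_{\alpha\beta}\to 0$, $v_\mathrm{stiff}\to 0$, and $\Omega_{(i)}\to 0$ for every non-stiff fluid, so that asymptotically $\Omega_\mathrm{tot}=\hat{\Omega}_\mathrm{stiff}$ and $Q^\alpha_\mathrm{tot}=\Pi^{\alpha\beta}_\mathrm{tot}=0$. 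Combined with $A_\alpha\to 0$, which follows from Lemma~\ref{Non-Taub condition} under assumption~\eqref{positivity condition}, this confines the past attractor to the silent Jacobs subset ${\cal J}$. On ${\cal J}$ the value $q=2$ makes the shear equation~\eqref{sigsil} yield $\Sigma_{\alpha\beta}=\hat{\Sigma}_{\alpha\beta}$ in an asymptotic Fermi frame, and the Gauss constraint~\eqref{dGausssil} fixes $\hat{\Sigma}^2=1-\hat{\Omega}_\mathrm{stiff}$.

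The central and hardest step is to narrow ${\cal J}$ down to the closure of its stable part $\overline{{\cal J}^\Delta}$. Here I would use the linearization~\eqref{geomstabeq}: the modes for $A_\alpha$, for the off-diagonal $N_{\alpha\beta}$, and for $\Omega_{(i)}$ with $(i)\neq\mathrm{stiff}$ are past stable everywhere on ${\cal J}$, whereas the diagonal mode~\eqref{Ndiastab} gives $N_\alpha^{-1}\parb_0 N_\alpha=6p_\alpha$, which is past stable exactly when $p_\alpha>0$ (equivalently $\hat{\Sigma}_\alpha>-1$) and unstable when $p_\alpha<0$. Since at most one $p_\alpha$ can be negative, the unstable locus is precisely the portion of ${\cal J}$ outside ${\cal J}^\Delta$. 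The dynamical content of this instability is supplied by the type II curvature transitions of subsection~\ref{typeII}: projected onto $\Sigma_\alpha$-space these are straight-line orbits along which $\Sigma_\gamma$ increases toward the past, carrying the flow from fixed points with $\Sigma_\gamma<-1$ to fixed points with $\Sigma_\gamma\geq -1$. Thus, toward the singularity, orbits are repelled from the $p_\alpha<0$ part of ${\cal J}$ and accumulate on $\overline{{\cal J}^\Delta}$. By the definition of the past attractor as the smallest closed invariant set capturing the $\alpha$-limits of almost all orbits, the attractor must lie in $\overline{{\cal J}^\Delta}$, yielding the geometric characterization $N_{\alpha\beta}=0$, $A_\alpha=0$, $\Sigma_{\alpha\beta}=\hat{\Sigma}_{\alpha\beta}$ with $\hat{\Sigma}_\alpha\geq -1$ (i.e.\ $p_\alpha\geq 0$) for all $\alpha$, and $\Omega_\mathrm{tot}=\hat{\Omega}_\mathrm{stiff}$.

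It remains to pin down the soft-fluid test velocities $v^\alpha_{(i)}$, which decouple from the geometry. For this I would combine Corollary~\ref{velocityshearalignment}, aligning the asymptotic velocity direction with the eigenvector of the maximal shear eigenvalue $\hat{\Sigma}_\mathrm{max}$, with the reduced speed equation~\eqref{v2sub}. The sign of $(c_s^2)_{(i)}-p_\mathrm{max}$, where $p_\mathrm{max}=\tfrac{1}{3}(1+\hat{\Sigma}_\mathrm{max})$, then determines whether $v_{(i)}$ decreases to $0$, remains at an arbitrary $\hat{v}$, or increases to $1$ toward the past, which gives exactly the three alternatives recorded in~\eqref{pecstiffresult}. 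The main obstacle is the second paragraph: the linearization only establishes pointwise instability, so the genuine work is to argue---via the global type II transition structure, together with the nonlinear stability of $A_\alpha$ and $\Omega_{(i)}$ from Lemma~\ref{Non-Taub condition}, which is what rescues the non-transversally-hyperbolic Taub points on the boundary of ${\cal J}^\Delta$---that the unstable portion of ${\cal J}$ genuinely drops out of the $\alpha$-limit of almost every orbit rather than merely being repelling to linear order.
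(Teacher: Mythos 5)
Your proposal is correct and takes essentially the same route as the paper: the paper's own proof simply cites Proposition~\ref{case2}, Eq.~\eqref{v2sub}, and the linear analysis of~\eqref{geomstabeq} (with velocity directions fixed by Corollary~\ref{velocityshearalignment}), which are precisely the ingredients you assemble. Your additional elaboration---invoking the type~II curvature transitions of subsection~\ref{typeII} and the Milnor attractor definition to pass from pointwise linear instability outside ${\cal J}^\Delta$ to exclusion from the attractor---makes explicit the same consistency argument the paper gives in subsection~\ref{perturbI} when it concludes that the attractor must lie in $\overline{{\cal J}^{\Delta}}$.
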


\begin{proof}
Follows directly from Proposition \ref{case2},
Eq.~\eqref{v2sub}, and the linear analysis of
\eqref{geomstabeq}. Note that the velocity directions refer to
the `dominant' shear eigen-directions, as discussed after
Eq.~\eqref{v2sub}.
\end{proof}

\vspace*{2mm} \noindent {\bf The soft fluid case (iii)}: In the
soft fluid case we have no proof, but the previous analysis
suggests that there exists a past attractor ${\cal
A}^-_\mathrm{soft}$ subset, on the vacuum part of the silent
boundary, that describes the asymptotic dynamics of a timeline
in terms of Mixmaster like behavior (see~\cite{waiell97} sec.
6.4), where the asymptotic dynamics is approximated by an
infinite heteroclinic sequence that reside on ${\cal
A}^-_\mathrm{soft}$, which hence breaks asymptotic
self-similarity~\cite{waietal99}; ${\cal A}^-_\mathrm{soft}$ is
given by
\begin{equation}
{\cal A}^-_\mathrm{soft} = {\cal K}\cup {\cal B}_\mathrm{II}^\mathrm{vacuum},
\end{equation}
where $\Sigma_{\alpha\beta}$ on the Kasner subset ${\cal K}$ is
described by
\begin{equation}
\Sigma_{\alpha\beta} = \hat{\Sigma}_{\alpha\beta} ,
\end{equation}
since we use a Fermi frame, and where ${\cal
B}_\mathrm{II}^\mathrm{vacuum}$ is the silent vacuum Bianchi
type II subset.\footnote{This subset consists of the union of
six disjoint Bianchi type II subset representations, each
characterized by the sign of a single non-zero eigenvalue of
$N_{\alpha\beta}$.}

Since the conjectured attractor consists of the vacuum type I
and II subsets, we expect that we (at least generically, recall
footnote 8) can asymptotically diagonalize
$\Sigma_{\alpha\beta}$ and $N_{\alpha\beta}$ (although the
diagonalized shear directions will typically be different for
different timelines). To describe the asymptotic dynamics we
hence perform a constant rotation that diagonalizes an `initial
asymptotic' Kasner point
$\hat{\Sigma}_{\alpha\beta}^\mathrm{i}$, which leads to that
the subsequent dynamics is described by the same heteroclinic
sequence (Kasner states joined by type II curvature
transitions) as in the vacuum Bianchi type VIII and IX cases,
see e.g.~\cite{waiell97}.

In the soft fluid case the description of ${\cal
A}^-_\mathrm{soft}$ also involves the asymptotic test fields
$v^\alpha_{(i)}$. Just as $\Sigma_{\alpha\beta}$ and
$N_{\alpha\beta}$ oscillate perpetually, so do the fields
$v^\alpha_{(i)}$. The effects of a sequence of Kasner
transitions by means of curvature type II transitions is
two-fold: (a) a change of Kasner state, (b) a change of ordered
Kasner shear eigen-directions associated with $p_\alpha\leq
p_\beta \leq p_\gamma$, where $(\alpha\beta\gamma) = (123)$, or
a permutation thereof. The curvature transitions induce a
sequence of `tilt' (peculiar velocity) transitions. As the
dynamics approach the attractor it follows that the asymptotic
vacuum dynamics spend an increasing time near the Kasner fix
points. This leads to that the asymptotic test fields $v_{(i)}$
have increasingly long periods of time to reach their past
asymptotic states on the Kasner subset. This in turn implies
that velocities for increasingly long times are either almost
aligned, anti-aligned, or one or two of the associated speeds
are zero, depending on $(c_s^2)_{(i)}$, $(c_s^2)_{(j)}$ ($i\neq
j$), and $p_\mathrm{max}$, see subsection~\ref{subsec:typeI}.
However, this correlation is temporally broken and changed
whenever there is a curvature transition. But since the
curvature transitions are increasingly dominated in time by the
Kasner states it follows that the probability of finding
$v^\alpha_{(i)}$ and $v^\alpha_{(j)}$ in the previously
described correlated state increases with time.\footnote{
In~\cite{heietal09} it was shown that one can expect that
oscillations yield cumulative trends over time. This also
pertains to the peculiar velocities, and the results
in~\cite{heietal09} suggest that such trends depend on the
stiffness of the equation of state; this is also suggested by
numerical experiments for special models \cite{LimHervik2009}.}

\section{Past stability and instability in the full state
space}\label{stabfull}

We now turn to the discussion of the role of the past
attractors on the silent boundary of the stiff and soft cases
in the full physical state space. We have previously assumed
that the dynamics approach the silent boundary toward the past
and that $E_\alpha{}^i\rightarrow 0$, and we have subsequently
worked out the consequences of these assumptions. To check the
consistency of this it is of interest to compute $E_\alpha{}^i$
`on' the past attractors by inserting the attractor subset
variable values in $F_\alpha{}^\beta$ in the evolutions
equation $\parb_0\, E_\alpha{}^i = F_\alpha{}^\beta
E_\beta{}^i$, thus yielding a lowest order past attractor
perturbation of $E_\alpha{}^i$ in the full state space. Since
the past attractor resides on the type I subset in the stiff
case (ii) and since we expect the type I subset to `dominate'
the Mixmaster dynamics in the soft case (iii) (the ultra-stiff
case (i) has already been discussed previously), we insert the
diagonalized shear values on ${\cal J}, {\cal K}$ in
$F_\alpha{}^\beta$; this yields the following equation for the
individual $E_\alpha{}^i$ components:
\begin{equation}\label{Estab}
(E_\alpha{}^i)^{-1}\,\parb_0\,E_\alpha{}^i|_{{\cal J},{\cal K}}
= 2-\hat{\Sigma}_{\alpha} = 3(1-p_\alpha) ,
\end{equation}
and thus we see that $E_\alpha{}^i$ is stable toward the past
everywhere on ${\cal J}, {\cal K}$, except at the Taub points,
as is to be expected, but which nevertheless yields support for
the assumption $E_\alpha{}^i\rightarrow 0$.

Next we discuss $r_\alpha$. Recall that $r_\alpha =
-E_\alpha{}^i\ptl_i\ln H$, and since $E_\alpha{}^i\rightarrow
0$, then $r_\alpha \rightarrow 0$ if we have chosen a gauge so
that $\ptl_i\ln H$ does not blow up too fast. A way at looking
at the evolution of $r_\alpha$ is to heuristically regard the
evolution equation~\eqref{dlrdot} for $r_\alpha$ asymptotically
as an equation of the form $\parb_0\,r_\alpha=a_\alpha{}^\beta
r_\beta + b_\alpha$, where $a_\alpha{}^\beta$ is
$F_\alpha{}^\beta$ computed on the past attractor, while
$b_\alpha$ is $(\parb_{\alpha} + \Udot_\alpha)(q+1)$ calculated
`on' the past attractor, where $E_\alpha{}^i$ in $\parb_\alpha$
is computed by inserting the attractor values in
$F_\alpha{}^\beta$, which leads to the evolution equation
$\parb_0\,E_\alpha{}^i = a_\alpha{}^\beta\,E_\alpha{}^i$. This
leads to that one can regard $a_\alpha{}^\beta$ and $b_\alpha$
as time dependent coefficients \emph{on} a given timeline,
effectively leading to an ODE for $r_\alpha$ where the solution
for $r_\alpha$ is given by the solution to the homogenous
equation $\parb_0\,r_\alpha = a_\alpha{}^\beta r_\beta$ added
to a particular solution associated with $b_\alpha$. However,
due to that the homogenoeus equation for $r_\alpha$ has the
same character as that for $E_\alpha{}^i$ it follows that the
homogeneous solution tends to zero. Thus we require a gauge
that is such that the particular solution also tends to zero,
where the freedom in the gauge choice is reflected in the term
$b_\alpha$; we expect that we require a gauge such that
$(\parb_{\alpha} + \Udot_\alpha)(q+1)$ tends to zero reasonably
fast. Considering that $q$ is 2 in the stiff case or `almost
always' 2 in the soft case due to `Kasner dominance' this
suggest that this is presumably a rather wide
class.\footnote{In~\cite{andetal05}, where ${\cal
M}_\alpha=W_\alpha=0$, it was noted that $E_\alpha{}^i=0,
\Udot_\alpha=0$ yields an invariant boundary subset, where
$r_\alpha\neq 0$ leads to the same equations as those for
spatially self-similar models. In~\cite{andetal05}, this subset
was referred to as the silent boundary, but since $\parb_0\,
r^2=(q\delta_\alpha{}^\beta -
\Sigma_\alpha{}^\beta)r_\alpha\,r^\beta$ on this subset, which
leads to that $r_\alpha\rightarrow 0$, we have chosen to focus
on the subset with $r_\alpha=0$, which we here has referred to
as the silent boundary. Furthermore, note that $r_\alpha$ is
stable toward the past on the `extended' silent boundary.} For
an example of a gauge with $({\cal M}_\alpha,W_\alpha)=(0,0)$
for which there is numerical support that $r_\alpha\rightarrow
0$ (as well as $\Udot_\alpha\rightarrow 0$),
see~\cite{andetal05}.

We now turn from considering reference congruences in general,
to the issue if there are \emph{fluid congruences} for which
the Hubble normalized vorticity $W^\alpha$ and acceleration
$\Udot^\alpha$ will vanish asymptotically. Choosing the
timelike reference congruence as one of the fluid congruences
implies that for that fluid $v_\alpha=0,\, \rho=\tilde{\rho},\,
p=\tilde{p}$, and $Q_\alpha = \Pi_{\alpha\beta}=0$, while
$P=w\Omega$ (again we drop the index $(i)$). The fluid
equations reduce to (obtained by specializing the total matter
equations~\eqref{dmattereq} in Appendix~\ref{app} to a single
comoving perfect fluid)
\begin{subequations}
\begin{align}
\parb_0\,\Omega &= [2q - 1 - 3w]\,\Omega, \\
0 & = c_s^2\left(\parb_\alpha - 2r_\alpha\right)\Omega +
(1+w)(\Udot_\alpha + r_\alpha)\Omega,
\end{align}
\end{subequations}
or equivalently,
\begin{subequations}\label{rhopeq}
\begin{align}
\parb_0\,\rho &= -3(\rho + p), \\
0 & = \parb_\alpha\,p + (\Udot_\alpha + r_\alpha)(\rho + p).
\end{align}
\end{subequations}
Assuming that the weak energy condition holds strictly for the
fluid component at hand, i.e., $\rho>0$ and $\rho + p>0$, makes
it possible to introduce the particle density $n$ and the
chemical potential $\mu$,
\be \frac{dn}{n} = \frac{d\rho}{\rho + p}, \qquad \mu
=\frac{\rho + p}{n}, \qquad \frac{d\mu}{\mu} = \frac{dp}{\rho +
p}, \ee
which, together with~\eqref{rhopeq}, yields
\begin{subequations}\label{dcomovemeq}
\begin{align}
\parb_0\,n &= -3 n,\\
\quad 0 &= (\parb_\alpha + \Udot_\alpha + r_\alpha)\mu,\label{mu}
\end{align}
\end{subequations}
where a suitable function of $n$ may be useful as a matter
variable in the case $w\neq const$, see~\cite{heietal05}.

By applying $\parb_0$ to~\eqref{mu} and using~\eqref{dlrdot}
and~\eqref{dc0a} we obtain
\begin{equation}\label{udotcomov}
\parb_0\,\Udot_\alpha =
[F_\alpha{}^\beta + (3c_s^2 - 1 -
q)\delta_\alpha{}^\beta]\,\Udot_\beta + \parb_\alpha(3c_s^2 -
q).
\end{equation}
Equations~\eqref{dcomovemeq} and~\eqref{dlrcon} together with
applying~\eqref{dcab} to $\ln \mu$, and using the relation
$d\ln \mu/d\ln n = c_s^2 = dp/d\rho$, yield
\begin{equation}
\textfrac{1}{2}{\bf C}_\alpha{}^\beta\,\Udot_\beta = (3c_s^2 - q
-1)\,W_\alpha,
\end{equation}
which allows equation~\eqref{wevol} to be written on the form
\begin{equation}\label{omegaevolinv}
\parb_0\,W_\alpha  =
(F_\alpha{}^\beta + (3c_s^2 -1)\,\delta_\alpha{}^\beta +
2\Sigma_\alpha{}^\beta)\,W_\beta.
\end{equation}

Following Taub~\cite{taub69,elsugg97}, we let
\be M=\frac{M_0}{\mu}, \ee
where $M_0=M_0(x^0)$, which, via~\eqref{Heq}, \eqref{dMaeq},
and \eqref{deeq} yields that
\be M_i=M_i(x^j)=\hat{M}_i, \ee
which gives that the time dependence of ${\cal M}_\alpha$ is
determined by $E_\alpha{}^i$ since
\begin{equation}\label{Ma}
{\cal M}_\alpha=E_\alpha{}^i\,\hat{M}_i.
\end{equation}
Applying equations~\eqref{dMconst} and~\eqref{deconst} to this
result gives $W_\alpha= \textfrac{1}{2}{\cal M}E_\beta{}^i{\bf
C}_\alpha{}^\beta\,\hat{M}_i$ (a relation that is equivalent to
the non-normalized coordinate frame expression
$\omega_{ij}=M\partial_{[i}\hat{M}_{j]}$). Since Eq.~\eqref{Ma}
implies that if $E_\alpha{}^i\rightarrow 0$ then ${\cal
M}_\alpha\rightarrow 0$ it remains to investigate if $W_\alpha$
and $\Udot_\alpha$ tends to zero toward the past.

In the ultra-stiff case (i) it follows straight forwardly that
$W_\alpha\rightarrow 0$ for the ultra-stiff fluid in the
neighborhood of ${\cal F}$. If in addition
$\parb_\alpha(3(c_s^2)_\mathrm{ultra-stiff}-q)\rightarrow 0$
sufficiently fast, which can be shown to be a consistent
condition by means of an analysis similar to that of other
isotropic singularities undertaken in~\cite{limetal04} (see
also~\cite{collim05}), then also $\Udot_\alpha\rightarrow 0$;
this is to be expected since $v_\mathrm{ultra-stiff}=0$
asymptotically when measured some congruence that is assumed to
satisfy the asymptotic surface formation condition
\eqref{condasila} (we also expect that soft fluids with
$c_s^2>\frac{1}{3}$ in the ultra-stiff case satisfy the
asymptotic surface formation condition since they lead to
$v^\alpha\rightarrow 0$, cf. subsection~\ref{subsec:typeI}).

Let us turn to the stiff (ii) and soft cases (iii). In analogy
with subsection~\ref{perturbI}, let us study the stability of
$W_\alpha$ and $\Udot_\alpha$ by making a perturbation of
${\cal J}^\Delta/{\cal K}$ in a shear diagonalized Fermi frame.
Eq.~\eqref{omegaevolinv} then yields
\begin{equation}\label{Wstab}
W_\alpha^{-1}\parb_0\,W_\alpha|_{{\cal J}^\Delta,{\cal K}}  =
1 + 3c_s^2  +
\hat{\Sigma}_\alpha = 3(c_s^2 + p_\alpha) ,
\end{equation}
which requires $c_s^2+p_\alpha>0\,\forall\,\alpha$ in order for
$W_\alpha\rightarrow 0$.

On ${\cal J}^\Delta$, the stable $\hat{\Sigma}_\alpha$
satisfies $\hat{\Sigma}_\alpha>-1,\,\,\forall \,\alpha$
($p_\alpha>0,\,\,\forall \,\alpha$), and on this part
$W_\alpha\rightarrow 0$ when $c_s^2\geq 0$. In the soft case
(iii) $\min(p_1,p_2,p_3)=-\frac{1}{3}$ on ${\cal K}$ and thus
$c_s^2> \frac{1}{3}$ leads to that $W_\alpha\rightarrow 0$
everywhere on ${\cal K}$, but for fluids with
$c_s^2<\frac{1}{3}$, parts of the ${\cal K}$ become past
unstable with respect to the vorticity, and for dust ($w =
c_s^2 = 0$) all of ${\cal K}$ is unstable (except at the
non-transversally-hyperbolic Taub points); hence the vorticity
of dust does not vanish in the approach to the singularity. For
$0<c_s^2<\frac{1}{3}$ it is the cumulative effect over time of
the factor $c_s^2+p_\alpha$ that matters; to determine this
effect would require a study by means of, for example, methods
used in~\cite{heietal09}, which we will refrain from since it
is not enough that the vorticity tends to zero in order for the
asymptotic surface formation condition~\eqref{condasila} to be
fulfilled, it is also required that $\Udot_\alpha\rightarrow
0$.

The analysis of~\eqref{udotcomov} of the past asymptotic
behavior of $\Udot_\alpha$ is complicated by the term
$\parb_\alpha(3c_s^2 - q)$. However, by considering its
asymptotic expression, by inserting the asymptotics for $q$ and
$c_s^2$, and by solving the evolution equation for
$E_\alpha{}^i$ `on' the silent boundary (i.e., by perturbing
the past attractor to lowest order), this term can be regarded
as a time-dependent inhomogeneous term; similarly one can
compute the factor before $\Udot_\alpha$ on the r.h.s., which
yields an equation of the form $\parb_0\,\Udot_\alpha
=a\Udot_\alpha + b_\alpha$, where $a$ and $b_\alpha$ can be
regarded as given time dependent functions \textit{on} a given
timeline. Hence the general solution can be obtained by adding
a particular solution to the general solution of the
homogeneous part, $\parb_0\,\Udot_\alpha =a\Udot_\alpha$. In
order for the fluid to be asymptotically surface
forming~\eqref{condasila} it is required that
$\Udot_\alpha\rightarrow 0$ \textit{generically}, and a
necessary condition for this is that $\Udot_\alpha\rightarrow
0$ according to the homogeneous equation, which, when computed
in a Fermi frame on ${\cal J}^\Delta$/${\cal K}$, yields
\begin{equation}\label{Udotstab}
\Udot_\alpha^{-1}\parb_0\,\Udot_\alpha|_{{\cal J}^\Delta,{\cal K}} =
3c_s^2 - 1 - \hat{\Sigma}_\alpha = 3(c_s^2 - p_\alpha) .
\end{equation}

In the stiff case (ii) $\Udot_\alpha\rightarrow 0$ requires
that $c_s^2>p_\mathrm{max}$ on ${\cal J}^\Delta$, i.e., the
same condition as required for $v^\alpha\rightarrow 0$ (which
of course is to be expected). The condition that
$\Udot_\alpha\rightarrow 0$ holds \textit{everywhere} on ${\cal
J}^\Delta$, and in this sense holds for a generic solution,
requires that $c_s^2 - p_\mathrm{max}>0$ everywhere on ${\cal
J}^\Delta$, which leads to that $c_s^2=1$, i.e., none of the
softer fluids will in this case fulfill
condition~\eqref{condasila}.

In the soft case (iii) there will always be parts of ${\cal K}$
that are past unstable with respect to the fluid accelerations,
the asymptotic behavior then depends on the cumulative effect
of the factor $(c_s^2 - p_\alpha)$, as it does with the
vorticity; for fluids with $c_s^2<\frac{2}{3}$ all of ${\cal
K}$ is unstable in at least one mode, since one of the
$p_\alpha\geq \frac{2}{3}$; hence the acceleration does not
vanish for any fluid with low sound speed. This in turn
probably leads to that the condition on the particular solution
for $r_\alpha$ breaks down and hence $r_\alpha$ does not tend
to zero either. Hence comoving gauges for fluids with
$c_s^2<\frac{2}{3}$ are not gauges that are compatible with the
asymptotically surface forming condition~\eqref{condasila}, and
it may be that this is also the case when $c_s^2\geq
\frac{2}{3}$ (to provide plausible arguments for this would
require an extensive study using, for example, methods
described in~\cite{heietal09}).

We conclude this section with a discussion of the asymptotic
behavior of $\rho_{(i)}$ and $\tilde{\rho}{(i)}$. The evolution
equation for ${\rm ln}\,\rho_{(i)}$ on the silent class A
subset is governed by the sign of the factor $-[3 + v^2_{(i)} +
\Sigma_{\alpha\beta}\,
c^\alpha_{(i)}\,c^\beta_{(i)}\,v^2_{(i)}] = -[2 + (1 -
v^2_{(i)}) + (2\delta_{\alpha\beta} +
\Sigma_{\alpha\beta})c^\alpha_{(i)}\,
c^\beta_{(i)}\,v^2_{(i)}]$, see Eq.~\eqref{energydensevol}.
Because of Corollary~\eqref{sigmabound}, $v^2_{(i)} \leq 1$
leads to that the factor associated with $\rho_{(i)}$ is
strictly negative, and hence it follows that $\lim_{x^0 \to -
\infty}\rho_{(i)}\to \infty,\ \forall\ i$ if a solution
approaches the attractor on the silent boundary.

The evolution equation for ${\rm ln}\,\tilde{\rho}_{(i)}$ on
the class A boundary is governed by the sign of the factor $-[3
- v^2_{(i)} - \Sigma_{\alpha\beta}\,
c^\alpha_{(i)}\,c^\beta_{(i)}\,v^2_{(i)}] = -[3(1-v^2_{(i)}) +
(2\delta_{\alpha\beta} -
\Sigma_{\alpha\beta})c^\alpha_{(i)}\,c^\beta_{(i)}\,v^2_{(i)}]$,
see Eq.~\eqref{energydensevol}. Lemma \ref{Non-Taub condition}
and $v^2_{(i)} \leq 1$ suggests that $\lim_{x^0 \to -
\infty}\tilde{\rho}_{(i)}\to \infty,\ \forall\ i$, but
unfortunately, we do not have a strict inequality in this case,
and since $\tilde{\rho}_{(i)}$ needs to be evaluated in the
interior physical state space it is not certain that the
quantities that have been neglected in the above equation do
not prevent $\tilde{\rho}_{(i)}\rightarrow\infty$; nevertheless
$\tilde{\rho}_{(i)}\rightarrow\infty$ is normally what is
assumed in a BKL context and we do so here as well. Thus e.g.
when $c_s^2$ appears in an equation that is used in an
asymptotic context it refers to the limit when
$\tilde{\rho}_{(i)}\rightarrow\infty$, and either represents an
actual limit or a bound.

\section{Conclusions}\label{concl}

We have studied the past asymptotic dynamics of spacetimes with
an arbitrary number of perfect fluids with non-zero peculiar
velocities, and with general barotropic equations of state,
where it has been assumed that the Hubble-normalized
interactions can be asymptotically neglected. Using dynamical
systems methods on a system of equations obtained from the 1+3
Hubble-normalized conformal orthonormal frame approach, we have
reformulated two well known conjectures by Belinksii,
Khalatnikov and Lifshitz, about properties at the vicinity of a
generic spacelike cosmological singularity, to conditions on
our variables (`the locality conjecture' (1) and `the matter
does not matter' conjecture (2)) and worked out the
consequences of these assumptions. We have shown that from the
assumption of `the locality conjecture' (1) alone follows:

\begin{itemize}
\item In the case where there exists at least one fluid
    with an equation of state that is ultra-stiff
    asymptotically to the past (i.e., the speed of sound
    $c_s$ satisfies the inequality $c_s^2 > 1$), then the
    Hubble-normalized shear and spatial curvature will
    vanish asymptotically along with the Hubble-normalized
    energy densities $\Omega_{(i)}$ of all the fluids but
    the one with the asymptotically stiffest equation of
    state, which will have a Hubble-normalized density
    parameter $\Omega_\mathrm{ultra-stiff}$ of unity and a
    vanishing peculiar velocity. The peculiar velocities
    $v^\alpha_{(i)}$ for fluids with $\Omega_{(i)}=0$
    asymptotically vanishes in this case when
    $(c_s^2)_{(i)} < 1/3$ while $v^2_{(i)} = 1$ if
    $(c_s^2)_{(i)} > 1/3$.

\item In the case where no fluid is asymptotically
    ultra-stiff, but at least one fluid is asymptotically
    stiff ($w = c_s^2 = 1$), the past asymptotic temporal
    behavior is given by a Jacobs solution. In this case
    $v^\alpha_\mathrm{stiff}=0$ and $\Omega_{(i)}=0$
    asymptotically, $\forall i \neq \mathrm{stiff}$ such
    that $(c_s^2)_{(i)}<1$. The peculiar velocities
    $v^\alpha_{(i)} \rightarrow 0$ tend to zero when
    $(c_s^2)_{(i)} > p_{\mathrm{max}}$, where
    $p_{\mathrm{max}}$ is the maximal shape parameter,
    while $v^2_{(i)} \rightarrow 1$ when $c_s^2 <
    p_{\mathrm{max}}$.

\item In the case when there are no stiff or ultra-stiff
    fluids, but an arbitrary number of fluids with soft
    equations of state ($c_s^2 < 1$), the past asymptotic
    state resides on the union of the Bianchi type I, II,
    VI$_0$ or VII$_0$ subsets on the silent boundary.
    Furthermore, in this case we have made the additional
    assumption that `the matter does not matter' conjecture
    (2) holds, and have provided some arguments that the
    past attractor is Mixmaster like, with oscillations
    between Bianchi type I and II vacuum solutions on the
    silent boundary. The peculiar velocities of the
    fluids---which become test field---become forever
    oscillating, both in direction and amplitude as the
    oscillations change the shear via the so-called Kasner
    map (see e.g.~\cite{heiugg09b}). Interestingly the
    extreme properties of the asymptotic spacetime geometry
    induces `correlation effects' among the different
    peculiar velocities.
\end{itemize}

By studying the vorticity and acceleration in a fluid comoving
frame we come to the conclusion that the fluid comoving gauges
for stiff and ultra-stiff fluids obey the `locality conjecture'
(1) and are therefore acceptable gauges. However, this is not
the case for models with only fluids with asymptotically soft
equations of state. In such models fluid comoving gauges for
fluids with $c_s^2 < 2/3$, which notably include dust and
radiation equations of state, are not compatible with the gauge
requirement associated with `the locality conjecture' (1), and
fluid comoving gauges for fluids with $2/3 \leq c_s^2 < 1$ may
be inadmissible as well, although the latter is an open issue.

In all, our results for spacetimes with multiple fluids agree
with previous studies of special models with soft
(e.g.~\cite{uggetal03,hewetal01,her04}), stiff
(e.g.~\cite{andren01}), and ultra-stiff single fluid
models~\cite{collim05}. However, we here studied asymptotic
dynamics in a general infinite-dimensional dynamical systems
setting, where we pursued the consequences of BKL-like
assumptions, and this led e.g. to the conclusion that comoving
fluid gauges are, for the most physically interesting fluid
cases, incompatible with BKL-like behavior. It follows that a
matter element will always move w.r.t. to a frame that obeys
the gauge requirements of `the locality conjecture' (1),
furthermore, even in the fluid comoving gauge a matter element
will accelerate and pick up momentum w.r.t. the rest frame of
the fluid. In this sense matter momentum will matter toward the
singularity (which is not the case for fluids with stiff or
ultra-stiff equations of state), even though `matter does not
matter' for the asymptotic spacetime geometry, answering a
speculation posed in~\cite{uggetal03}. It was also beneficial
to consider multiple fluids, since this made it possible to
investigate the relative evolution of the fluid themselves with
some interesting results, like the dominance of the stiffest
fluid, Eq.~\eqref{rhocomp}, and the peculiar velocity shear
alignment in type I (Corollary~\ref{velocityshearalignment}),
which led to suggestive results about asymptotic correlations
between different peculiar velocities for asymptotic
oscillatory behavior, as discussed in subsection~\ref{pastatt}.

Our analysis has rested on the assumption that the silent
boundary is approached, and in the soft fluid case on the
further assumption of asymptotic vacuum dominance, and even
though we found support for our assumptions it would be
desirable for further study to establish firm results on all
points. We therefore list three open problems.

\begin{itemize}
\item There seems to exist an intricate connection between
    asymptotically approaching the Taub subset, described
    in Appendix~\ref{invbound}, and the violation of
    lemma~\ref{Non-Taub condition} and the
    condition~\eqref{positivity condition}. Furthermore,
    in~\cite{heietal09} it was shown that one statistically
    with increasingly probability find the state of a
    solution in a small neighborhood of the Taub subset in
    the approach to an oscillating singularity. Moreover,
    there seems to be a connection between weak null
    singularities and the Taub subset~\cite{limetal06}.
    Hence there is a need for a detailed separate study of
    the Taub subset, but such an analysis is unfortunately
    likely to pose a major challenge.

\item Another less formidable future possibility is to
    apply the methods in~\cite{heietal09} to study
    cumulative trends for peculiar velocities, discussed in
    subsection~\ref{pastatt}.

\item A third possible investigation would be to
    investigate what BKL-like conjectures, analogous to the
    presently formulated ones, would imply for other
    sources. When does `matter does not matter' toward the
    initial singularity´ hold in this more general context?
\end{itemize}

\subsection*{Acknowledgments}
It is a pleasure to thank John Wainwright, Henk van Elst, and
Woei Chet Lim, who all have been crucial in producing material
that has served as the foundation for the present paper. CU is
supported by the Swedish Research Council.

\begin{appendix}

\section{The 1+3 conformally Hubble-normalized dynamical systems approach}\label{app}

To establish conventions and notation, we in this Appendix
briefly introduce the conformal 1+3 Hubble-normalized dynamical
systems approach. This constitutes a specialization of the
results in~\cite{rohugg05}, in combination with that we derive
the general perfect fluid equations; for further details and
motivation we refer to~\cite{rohugg05}.

In the conformal Hubble-normalized orthonormal frame approach,
cf.~\cite{rohugg05,ugg08}, we introduce {\em a conformal
`Hubble-normalized' orthonormal frame\/} of ${\bf g}$ (or,
equivalently, an orthonormal frame of ${\bf G}$) according to
${\bf g} = H^{-2}\,{\bf G} =
H^{-2}\,\eta_{ab}\,\bOm^a\,\bOm^b$, where the one-forms
$\bOm^a$ are related to the conformal orthonormal vector fields
$\parb_a$ via $\langle\,\bOm^a,\,\parb_b\,\rangle =
\delta^{a}{}_{b}$.
We align $\parb_0$ with a timelike reference congruence, which
leads to that $\parb_0$ and $\parb_\alpha$ are given by:
\begin{equation}
\parb_0 = H^{-1}\,\vece_0 = {\cal M}^{-1}\ptl_{x^0}, \qquad \parb_\alpha =
H^{-1}\,\vece_\alpha = {\cal M}_\alpha\,{\cal M} \parb_0 + E_\alpha{}^i \ptl_i,
\end{equation}
where ${\cal M}$ and ${\cal M}_\alpha$ are the conformally
Hubble-normalized threading lapse function and shift vector,
respectively; $x^0$ denotes the time coordinate along the
timelike reference congruence, while $\ptl_i=\partial_{x^i}$,
where $x^i$ are spatial coordinates ($i=1,2,3$). Throughout we
express the derivatives in all equations by means of the
derivative operators $\parb_0$ and $\parb_\alpha$. Partial
derivatives are thus `weighted' with conformally normalized
frame variables, and this is one of the main advantages of the
present formalism.

The deceleration parameter $q$ and $r_\alpha$ are objects that
are kinematically defined by
\begin{equation}\label{Heq}
\parb_{0}H = -\,(q+1)\,H,
\qquad\qquad \parb_{\alpha}H = -\,r_{\alpha}\,H.
\end{equation}
For dimensional reasons, the above equations for the
dimensional Hubble variable $H$, associated with the timelike
reference congruence in the physical spacetime associated with
${\bf g}$, must decouple from all equations that only involve
dimensionless variables and operators.

It is useful to write the dimensionless commutator equations on
the following operator form:
\begin{subequations}\label{dcommutator}
\begin{align}
\label{dc0a} 0 & = (\parb_\alpha + \Udot_\alpha)\parb_0 -
(\delta_{\alpha}{}^{\beta}\,\parb_0
- F_{\alpha}{}^{\beta})\,\parb_{\beta},\\
\label{dcab} 0 & = 2W_{\alpha}\,\parb_0 -
\vec{C}_{\alpha}{}^{\beta}\,\parb_{\beta},
\end{align}
\end{subequations}
where
\begin{subequations}\label{FCdef}
\begin{align}
F_{\alpha}{}^{\beta} &=  -[{\cal H}\,\delta_{\alpha}{}^{\beta} +
\Sigma_{\alpha}{}^{\beta} + \epsilon_{\alpha}{}^{\beta}{}_{\gamma}\,(W^{\gamma}+R^{\gamma})]
= q\,\delta_{\alpha}{}^{\beta} -
\Sigma_{\alpha}{}^{\beta} - \epsilon_{\alpha}{}^{\beta}{}_{\gamma}\,(W^{\gamma}+R^{\gamma}),\\
\vec{C}_{\alpha}{}^{\beta} &= \epsilon_{\alpha}{}^{\gamma\beta}\,
(\parb_\gamma - A_\gamma) - N_{\alpha}{}^{\beta},
\end{align}
\end{subequations}
where we have used $\parb_{0}H = -\,(q+1)\,H$ to obtain the
relationship $q=-{\cal H} = - \frac{1}{3}\Theta$, which relates
the deceleration parameter $q$ to the (Hubble-) conformal
Hubble scalar ${\cal H}$ and expansion $\Theta$; for the
physical interpretation of the other quantities, see
Section~\ref{intro}.

We will be concerned with general relativity and hence we
impose Einstein's field equations:
\be G_{ab}=T_{ab}, \ee
where we have chosen $c=1=8\pi G$ as units, where $c$ is the
speed of light in vacuum and $G$ is Newton's gravitational
constant.

We make a 1+3 split of the stress-energy tensor $T_{ab}$ w.r.t.
the tangential 4-velocity $u^a$ of the reference congruence in
the physical spacetime ${\bf g}$ according to:
\begin{subequations}\label{emomcons}
\begin{align}
T_{ab} &= \rho\,u_a\,u_b + 2q_{(a}\,u_{b)} + p\,h_{ab} +
\pi_{ab},\\
h_{ab} &= u_a\,u_b + g_{ab} \quad \Rightarrow \quad h_{ab}\,u^b
=0; \qquad q_a\,u^a=0, \qquad \pi_{ab}u^a = 0,\qquad
\pi^a{}_a=0,
\end{align}
\end{subequations}
and hence the total stress-energy is encoded in the objects
$(\rho, p, q_\alpha, \pi_{\alpha\beta})$, where
$\pi^\alpha{}_\alpha=0$.

The conformal transformation naturally yields new dimensionless
matter variables by scaling $\rho, p, q_\alpha,
\pi_{\alpha\beta}$ with $H^{-2}$, however, to conform with the
standard definition $\Omega = \rho/(3H^2)$, we instead scale
the matter variables as follows:
\begin{equation}\label{dlcurv}
\{\Om, \,P, \,Q^{\alpha}, \,\Pi_{\alpha\beta}\} = \{\rho, \,p,
\,q^{\alpha}, \,\pi_{\alpha\beta}\}/(3H^{2}).
\end{equation}

The field equation for the dimensionless frame and commutator
variables (obtained from the commutator equations, the Jacobi
identities, and the Einstein equations) are conveniently
grouped into evolution equations, and constraint equations:

\vspace*{2mm} \noindent {\em Evolution equations}:
\begin{subequations}\label{devoleq}
\begin{align}
\parb_0\, {\cal M}_\alpha &= F_\alpha{}^\beta\,{\cal M}_\beta + (\parb_\alpha +
\Udot_\alpha){\cal M}^{-1},\label{dMaeq}\\
\parb_0\,W_\alpha &= (F_\alpha{}^\beta +
q\delta_\alpha{}^\beta + 2\Sigma_\alpha{}^\beta)\,W_\beta +
\textfrac{1}{2}{\bf
C}_\alpha{}^\beta\,\Udot_\beta,\label{wevol} \\
\parb_0\, E_\alpha{}^i &= F_\alpha{}^\beta\, E_\beta{}^i,\label{deeq}\\
\parb_0\, \Sigma_{\alpha\beta} &= -(2-q)\Sigma_{\alpha\beta} +
2\epsilon^{\gamma\delta}{}_{\la \alpha}\,\Sigma_{\beta\ra
\delta}\,R_\gamma - \, {}^3\!{\cal S}_{\alpha\beta} +
3\Pi_{\alpha\beta}
- 2W_{\la\alpha}\,R_{\beta\ra} \nonumber\\
&\quad\, + (\parb_{\la\alpha}  + \Udot_{\la \alpha} + A_{\la
\alpha})\,\Udot_{\beta\ra} + 2(\parb_{\la\alpha}  - r_{\la \alpha}
+ A_{\la\alpha})\,r_{\beta\ra} - \epsilon^{\gamma\delta}{}_{\la
\alpha}\,N_{\beta\ra\gamma}\,(\Udot_\delta +2r_\delta),\label{dsig}\\
\parb_0\,A_\alpha &= F_\alpha{}^\beta\, A_\beta +
\textfrac{1}{2}(\parb_\beta +
\Udot_\beta)(3q\delta_\alpha{}^\beta - F_\alpha{}^\beta),\\
\parb_0\, N^{\alpha\beta} &= (3q\delta_\gamma{}^{(\alpha} - 2F_\gamma{}^{(\alpha}) N^{\beta )\gamma} +
\epsilon^{\gamma\delta (\alpha} (\parb_\gamma + \Udot_\gamma)
F_\delta{}^{\beta )},
\end{align}
\end{subequations}
where
\begin{equation}\label{Fdef}
F_{\alpha}{}^{\beta} =  q\,\delta_{\alpha}{}^{\beta} -
\Sigma_{\alpha}{}^{\beta} - \epsilon_{\alpha}{}^{\beta}{}
_{\gamma}\,(W^{\gamma}+R^{\gamma}).
\end{equation}

\vspace*{2mm} \noindent {\em Constraint equations}:
\begin{subequations}\label{dconstreq}
\begin{align}
0 &= {\bf C}_\alpha{}^\beta {\cal M}_\beta - 2{\cal
M}^{-1}W_\alpha,\label{dMconst}\\
0 &= (\parb_\alpha - \Udot_\alpha - 2A_\alpha)\,W^\alpha.\\
0 &= {\bf C}_\alpha{}^\beta\, E_\beta{}^i,\label{deconst}\\
0 &= 1 - \Sigma^2 - \Omega_k - \Omega + \textfrac{1}{3}W^2 -
\textfrac{2}{3}W_\alpha R^\alpha -
\textfrac{1}{3}(2\parb_\alpha - 4A_\alpha + r_\alpha)\,r^\alpha,\label{dGauss}\\
0 &= (3\delta_\alpha{}^\gamma\,A_\beta +
\epsilon_\alpha{}^{\delta\gamma}
\,N_{\delta\beta})\,\Sigma^\beta{}_\gamma - 3Q_\alpha
-(\parb_\beta + 2r_\beta)\,\Sigma_\alpha{}^\beta - [{\bf
C}_\alpha{}^\beta +
2\epsilon_\alpha{}^{\gamma\beta}(\Udot_\gamma + r_\gamma)]\, W_\beta -2r_\alpha,\label{dCodazzi}\\
0 &= A_\beta\, N^\beta{}_\alpha -
\textfrac{1}{2}\parb_\beta\,(\epsilon_\alpha{}^{\beta\gamma} A_\gamma
+ N_\alpha{}^\beta) - (F_\alpha{}^\beta - 2q\delta_\alpha{}^\beta
+
2\Sigma_\alpha{}^\beta)\,W_\beta, \label{dC4}
\end{align}
\end{subequations}
where $\Sigma^2=\frac{1}{6}\Sigma_{\alpha\beta}
\Sigma^{\alpha\beta}$, $W^2=W_\alpha W^\alpha$, and where

\begin{subequations}\label{threecurv}
\begin{align}
\vec{C}_{\alpha}{}^{\beta} &= \epsilon_{\alpha}{}^{\gamma\beta}\,
(\parb_\gamma - A_\gamma) - N_{\alpha}{}^{\beta}, \\
q &= 2\Sigma^{2} + \textfrac{1}{2}(\Om+3P) - \textfrac{2}{3}\,W^{2} -
\textfrac{1}{3} [\parb_\alpha + \Udot_\alpha -2(A_\alpha -
r_\alpha)]\,(\Udot^\alpha + r^\alpha),\label{q}\\
{}^3\!{\cal S}_{\alpha\beta} &= B_{\la \alpha\beta \ra} +
2\epsilon^{\gamma\delta}{}_{\la
\alpha}\,N_{\beta\ra\delta}\,A_\gamma + \parb_\gamma
(\delta^\gamma{}_{\la \alpha}\, A_{\beta\ra} +
\epsilon^\gamma{}_{\la\alpha}{}^\delta\,N_{\beta\ra\delta}),\\
\Omega_k &= -\textfrac{1}{6}\,{}^3\!{\cal R};\qquad
{}^3\!{\cal R} = -\textfrac{1}{2}B^\alpha{}_\alpha - 6A^2 +
4\parb_\alpha\,A^\alpha,\\
B_{\alpha\beta} &= 2 N_{\alpha\gamma}\,N^\gamma{}_\beta -
N^\gamma{}_\gamma\,N_{\alpha\beta},
\end{align}
\end{subequations}
where $A^2=A_\alpha A^\alpha$.\footnote{In~\cite{rohugg05}
there are sign errors in front of the terms
$\epsilon_\alpha{}^{\beta\gamma}R_\beta W_\gamma$,
$\epsilon^{\gamma\delta}{}_{\la \alpha}N_{\beta \ra \gamma}
\Udot_\delta$, and $\epsilon_\alpha{}^{\beta\gamma}R_\beta
W_\gamma$, in the equations that corresponds to~\eqref{dC4},
\eqref{dsig}, and~\eqref{wevol}, respectively.} The expression
for $q$ in~\eqref{q} was obtained from the Raychadhuri
equation, which gives $q$ its dynamical content; ${}^3\!{\cal
S}_{\alpha\beta}, {}^3\!{\cal R}$ can be interpreted as the
trace-free and scalar parts, respectively, of the
Hubble-normalized three-curvature, if the reference congruence
is hypersurface forming ($W_\alpha=0$). The notation
$\la...\ra$ stands for the trace-free part of a symmetric
spatial tensor, i.e. $A_{\la \alpha\beta \ra}=A_{\alpha\beta} -
\frac{1}{3}\delta_{\alpha\beta}\,A^\gamma{}_\gamma$.

The equations~\eqref{dsig}, \eqref{dGauss}, \eqref{dCodazzi},
\eqref{q}, were all obtained from Einstein's field equations,
and are thus dynamical in nature, furthermore, note that it is
the total stress-energy content $\{\Om, \,P, \,Q^{\alpha},
\,\Pi_{\alpha\beta}\}$ that enters into these equations; all
remaining equations were obtained from the commutator equations
and the Jacobi identities, and are thus kinematical. If we want
to stress that a quantity refers to the total stress-energy
content below we will provide it with the subscript
$\mathrm{tot}$, e.g., $\Omega_\mathrm{tot}$.

These equations need to be supplemented with matter equations
that depend on the chosen matter content, however, local
conservation of the total energy-momentum yields
$\nabla_{b}T^{ab} = 0$ for the total $T^{ab}$, which for the
1+3 splitted matter variables yields:\footnote{Note that this
is also a reasonable demand in the context of other metric
theories than general relativity, i.e.,~\eqref{dmattereq} has a
broader area of application than the present general
relativistic one.} \enl

\noindent {\em Total matter equations\/}:
\begin{subequations}\label{dmattereq}
\begin{align}
\label{dlomdot}
\parb_0\,\Om &=  (2q-1)\,\Om - 3P + 2A_{\alpha}\,Q^{\alpha} - \Sigma_{\alpha\beta}\Pi^{\alpha\beta}
-[\parb_\alpha + 2(\Udot_\alpha + r_\alpha)]\,Q^\alpha,\\
\label{dlqmalpha}
\parb_0\, Q_{\alpha} &=  (F_{\alpha}{}^{\beta} - (2-q)\,\delta_{\alpha}{}^{\beta})\,Q_{\beta}
+ (3\delta_\alpha{}^\gamma\,A_\beta +
\epsilon_\alpha{}^{\delta\gamma}
\,N_{\delta\beta})\,\Pi^\beta{}_\gamma \nonumber\\
& \quad+ 2\epsilon_{\alpha}{}^{\beta\gamma}\,W_{\gamma}\,Q_{\beta} -
(\parb_\beta + \Udot_\beta + 2r_\beta)\,(P\delta_\alpha{}^\beta +
\Pi_\alpha{}^\beta) - \Udot_\alpha\,\Omega - r_\alpha (\Omega -
3P).
\end{align}
\end{subequations}

It is sometimes useful to apply the commutator
equations~(\ref{dcommutator}) to $\log (H)$ and consider the
following resulting auxiliary equations for $r_\alpha$ (also
possibly extending the above state space to include
$r_\alpha$):
\begin{subequations}\label{req}
\bea \label{dlrdot}
\parb_{0}r_{\alpha}
& = & F_{\alpha}{}^{\beta}\,r_{\beta}
+ (\parb_{\alpha} + \Udot_\alpha)(q+1), \\
\label{dlrcon} 0 & = & \vec{C}_{\alpha}{}^{\beta}\,r_{\beta} -
2(q+1)\,W_{\alpha}. \eea
\end{subequations}

In this paper we assume that the matter consists of several
perfect fluids with general barotropic equations of state. The
stress-energy component of the $i$:th fluid satisfies, $\bna_a
T^{ab}_{(i)}=I_{(i)}^b$, where $I_{(i)}^b$ represents the
non-gravitational interaction term of the $i$:th fluid with the
other fluids; since $\bna_a T^{ab}_\mathrm{tot}=0$ it follows
that $\sum_i I_{(i)}^a=0$. We are here going to assume that the
Hubble-normalized interaction terms asymptotically tend to zero
toward the singularity, and that the fluids, in this sense, are
asymptotically non-interacting (this can still be the case even
if the interaction energies tend to infinity). Using the
Hubble-normalized version of the relation $\bna_a
T^{ab}_{(i)}=0$, since we assume that the Hubble-normalized
interaction terms are asymptotically zero, leads to the
following equations for $\Omega$ and $v_\alpha$ (to obtain less
cumbersome expressions we drop the index $(i)$):
\begin{subequations}\label{perf}
\begin{align}
\parb_0\,\Omega &= (2q - 1 - 3w)\,\Omega + [(3w-1)\,v_\alpha -
\Sigma_{\alpha\beta}\,v^\beta +
2(A_\alpha - \Udot_\alpha - r_\alpha)  - \parb_\alpha]\,Q^\alpha,\label{omperf}\\
\parb_0 v_{\alpha} &= \bar{G}_-^{-1}\,\left[ (1-v^2)(3c_s^2 - 1 -
c_s^2\,A^\beta\,v_\beta) + (1-c_s^2)(A^\beta +
\Sigma_\gamma{}^\beta\,v^\gamma)\,v_\beta \right] v_\alpha
\nonumber \\
& \quad - [\Sigma_\alpha{}^\beta +
\epsilon_\alpha{}^{\beta\gamma}\,(R_\gamma +
N_\gamma{}^\delta\,v_\delta)]\,v_\beta - A_\alpha\,v^2 +
\epsilon_\alpha{}^{\beta\gamma} W_\gamma\,v_\beta \nonumber \\
& \quad - (\delta_\alpha{}^\beta - v_\alpha\,v^\beta) \Udot_\beta -
(1+w)^{-1}(1-v^2)[(1-w)\delta_\alpha{}^\beta -
4w\,c_s^2\,\bar{G}_-^{-1}\,v_\alpha v^\beta] r_\beta \nonumber \\
& \quad - \left(\frac{v}{Q}\right)\left[(\delta_\alpha{}^\beta +
2c_s^2\,\bar{G}_-^{-1}\,v_\alpha v^\beta)\parb_\gamma
(P\delta_\beta{}^\gamma + \Pi_\beta{}^\gamma) -
(1+c_s^2)\bar{G}_-^{-1}\,v_\alpha\,\parb_\beta Q^\beta \right],
\end{align}
\end{subequations}
where $\bar{G}_- = 1- c_s^2\,v^2$, $c_s^2 =
d\tilde{p}/d\tilde{\rho}$. In the above expressions all `matter
objects' refer to the $i$:th fluid component, except in $q$
in~\eqref{omperf}, since $q$ obtains its dynamical content from
the total source. A cosmological constant $\Lambda$ can
formally be regarded as a perfect fluid contribution with
$w=-1$, which leads to the following Hubble-normalized
stress-energy contribution: $\Omega_\Lambda =
\Lambda/(3H^2)=-P_\Lambda$, while
$Q^\alpha_\Lambda=0=\Pi^{\alpha\beta}_\Lambda$. Due to its
definition and equation~\eqref{Heq}, $\Omega_\Lambda$ satisfies
$\parb_0 \Omega_\Lambda = 2(1+q)\Omega_\Lambda$,\ $\parb_\alpha
\Omega_\Lambda = 2r_\alpha \Omega_\Lambda$.

\section{Invariant boundary subsets}\label{invbound}

The physical interior of the state space is characterized by
$\mathrm{det}(E_\alpha{}^i)\neq 0$, and, in the case of the
interior of the perfect fluid state space,
$\Omega_\mathrm{tot}\neq 0$. However, the asymptotes of most
interior solutions reside on the boundaries of the interior
state space, and hence it becomes necessary to study the
dynamics on these boundaries as well. Some of these boundaries
play a particularly important role. Notably we have the
\textit{vacuum subset\/}  $\Omega_\mathrm{tot} = 0$ (and hence
$\Omega_{(i)}=0 \,\,\forall\,\, i$), and what we call the
partially silent and the silent boundary
subsets~\cite{limetal06}. The existence of these latter subsets
is intimately connected with the homogeneity of~\eqref{deeq},
which leads to the existence of boundary subsets of the
interior subset ($\mathrm{det}(E_\alpha{}^i)\neq 0$) such that
the rank of the matrix $E_\alpha{}^i$ is two, one, or zero.

Let us begin with the rank zero case. Our later discussion
suggests that only a part of the subset $E_\alpha{}^i=0$ is of
generic importance, namely the invariant boundary subset
\be (E_{\alpha}{}^{i}, {\cal M}_\alpha, W_\alpha, \Udot_\alpha,
r_\alpha) = 0, \ee
see~\eqref{devoleq} --~\eqref{dmattereq}, which we denote as
the {\em silent boundary\/}, where ${\cal M}_\alpha =
0,\,E_{\alpha}{}^{i}=0$ yields $\parb_\alpha=0$. On this
subset, described by a state vector ${\bf S}$ (see
Eq.~\eqref{statevec} for the case of several perfect fluids),
there exists a coupled set of ordinary differential equations
and algebraic constraints that are identical to those of
spatially homogeneous models. This can be seen as follows. In
the spatially homogeneous case a spatially homogeneous
foliation with orthogonal timelines (${\cal
M}_\alpha=W_\alpha=0$) leads to $(M, H,{\bf S})= (M(x^0),
H(x^0), {\bf S}(x^0))$, and hence $\Udot_\alpha=r_\alpha=0$ and
$\parb_{\alpha} {\bf S}=E_{\alpha}{}^i\ptl_i\,{\bf S}=0$, and
as a consequence the equations for $E_\alpha{}^i$
($\mathrm{det}(E_\alpha{}^i)\neq 0$) decouple from the rest of
the variables in ${\bf S}$, and thus one often only considers
the equations for the `essential' variables of the state vector
${\bf S}$, cf.~\cite{heiugg09c}. Although the equations for
${\bf S}$ coincide for the spatially homogeneous case and the
silent boundary, there is a fundamental difference; in the
spatially homogeneous case the constants of integration are
really constants, but on the silent boundary the integration
coefficients are spatial functions, since the state space in
this case corresponds to an infinite set of identical
copies---one for each spatial point.

A similar phenomenon happens when the rank of the matrix
$E_\alpha{}^i$ is one or two, which leads to boundary subsets
on which the dynamics is identical to that of models with
spatial symmetry orbits of dimensions two or one, respectively.
We refer to these subsets as \textit{partially silent
boundaries\/}; in these cases there are two or one spatial
coordinates, respectively, that act as an index set, in analogy
with what happens for the state vector in the silent boundary
case.

Yet another, overlapping, boundary is of interest---the
\textit{Minkowski subset\/}. In the present formulation this
subset corresponds to the Minkowski solution/spacetime in
foliations for which $H>0$. Hence it is characterized by that
the Hubble-normalized curvature is zero, i.e., both
$\Omega_\mathrm{tot}$ and the Hubble-normalized Weyl tensor are
zero.

There are many subsets on the Minkowski boundary, but one seems
to be of particular importance, the \textit{Taub subset\/} (so
denoted because it is related to the Taub representation of the
Minkowski spacetime), which we define as a subset that, in
addition to $\Omega_\mathrm{tot}=0$, satisfies ${}^3\!{\cal
R}=0$, i.e., $\Omega_k=0$, and $({\cal M}_\alpha, W_\alpha,
\Udot_\alpha, r_\alpha) = 0$, and hence $\Sigma^2=1$ and $q=2$.
Furthermore, these conditions implies
$\det(\Sigma_{\alpha\beta})=2$, and that it is possible to
introduce a Fermi propagated frame in which
$\Sigma_{\alpha\beta}={\rm diag}(2,-1,-1)$, or cycle.

\end{appendix}

\bibliographystyle{plain}

\end{document}